\documentclass[a4paper,reqno,10pt]{amsart}

\usepackage[
    top    = 2.0cm,
    bottom = 2.0cm,
    left   = 2.0cm,
    right  = 2.0cm]{geometry}
\usepackage[english]{babel}
\usepackage{amsmath}
\usepackage{graphicx}
\usepackage{amssymb}
\usepackage{bm} 

\usepackage{amstext}
\usepackage{array}
\usepackage{setspace,cite,enumitem}
\usepackage[capitalise,noabbrev]{cleveref}
\usepackage[table]{xcolor}

\usepackage{tikz}
\usetikzlibrary{shapes}
\usetikzlibrary{arrows,positioning}

\tikzstyle{main node}=[circle,fill=black, inner sep = 2pt, minimum size=5pt]
\tikzstyle{csv}=[circle,fill=black, inner sep = 2pt, minimum size=5pt]
\tikzstyle{sv}=[circle,fill=black, inner sep = 2pt, minimum size=5pt]
\tikzstyle{subsv} =[circle,fill=black, double, inner sep = 2pt, minimum size=5pt]
\tikzstyle{cross}=[cross out,-,draw=black, inner sep = 0pt, minimum size=5pt]
\usepackage{caption}

\numberwithin{equation}{section}

\newcolumntype{C}{>{$}c<{$}} 
\newcolumntype{L}[1]{>{\raggedright}m{#1}}
\newcolumntype{D}[1]{>{\centering\arraybackslash\vspace{2mm}}m{#1}<{\vspace{2mm}}}
\newcolumntype{R}[1]{>{\raggedleft}m{#1}}

\newenvironment{amatrix}[1]{\left( \begin{array}{@{}#1@{}}}{\end{array} \right)}

\let\originalleft\left     
\let\originalright\right
\renewcommand{\left}{\mathopen{}\mathclose\bgroup\originalleft}
\renewcommand{\right}{\aftergroup\egroup\originalright}

\allowdisplaybreaks

\setitemize{leftmargin=*}   
\setenumerate{leftmargin=*} 

\newcommand{\BKL}{\cellcolor[gray]{0.8}} 
\newcommand{\IKL}{\cellcolor[gray]{0.6}} 


\newcommand{\alg}[1]{\mathfrak{#1}} 

\newcommand{\func}[2]{#1 \left( #2 \right)} 
\newcommand{\tfunc}[2]{#1 \bigl( #2 \bigr)} 

\newcommand{\brac}[1]{\left( #1 \right)}
\newcommand{\tbrac}[1]{\bigl( #1 \bigr)}
\newcommand{\sqbrac}[1]{\left[ #1 \right]}
\newcommand{\tsqbrac}[1]{\bigl[ #1 \bigr]}
\newcommand{\set}[1]{\left\{ #1 \right\}}

\newcommand{\st}{\mspace{5mu} : \mspace{5mu}} 

\newcommand{\abs}[1]{\left\lvert #1 \right\rvert}

\newcommand{\ZZ}{\mathbb{Z}}

\newcommand{\QQ}{\mathbb{Q}}
\newcommand{\RR}{\mathbb{R}}

\newcommand{\CC}{\mathbb{C}}


\newcommand{\pd}{\partial}         

\newcommand{\dd}{\mathrm{d}}   
\newcommand{\ii}{\mathfrak{i}} 
\newcommand{\ee}{\mathsf{e}}   

\newcommand{\wun}{\mathbf{1}}  

\newcommand{\inner}[2]{\left\langle #1 , #2 \right\rangle} 

\newcommand{\normord}[1]{\mbox{${} : #1 : {}$}} 

\newcommand{\comm}[2]{\bigl[ #1 , #2 \bigr]}
\newcommand{\acomm}[2]{\bigl\{ #1 , #2 \bigr\}}

\newcommand{\ideal}[1]{\left\langle #1 \right\rangle}

\newcommand{\ChebyPolyT}[1]{T_{#1}}
\newcommand{\ChebyPolyU}[1]{U_{#1}}
\newcommand{\ChebyT}[2]{\tfunc{\ChebyPolyT{#1}}{#2}} 
\newcommand{\ChebyU}[2]{\tfunc{\ChebyPolyU{#1}}{#2}} 

\newcommand{\ra}{\rightarrow}
\newcommand{\Ra}{\Rightarrow}
\newcommand{\lra}{\longrightarrow}

\newcommand{\affine}[1]{\widehat{#1}}
\newcommand{\SLA}[2]{\alg{#1} \bigl( #2 \bigr)}                             
\newcommand{\AKMA}[2]{\affine{\alg{#1}} \left( #2 \right)}                  
\newcommand{\AKMSA}[3]{\affine{\alg{#1}} \left( #2 \middle\vert #3 \right)} 


\newcommand{\Ver}[1]{\mathcal{V}_{#1}}       
\newcommand{\Irr}[1]{\mathcal{L}_{#1}}       
\newcommand{\Kac}[1]{\mathcal{K}_{#1}}       
\newcommand{\Fock}[1]{\mathcal{F}_{#1}}      
\newcommand{\Stag}[2]{\mathcal{S}_{#1}^{#2}} 

\newcommand{\logcoup}[2]{\beta_{#1}^{#2}}    

\newcommand{\spsub}[1]{#1^{\text{ss}}}       


\newcommand{\chmap}{\mathrm{ch}}

\newcommand{\Gr}[1]{\bigl[ #1 \bigr]}            

\newcommand{\ch}[1]{\chmap \Gr{#1}}              
\newcommand{\fch}[2]{\ch{#1} \bigl( #2 \bigr)}   

\newcommand{\jth}[1]{\vartheta_{#1}}             
\newcommand{\fjth}[2]{\jth{#1} \bigl( #2 \bigr)} 

\newcommand{\modS}{\mathsf{S}} 


\newcommand{\Smat}[2]{\modS \bigl[ #1 \ra #2 \bigr]}  

\newcommand{\fuse}{\mathbin{\times}}                                            
\newcommand{\Grfuse}{\mathbin{\boxtimes}}                                       
\newcommand{\fuscoeff}[3]{\mathsf{N}_{#1 \, #2}^{\hphantom{#1 \, #2} #3}}       

\newcommand{\coproductsymb}{\Delta}                                                
\newcommand{\coproduct}[1]{\coproductsymb \bigl( #1 \bigr)}                        
\newcommand{\Ncoproductsymb}[1]{\coproductsymb^{(#1)}}                             
\newcommand{\parNcoproductsymb}[2]{\Ncoproductsymb{#1}_{#2}}                       
\newcommand{\parcoproduct}[2]{\coproductsymb_{#1} \bigl( #2 \bigr)}                
\newcommand{\parNcoproduct}[3]{\Ncoproductsymb{#1}_{#2} \bigl( #3 \bigr)}          

\newcommand{\ses}[3]{0 \ra #1 \ra #2 \ra #3 \ra 0}                                  
\newcommand{\dses}[5]{0 \lra #1 \overset{#2}{\lra} #3 \overset{#4}{\lra} #5 \lra 0} 

\newcommand{\cft}{conformal field theory}
\newcommand{\cfts}{conformal field theories}
\newcommand{\uea}{universal enveloping algebra}
\newcommand{\lcft}{logarithmic conformal field theory}
\newcommand{\lcfts}{logarithmic conformal field theories}

\newcommand{\ope}{operator product expansion}
\newcommand{\opes}{operator product expansions}
\newcommand{\hw}{highest-weight}
\newcommand{\Hw}{Highest-weight}
\newcommand{\hws}{\hw{} vector}
\newcommand{\hwss}{\hw{} vectors}
\newcommand{\sv}{singular vector}
\newcommand{\svs}{singular vectors}
\newcommand{\ssv}{subsingular vector}
\newcommand{\ssvs}{subsingular vectors}
\newcommand{\hwm}{\hw{} module}
\newcommand{\hwms}{\hw{} modules}

\newcommand{\voa}{vertex operator algebra}

\newcommand{\NGK}{Nahm-Gaberdiel-Kausch}
\newcommand{\lhs}{left-hand side}
\newcommand{\rhs}{right-hand side}
\newcommand{\ns}{Neveu-Schwarz}

\newcommand{\eps}{\varepsilon}

\renewcommand{\Im}{\operatorname{Im}}

\newcommand{\qplus}{\overset{\text{?}}{\oplus}}

\DeclareMathOperator{\id}{id}
\DeclareMathOperator{\vspn}{span}

\renewcommand{\ge}{\geqslant}
\renewcommand{\le}{\leqslant}

\makeatletter 
\def\@endtheorem{\endtrivlist}
\makeatother 

\theoremstyle{plain}
\newtheorem{thm}{Theorem}[section]
\newtheorem{prop}[thm]{Proposition}


\begin{document}

\title[Fusion Rules for the $N=1$ Neveu-Schwarz Algebra]{Fusion rules for the logarithmic $\bm{N=1}$ superconformal \\ minimal models I:  the Neveu-Schwarz sector}

\author[M Canagasabey]{Michael Canagasabey}

\address[Michael Canagasabey]{
Mathematical Sciences Institute \\
Australian National University \\
Acton, ACT 2601 \\
Australia
}

\email{nishan.canagasabey@anu.edu.au}

\author[J Rasmussen]{J\o{}rgen Rasmussen}

\address[Jorgen Rasmussen]{
School of Mathematics and Physics \\
University of Queensland \\
St Lucia, Queensland 4072 \\
Australia
}

\email{j.rasmussen\;\!@\;\!uq.edu.au}

\author[D Ridout]{David Ridout}

\address[David Ridout]{
Department of Theoretical Physics \\
Research School of Physics and Engineering;
and
Mathematical Sciences Institute;
Australian National University \\
Acton, ACT 2601 \\
Australia
}

\email{david.ridout@anu.edu.au}

\thanks{\today}

\begin{abstract}
It is now well known that non-local observables in critical statistical lattice models, polymers and percolation for example, may be modelled in the continuum scaling limit by logarithmic conformal field theories.  Fusion rules for such theories, sometimes referred to as logarithmic minimal models, have been intensively studied over the last ten years in order to explore the representation-theoretic structures relevant to non-local observables.  Motivated by recent lattice conjectures, this work studies the fusion rules of the $N=1$ supersymmetric analogues of these logarithmic minimal models in the Neveu-Schwarz sector.  Fusion rules involving Ramond representations will be addressed in a sequel.
\end{abstract}

\maketitle

\onehalfspacing

\section{Introduction} \label{sec:Intro}

Superconformal algebras have a long history in mathematical physics, being intertwined with the development of string theory through their role as infinitesimal symmetries of superstrings (see \cite{FriCon86}, for example), as well as appearing as extended symmetries of the scaling limits of certain lattice models \cite{FriSup85}.  These infinite-dimensional Lie superalgebras each contain a Virasoro subalgebra that quantises the infinitesimal conformal symmetries of the plane.  Field-theoretically, the superconformal algebras are customarily parametrised by the number $N$ of fermionic partners of the energy-momentum tensor.  The simplest examples, after the Virasoro algebra itself (corresponding to $N=0$) are the $N=1$ superconformal algebras:  the \emph{\ns{} algebra} \cite{NevFac71} and the \emph{Ramond algebra} \cite{RamDua71}.

After this debut in superstring theory and statistical mechanics, the structure theory for Verma modules over the \ns{} algebra was quickly settled.  A determinant formula for their invariant bilinear forms, originally conjectured in \cite{KacCon79}, was proven in \cite{MeuHig86} and the possible submodule structures were elucidated in \cite{AstStr97} with the result essentially repeating that of the Virasoro algebra.  In particular, all non-trivial submodules are generated by \svs{}, the dimension of the space of \svs{} in any weight space is at most one, and non-trivial homomorphisms between Verma modules are necessarily injective.  

By way of contrast, the structures of the Ramond Verma modules can be more intricate \cite{IohRepI03}:  submodules not generated by \svs{} can exist, there can be up to two linearly independent \svs{} of each parity in a given weight space, and there can exist non-trivial non-injective homomorphisms.  These features of the Ramond case are now relatively well understood, but serve as a useful toy model for the more pronounced difficulties that one encounters when investigating the Verma modules of the $N>1$ superconformal algebras.  Despite the fact that the latter algebras have fundamental applications to mirror symmetry \cite{GreDua90}, the AdS/CFT correspondence \cite{BeiRev12} and Mathieu moonshine \cite{EguNot11}, it is fair to say that their representation theories remain poorly understood at best.

Our purpose with this article is to explore some of the representation-theoretic aspects of the \ns{} algebra that pertain to \lcft{}.  The more challenging exploration of the representations of the Ramond algebra will be addressed in a sequel \cite{CanFusII15}.  Here, the qualifier ``logarithmic'' means that the underlying vertex operator (super)algebra admits modules upon which the Virasoro zero mode $L_0$ acts non-semisimply.  A collection of reviews on this topic may be found in \cite{RidLog13}.  The results presented here are motivated by a recent lattice-theoretic study, reported in \cite{PeaLog14}, although \lcfts{} with supersymmetry have been discussed in the past, see \cite{KhoLog98,MavNev03,RasLog04,NagLog05,AdaMil09} for example.  This study proposes a conjecture, based on numerical evidence, that the scaling limits of a certain collection of integrable lattice models are described by logarithmic analogues of the $N=1$ superconformal minimal models.  Here, these logarithmic superconformal models are explored directly in the continuum, as \cfts{}.

In particular, we study the fusion rules of a certain collection of \ns{} modules that we will refer to as (\ns{}) \emph{Kac modules}.  This usage follows the nomenclature established in \cite{RasCla11,BusKaz12,MorKac15} for Kac modules over the Virasoro algebra.  The physical relevance of these modules is that they are believed \cite{PeaLog06,PeaLog14} to identify the scaling limits corresponding to an accessible class of integrable boundary conditions for the underlying lattice models.  Because of this, the continuum fusion rules for Virasoro Kac modules have been well studied both via lattice approximations \cite{PR07,ReaAss07,RasFus07,RasFus07b} and direct calculations \cite{GabInd96,EbeVir06,RidPer07,RidLog07,RidPer08,GabFus09}.  In contrast, the fusion rules of the $N=1$ Kac modules only seem to have been studied from the lattice point of view \cite{PeaLog14}, although those of the simple (irreducible) modules appearing in the $N=1$ minimal models have received continuum treatments, see \cite{EicMin85,SotSta86,GabFus97,IohFus01} for instance.

We begin, in \cref{sec:Back}, by recalling the $N=1$ \ns{} algebra and reviewing the necessary structure theory of its \hwms{} and Fock spaces.  The latter are the superconformal analogues of the well known Feigin-Fuchs modules that appear in the Coulomb gas free field realisation of the Virasoro minimal models.  Their structure theory is detailed in \cite{IohRepII03}.  This theory is important for the present investigation because we define the $N=1$ \ns{} Kac modules as certain submodules of Fock spaces, following \cite{RasCla11,BusKaz12,MorKac15} in the Virasoro case.

When a \cft{} is rational, meaning that the relevant modules of the underlying \voa{} are semisimple and that only finitely many simple modules appear, the fusion rules may be efficiently computed using the Verlinde formula \cite{VerFus88}.  While rigorous proofs seem to require rationality, see \cite{HuaVer05}, a lesson learned from more physical proofs \cite{MooPol88} is that this formula follows from the deeper consistency requirements of \cft{} and so should be valid, in some form, more generally.  In \cref{sec:Verlinde}, we utilise a continuous version of the Verlinde formula to compute the characters of the fusion products of the Kac modules.  This approach was pioneered for logarithmic theories in \cite{CreRel11} and is now referred to as the standard module formalism.  It has since been shown to yield correct (or, at least, sensible) results rather generally, see \cite{CreMod12,BabTak12,CreMod13,RidMod14,RidBos14,MorKac15}.  We refer to \cite{CreLog13,RidVer14} in which general features of the standard module formalism are discussed.

Armed with this character information, we turn to the detailed structure of the fusion product of two Kac modules.  The additional structural data may be obtained, at least in certain examples, by explicitly constructing (a truncation of) the fusion product using the \NGK{} fusion algorithm \cite{NahQua94,GabInd96}.  In \cref{sec:TheExample}, we detail an explicit example in order to illustrate how one typically employs this algorithm, noting that \emph{a priori} knowledge of the fusion product's character can lead to significant simplifications.  We also use this example to emphasise that completely identifying the result, up to isomorphism, may require computing additional indecomposability parameters \cite{GabInd96,GurCTh99}, in particular the $N=1$ analogues of the logarithmic couplings of \cite{RidPer07}.  Moreover, we demonstrate that structural theorems for more general classes of \ns{} modules may also be used to significantly simplify the identification of the fusion product.

\cref{sec:Results} then summarises the fusion rules that we have obtained for \ns{} Kac modules by combining the \NGK{} fusion algorithm with the Verlinde formula and \ns{} structural theorems.  An infinite family of fusion rules are subsequently conjectured and encoded in polynomial ring structures.  We conclude with a brief discussion that looks toward generalising these results to the Ramond sector and outlines how the formalism being developed will contribute to our long term research goals.  This is followed by three appendices.  \cref{app:Fusion} provides a detailed derivation of the coproduct formulae that underlie the \NGK{} fusion algorithm and a description of the usage of the algorithm itself.  While this material roughly follows \cite{GabFus94b,GabInd96}, it makes the explicit computations in \cref{sec:TheExample} self-contained as well as prepares the reader for the rather more involved generalisation, detailed in \cite{CanFusII15}, needed to discuss the Ramond sector.  \cref{app:StagMod} then introduces the \ns{} \emph{staggered modules}, following \cite{RohRed96,RidSta09,CreLog13} and \cite{PeaLog14}, that appear in certain fusion products.  We prove a few basic structural results before reviewing the definition of logarithmic couplings and explaining how these parameters may be computed.  \cref{app:Results} contains a selection of fusion products computed by combining the information provided by the Verlinde formula, the Nahm-Gaberdiel-Kausch fusion algorithm and the theory of staggered modules.

\section{Background and conventions} \label{sec:Back}

The $N=1$ superconformal algebras may be defined as the Lie superalgebras spanned by the even (bosonic) modes $L_n$ and $C$, and the odd (fermionic) modes $G_k$, subject to the commutation relations
\begin{equation} \label{eq:CommN=1}
\begin{aligned}
\comm{L_m}{L_n} &= \brac{m-n} L_{m+n} + \frac{m^3-m}{12} \delta_{m+n=0} \: C, & \comm{L_m}{G_k} &= \brac{\frac{1}{2} m - k} G_{m+k}, \\
\acomm{G_j}{G_k} &= 2 L_{j+k} + \frac{4j^2-1}{12} \delta_{j+k=0} \: C, & \comm{L_m}{C} &= \comm{G_j}{C} = 0.
\end{aligned}
\end{equation}
In \eqref{eq:CommN=1}, we will take $m,n \in \ZZ$ and $j,k \in \ZZ + \frac{1}{2}$, stipulating that we are studying the \emph{\ns{} algebra}.  Taking $j,k \in \ZZ$ results in the \emph{Ramond algebra} instead.  The central element $C$ will be taken to act in all representations as a fixed multiple $c$ of the identity operator, called the central charge.  Formally, we thus consider the quotient of the universal enveloping algebra of the \ns{} algebra by the ideal generated by 
$C - c \, \id$; we will also refer to this quotient as the \ns{} algebra.

In field-theoretic terms, the \ns{} superalgebra extends the Virasoro algebra by the modes of a fermionic primary field of conformal weight $\frac{3}{2}$.  More precisely, the fields generated by the bosonic and fermionic modes are, respectively, the energy-momentum tensor $\func{T}{z}$ and its superpartner $\func{G}{z}$:
\begin{equation}
\func{T}{z} = \sum_{n \in \ZZ} L_n z^{-n-2}, \qquad 
\func{G}{z} = \sum_{j \in \ZZ + 1/2} G_j z^{-j-3/2}.
\end{equation}
The \opes{} equivalent to \eqref{eq:CommN=1} then take the form
\begin{equation} \label{OPE:TTTGGG}
\begin{gathered}
\func{T}{z} \func{T}{w} \sim \frac{c/2}{\brac{z-w}^4} + \frac{2 \: \func{T}{w}}{\brac{z-w}^2} + \frac{\pd \func{T}{w}}{z-w}, \\
\func{T}{z} \func{G}{w} \sim \frac{3 \: \func{G}{w} / 2}{\brac{z-w}^2} + \frac{\pd \func{G}{w}}{z-w}, \qquad 
\func{G}{z} \func{G}{w} \sim \frac{2c/3}{\brac{z-w}^3} + \frac{2 \: \func{T}{w}}{z-w},
\end{gathered}
\end{equation}
supplemented by the locality condition $\func{T}{z} \func{G}{w} = \func{G}{w} \func{T}{z}$.

\subsection{\Hw{} modules} \label{sec:hwms}

\ns{} \hw{} theory works as one would expect.  The triangular decomposition splits the superalgebra into the span of the positive modes $L_n$ and $G_j$, with $n,j>0$, the negative modes $L_n$ and $G_j$, with $n,j<0$, and the zero modes $L_0$ and 
$C = c \, \id$.  A \emph{\hws{}} $v_h$ is therefore characterised by its conformal weight $h$ (we regard the central charge of the 
module as implicitly fixed) and satisfies
\begin{equation}
L_n v_h = G_j v_h = 0 \quad \text{for \(n,j>0\);} \qquad L_0 v_h = h v_h.
\end{equation}
The \emph{Verma module} $\Ver{h}$ is then constructed from $v_h$ as an induced module and it has a unique simple quotient that we shall denote by $\Irr{h}$.

These \ns{} modules are naturally $\ZZ_2$-graded by choosing the parity of the \hws{} to be even or odd.  Structurally, this choice makes no difference, but it is sometimes useful to keep the parity explicit.  When this is the case, we affix a superscript sign $\pm$ to the module, with the sign matching the parity chosen for $v_h$.  We will generalise this convention to all the indecomposable \ns{} modules considered here by matching the superscript sign to the (common) parity of the vectors of minimal conformal dimension.

The standard parametrisation suggested by the \ns{} analogue of the Kac determinant formula is
\begin{equation}
c = \frac{15}{2} - 3 \brac{t+t^{-1}}, \qquad 
h_{r,s} = \frac{r^2-1}{8} t^{-1} - \frac{rs-1}{4} + \frac{s^2-1}{8} t,
\end{equation}
where $t \in \CC \setminus \set{0}$.  The Verma module $\Ver{r,s} \equiv \Ver{h_{r,s}}$ is then reducible when $r$ and $s$ are positive integers satisfying $r=s \bmod{2}$.\footnote{In contrast, Verma modules for the Ramond algebra turn out to be reducible for positive integers $r$ and $s$ satisfying $r \neq s \bmod{2}$.  However, the parametrisation for $h_{r,s}$ is also corrected in the Ramond sector by adding $\frac{1}{16}$.}  If $t$ is rational, then this parametrisation may be written in the form
\begin{equation} \label{eq:ParByt}
t = \frac{p}{p'}, \qquad 
c = \frac{3}{2} \brac{1 - \frac{2 \brac{p'-p}^2}{pp'}}, \qquad 
h_{r,s} = \frac{\brac{p'r-ps}^2 - \brac{p'-p}^2}{8pp'},
\end{equation}
where one customarily takes $p=p' \bmod{2}$ and $\gcd \set{p, \frac{1}{2} \brac{p'-p}} = 1$.  The $N=1$ superconformal minimal models \cite{EicMin85,BerSup85,FriSup85} are built from the \ns{} simple \hwms{} $\Irr{r,s} \equiv \Irr{h_{r,s}}$ with $1 \le r \le p-1$, $1 \le s \le p'-1$ and $r=s \bmod{2}$, as well as their Ramond counterparts.  However, we are not studying these minimal models, so we do not insist, for example, that $p,p' \ge 2$.

In most respects, the \hw{} theory for the \ns{} algebra parallels that of the Virasoro algebra.  In particular, the submodules of a Verma module are generated by \svs{} and the maximal dimension of the space of \svs{} of any given conformal weight is $1$.  The submodule structure of a Verma module $\Ver{h}$ then reduces to knowing its \svs{} and here the possibilities exactly mirror those of the Virasoro Verma modules \cite{AstStr97,IohRepI03}.  We reproduce the structures diagrammatically in \cref{fig:VermaStructures}.  As mentioned above, if $h \neq h_{r,s}$ for any positive integers $r$ and $s$ with $r=s \bmod{2}$, then $\Ver{h}$ is simple.  On the other hand, if $t$ is irrational, then the maximal proper submodule of the $\Ver{h}$ with $h=h_{r,s}$, $r=s \bmod{2}$, is always simple.  It is generated by a \sv{} of \emph{depth} $\frac{1}{2} rs$, meaning that its conformal weight is $h_{r,s} + \frac{1}{2} rs$.

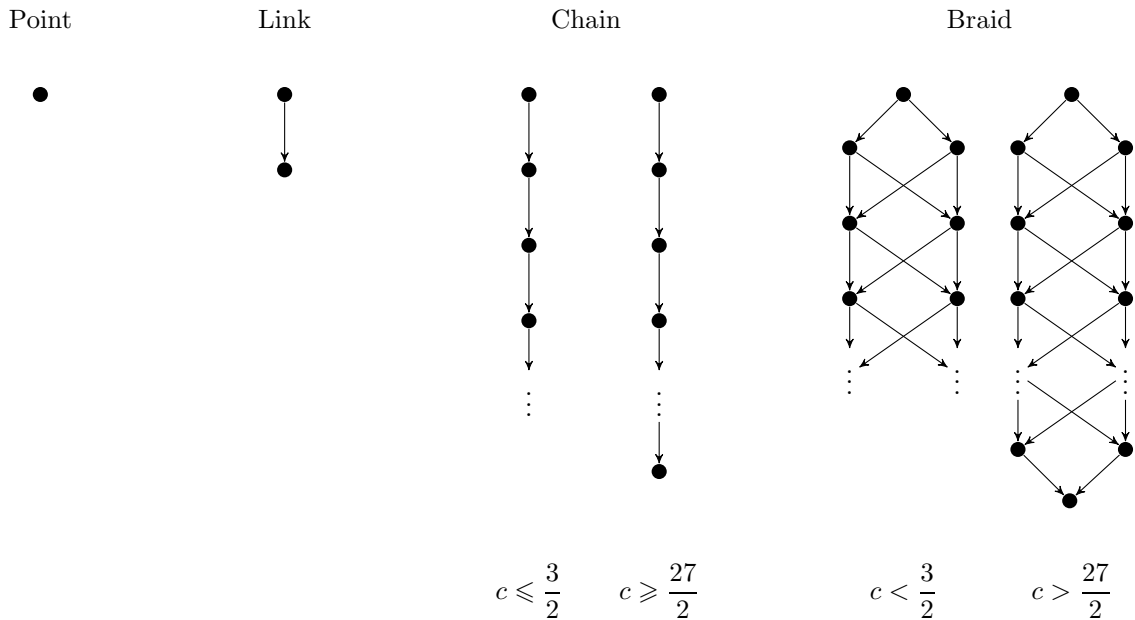
\begin{figure}
\begin{tikzpicture}[->,>=stealth', node distance=1cm]
  \node[main node] (1) [] {};
  \node[] (point) [above of =1] {Point};
  \node[main node] (2) [right = 3cm of 1] {};
  \node[] (link) [above of =2] {Link};
  \node[main node] (2a) [below of = 2] {};
  \path[]
  (2) edge node {} (2a);
  \node[main node] (3) [right = 3cm of 2] {};
  \node[] (chain) [right = 0.75cm of 3,above of =3] {Chain};
  \node[main node] (3a) [below of =3] {};
  \node[main node] (3b) [below of =3a] {};
  \node[main node] (3c) [below of =3b] {};
  \node[inner sep = 2pt] (3d) [below of =3c] {$\vdots$};
  \node[] (3bot) [below =6cm of 3] {$c\le\dfrac{3}{2}$};
  \path[]
  (3) edge node {} (3a)
  (3a) edge node {} (3b)
  (3b) edge node {} (3c)
  (3c) edge node {} (3d);
  \node[main node] (4) [right = 1.5cm of 3] {};
  \node[main node] (4a) [below of =4] {};
  \node[main node] (4b) [below of =4a] {};
  \node[main node] (4c) [below of =4b] {};
  \node[inner sep = 2pt] (4d) [below of =4c] {$\vdots$};
  \node[main node] (4e) [below of =4d] {};
  \node[] (4bot) [below =6cm of 4] {$c\ge\dfrac{27}{2}$};
  \path[]
  (4) edge node {} (4a)
  (4a) edge node {} (4b)
  (4b) edge node {} (4c)
  (4c) edge node {} (4d)
  (4d) edge node {} (4e);	
  \node[main node] (5) [right = 3cm of 4] {};
  \node[] (braid) [right = 1cm of 5,above of =5] {Braid};
  \node[main node] (5a) [below left of =5] {};
  \node[main node] (5b) [below of =5a] {};
  \node[main node] (5c) [below of =5b] {};
  \node[inner sep = 2pt] (5d) [below of =5c] {$\vdots$};
  \node[main node] (5e) [below right of =5] {};
  \node[main node] (5f) [below of =5e] {};
  \node[main node] (5g) [below of =5f] {};
  \node[inner sep = 2pt] (5h) [below of =5g] {$\vdots$};
  \node[] (5bot) [below =6cm of 5] {$c < \dfrac{3}{2}$};
  \path[]
  (5) edge node {} (5a)
  (5a) edge node {} (5b)
  (5a) edge node {} (5f)
  (5b) edge node {} (5c)
  (5b) edge node {} (5g)
  (5c) edge node {} (5d)
  (5c) edge node {} (5h)
  (5) edge node {} (5e)
  (5e) edge node {} (5f)
  (5e) edge node {} (5b)
  (5f) edge node {} (5g)
  (5f) edge node {} (5c)
  (5g) edge node {} (5h)
  (5g) edge node {} (5d);	
  \node[main node] (6) [right = 2cm of 5] {};
  \node[main node] (6a) [below left of =6] {};
  \node[main node] (6b) [below of =6a] {};
  \node[main node] (6c) [below of =6b] {};
  \node[inner sep = 2pt] (6d) [below of =6c] {$\vdots$};
  \node[main node] (6d1) [below of =6d] {};
  \node[main node] (6d2) [below right = 0.75cm of 6d1] {};
  \node[main node] (6e) [below right of =6] {};
  \node[main node] (6f) [below of =6e] {};
  \node[main node] (6g) [below of =6f] {};
  \node[inner sep = 2pt] (6h) [below of =6g] {$\vdots$};
  \node[main node] (6h1) [below of =6h] {};
  \node[] (6bot) [below =6cm of 6] {$c > \dfrac{27}{2}$};
  \path[]
  (6) edge node {} (6a)
  (6a) edge node {} (6b)
  (6a) edge node {} (6f)
  (6b) edge node {} (6c)
  (6b) edge node {} (6g)
  (6c) edge node {} (6d)
  (6c) edge node {} (6h)
  (6d) edge node {} (6d1)
  (6d) edge node {} (6h1)
  (6d1) edge node {} (6d2)
  (6) edge node {} (6e)
  (6e) edge node {} (6f)
  (6e) edge node {} (6b)
  (6f) edge node {} (6g)
  (6f) edge node {} (6c)
  (6g) edge node {} (6h)
  (6g) edge node {} (6d)
  (6h) edge node {} (6h1)
  (6h) edge node {} (6d1)
  (6h1) edge node {} (6d2);	
\end{tikzpicture}
\caption{The \sv{} structure, marked by black circles, of \ns{} Verma modules.  Arrows from one \sv{} to another indicate that the latter may be obtained from the former by acting with a suitable polynomial in the $L_n$ and $G_j$.  Note that $t > 0$ corresponds to $c \le \tfrac{3}{2}$ and $t < 0$ corresponds to $c \ge \tfrac{27}{2}$.} \label{fig:VermaStructures}
\end{figure}

We will exclusively focus on the case in which $t=p/p'$ is rational and positive, with $h = h_{r,s}$ and $r=s \bmod{2}$.  If $r$ is a multiple of $p$, or $s$ is a multiple of $p'$, then the \sv{} structure of $\Ver{h}$ is represented by the infinite chain diagram in \cref{fig:VermaStructures}.  Otherwise, the structure corresponds to the infinite braid diagram; this latter case is the one relevant to the study of minimal models.  In both cases, a \sv{} is always present at depth $\frac{1}{2} rs$, though there may be other \svs{} at other depths.

All of this information may be conveniently summarised in the \ns{} analogue of the extended Kac table, see \cref{fig:KacTables}.  This simply tabulates the values of $h_{r,s}$ as $r$ and $s$ run through the positive integers, subject to $r=s \bmod{2}$.\footnote{The ``gaps'' in the table, when $r \neq s \bmod{2}$, correspond to conformal weights of representations of the Ramond algebra.}  To make contact with the above structural results, we partition the extended Kac table into three subsets as follows:
\begin{itemize}
\item If $p$ divides $r$ and $p'$ divides $s$, then we say that $(r,s)$ is of \emph{corner} type in the extended Kac table.
\item If $p$ divides $r$ or $p'$ divides $s$, but not both, then $(r,s)$ is said to be of \emph{boundary} type.
\item If $p$ does not divide $r$ and $p'$ does not divide $s$, then $(r,s)$ is said to be of \emph{interior} type.
\end{itemize}
Summarising, corner and boundary type Verma modules have singular vectors arranged in chains whereas interior type Verma modules have a braided pattern of singular vectors.  We remark that when $p=1$ or $p'=1$, there are no interior entries in the extended Kac table, and if $p=p'=1$, then there will be no boundary entries either.  We illustrate this with three pertinent examples of extended \ns{} Kac tables in \cref{fig:KacTables}.

{
\renewcommand{\arraystretch}{1.1}
\begin{figure}[p]
\begin{center}
\begin{tikzpicture}
\node (Kac1) at (0,0) {
\setlength{\extrarowheight}{4pt}
\begin{tabular}{|C|C|C|C|C|C|C|C|C|C|C|C|C}
\hline
0 &  & \frac{1}{2} &  & 2 &  & \frac{9}{2} &  & 8 &  & \frac{25}{2} &  & \cdots \\[1mm]
\hline
 & 0 &  & \frac{1}{2} &  & 2 &  & \frac{9}{2} &  & 8 &  & \frac{25}{2} & \cdots \\[1mm]
\hline
\frac{1}{2} &  & 0 &  & \frac{1}{2} &  & 2 &  & \frac{9}{2} &  & 8 &  & \cdots \\[1mm]
\hline
 & \frac{1}{2} &  & 0 &  & \frac{1}{2} &  & 2 &  & \frac{9}{2} &  & 8 & \cdots \\[1mm]
\hline
2 &  & \frac{1}{2} &  & 0 &  & \frac{1}{2} &  & 2 &  & \frac{9}{2} &  & \cdots \\[1mm]
\hline
 & 2 &  & \frac{1}{2} &  & 0 &  & \frac{1}{2} &  & 2 &  & \frac{9}{2} & \cdots \\[1mm]
\hline
\vdots & \vdots & \vdots & \vdots & \vdots & \vdots & \vdots & \vdots & \vdots & \vdots & \vdots & \vdots & \ddots
\end{tabular}
};
\node [below=3mm of Kac1] {
$t = 1, \qquad (p,p')=(1,1),\qquad c = \dfrac{3}{2}.$
};
\node (Kac2) [below=20mm of Kac1] {
\setlength{\extrarowheight}{4pt}
\begin{tabular}{|CC|C|CC|C|CC|C|CC|C|C}
\hline
\BKL 0 & \BKL & -\frac{1}{6} & \BKL & \BKL 0 &  & \BKL \frac{1}{2} & \BKL & \frac{4}{3} & \BKL & \BKL \frac{5}{2} &  & \BKL \cdots \\[1mm]
\hline
\BKL & \BKL \frac{1}{2} &  & \BKL 0 & \BKL & -\frac{1}{6} & \BKL & \BKL 0 &  & \BKL \frac{1}{2} & \BKL & \frac{4}{3} & \BKL \cdots \\[1mm]
\hline
\BKL \frac{5}{2} & \BKL & \frac{4}{3} & \BKL & \BKL \frac{1}{2} &  & \BKL 0 & \BKL & -\frac{1}{6} & \BKL & \BKL 0 &  & \BKL \cdots \\[1mm]
\hline
\BKL & \BKL 4 &  & \BKL \frac{5}{2} & \BKL & \frac{4}{3} & \BKL & \BKL \frac{1}{2} &  & \BKL 0 & \BKL & -\frac{1}{6} & \BKL \cdots \\[1mm]
\hline
\BKL 8 & \BKL & \frac{35}{6} & \BKL & \BKL 4 &  & \BKL \frac{5}{2} & \BKL & \frac{4}{3} & \BKL & \BKL \frac{1}{2} &  & \BKL \cdots \\[1mm]
\hline
\BKL & \BKL \frac{21}{2} &  & \BKL 8 & \BKL & \frac{35}{6} & \BKL & \BKL 4 &  & \BKL \frac{5}{2} & \BKL & \frac{4}{3} & \BKL \cdots \\[1mm]
\hline
\BKL \vdots & \BKL \vdots & \vdots & \BKL \vdots & \BKL \vdots & \vdots & \BKL \vdots & \BKL \vdots & \vdots & \BKL \vdots & \BKL \vdots & \vdots & \BKL \ddots
\end{tabular}
};
\node [below=3mm of Kac2] {
$t = \dfrac{1}{3}, \qquad (p,p')=(1,3),\qquad c = -\dfrac{5}{2}.$
};
\node (Kac3) [below=20mm of Kac2] {
\setlength{\extrarowheight}{4pt}
\begin{tabular}{|CCC|C|CCC|C|CCC|C|C}
\hline
\IKL 0 & \IKL & \IKL 0 & \BKL & \IKL \frac{1}{2} & \IKL & \IKL \frac{3}{2} & \BKL & \IKL 3 & \IKL & \IKL 5 & \BKL & \IKL \cdots \\[1mm]
\hline
\BKL & \BKL \frac{3}{16} & \BKL & -\frac{1}{16} & \BKL & \BKL \frac{3}{16} & \BKL & \frac{15}{16} & \BKL & \BKL \frac{35}{16} & \BKL & \frac{63}{16} & \BKL \cdots \\[1mm]
\hline
\IKL \frac{3}{2} & \IKL & \IKL \frac{1}{2} & \BKL & \IKL 0 & \IKL & \IKL 0 & \BKL & \IKL \frac{1}{2} & \IKL & \IKL \frac{3}{2} & \BKL & \IKL \cdots \\[1mm]
\hline
\BKL & \BKL \frac{35}{16} & \BKL & \frac{15}{16} & \BKL & \BKL \frac{3}{16} & \BKL & -\frac{1}{16} & \BKL & \BKL \frac{3}{16} & \BKL & \frac{15}{16} & \BKL \cdots \\[1mm]
\hline
\IKL 5 & \IKL & \IKL 3 & \BKL & \IKL \frac{3}{2} & \IKL & \IKL \frac{1}{2} & \BKL & \IKL 0 & \IKL & \IKL 0 & \BKL & \IKL \cdots \\[1mm]
\hline
\BKL & \BKL \frac{99}{16} & \BKL & \frac{63}{16} & \BKL & \BKL \frac{35}{16} & \BKL & \frac{15}{16} & \BKL & \BKL \frac{3}{16} & \BKL & -\frac{1}{16} & \BKL \cdots \\[1mm]
\hline
\IKL \vdots & \IKL \vdots & \IKL \vdots & \BKL \vdots & \IKL \vdots & \IKL \vdots & \IKL \vdots & \BKL \vdots & \IKL \vdots & \IKL \vdots & \IKL \vdots & \BKL \vdots & \IKL \ddots
\end{tabular}
};
\node [below=3mm of Kac3] {
$t = \dfrac{1}{2}, \qquad (p,p')=(2,4),\qquad c = 0.$
};
\end{tikzpicture}
\caption{Parts of three of the extended \ns{} Kac tables for $c=\tfrac{3}{2}$, $c=-\tfrac{5}{2}$ and $c=0$.  The rows of the tables are labelled by $r = 1, 2, 3, \ldots$ and the columns by $s = 1, 2, 3, \ldots$\,.  Interior points are shaded dark grey, boundary points are shaded light grey, while corner points are white.} \label{fig:KacTables}
\end{center}
\end{figure}
}

\subsection{\ns{} Fock spaces} \label{sec:Fock}

The $N=1$ superconformal algebras have a free field realisation in terms of a free boson and a free fermion, the latter taken in the free fermion \ns{} or Ramond sector to obtain the corresponding superconformal sectors.  In particular, the \ns{} algebra acts on the tensor product of any bosonic Fock space with the vacuum fermionic Fock space.  We shall refer to such tensor products as \emph{\ns{} Fock spaces}.

At the level of \opes{}, one starts with a free boson field $\func{a}{z} = \sum_{n \in \ZZ} a_n z^{-n-1}$ and a free 
fermion field $\func{b}{z} = \sum_{j \in \ZZ-1/2} b_j z^{-j-1/2}$ satisfying
\begin{equation}
\func{a}{z} \func{a}{w} \sim \frac{1}{\brac{z-w}^2}, \qquad \func{b}{z} \func{b}{w} \sim \frac{1}{z-w}.
\end{equation}
These are then used to construct the energy-momentum tensor and its superpartner:
\begin{equation}
\func{T}{z} = \frac{1}{2} \normord{\func{a}{z} \func{a}{z}} + \frac{Q}{2} \func{\pd a}{z} + \frac{1}{2} \normord{\func{\pd b}{z} \func{b}{z}}, \qquad 
\func{G}{z} = \func{a}{z} \func{b}{z} + Q \func{\pd b}{z}.
\end{equation}
Here, $\normord{\cdots}$ denotes normal ordering.  It is straightforward to check that these fields satisfy the \opes{} \eqref{OPE:TTTGGG} with $c = \frac{3}{2} - 3Q^2$.  To match the central charge of \eqref{eq:ParByt}, we set
\begin{equation}
Q = \sqrt{\frac{p'}{p}} - \sqrt{\frac{p \vphantom{p'}}{p'}} = \frac{p'-p}{\sqrt{pp'}}.
\end{equation}

The \ns{} Fock space $\Fock{\lambda}$, being the tensor product of a free boson Verma module with the free fermion vacuum module, is generated by a \hws{} $v_{\lambda}$ satisfying
\begin{equation}
a_n v_{\lambda} = b_j v_{\lambda} = 0 \quad \text{for \(n,j>0\);} \qquad a_0 v_{\lambda} = \lambda v_{\lambda}.
\end{equation}
This generator $v_{\lambda}$ then has conformal weight
\begin{equation}
h_{\lambda} = \frac{1}{2} \lambda \brac{\lambda - Q} = \frac{4pp' \brac{\lambda - Q/2}^2 - \brac{p'-p}^2}{8pp'}
\end{equation}
which will coincide with a weight $h_{r,s}$ of the extended Kac table, given in \eqref{eq:ParByt}, when
\begin{equation} \label{eq:DefLambda}
\lambda = \lambda_{r,s} \equiv -\alpha' \brac{r-1} + \alpha \brac{s-1}.
\end{equation}
Here, we have introduced the quantities
\begin{equation}
\alpha = \sqrt{\frac{p \vphantom{p'}}{4p'}}, \qquad \alpha' = \sqrt{\frac{p'}{4p}}
\end{equation}
and note the symmetries
\begin{equation} \label{eq:FFSymm}
\lambda_{r+p,s} = \lambda_{r,s} - \frac{1}{2} \sqrt{pp'}, \quad 
\lambda_{r,s+p'} = \lambda_{r,s} + \frac{1}{2} \sqrt{pp'} \qquad \Ra \qquad 
\lambda_{r+p,s+p'} = \lambda_{r,s},
\end{equation}
for later use.

Fock spaces are always simple as modules over the product of the free boson and fermion (universal enveloping) algebras.  However, this need not remain true upon restricting to the \ns{} algebra.  Specifically, the Fock space $\Fock{\lambda}$ will only be simple as a \ns{} module when $\lambda \neq \lambda_{r,s}$ for any $r,s \in \ZZ$ with $r=s \bmod{2}$.  For $t \in \QQ_+$, we describe the submodule structure of $\Fock{\lambda}$ as islands, a chain, or a braid, illustrating the possibilities in \cref{fig:FockStructures}.  These structures mirror those of the Feigin-Fuchs modules of the Virasoro algebra.  More precisely, when $(r,s)$ is a corner/boundary/interior type entry of the extended Kac table, then the submodule structure of $\Fock{r,s} \equiv \Fock{\lambda_{r,s}}$ is of islands/chain/braid type.  There are two possible structures for chain and braid type Fock spaces $\Fock{r,s}$, corresponding to the fact that these \ns{} modules are not isomorphic to their contragredient duals $\Fock{Q-\lambda_{r,s}} = \Fock{-r,-s}$.  We shall describe the detailed structure of the $\Fock{r,s}$ in the rest of this section.

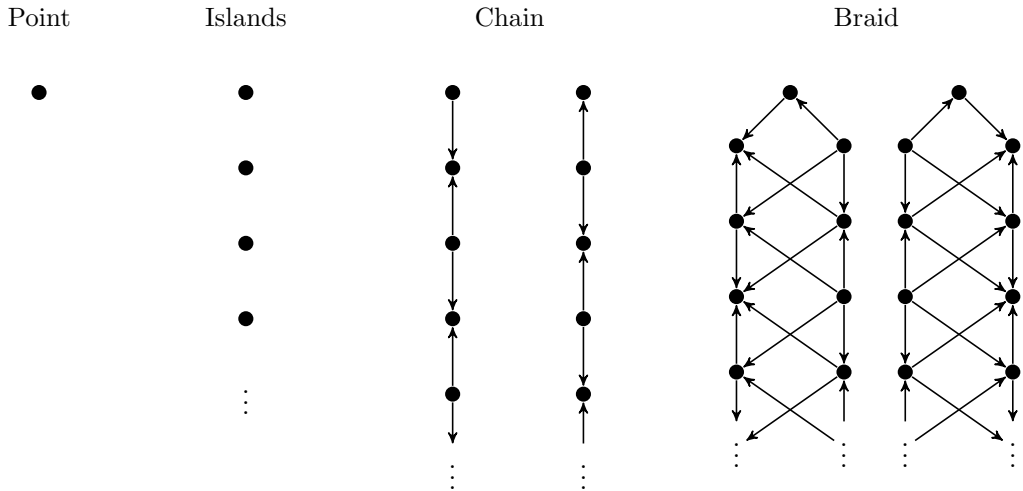
\begin{figure}
\begin{tikzpicture}
  [->,node distance=1cm,>=stealth',semithick,
   asoc/.style={circle,draw=black,fill=black,inner sep = 2pt,minimum size=5pt},
   bsoc/.style={circle,draw=black,fill=gray,inner sep = 2pt,minimum size=5pt},
   csoc/.style={circle,draw=black,fill=white,inner sep = 2pt,minimum size=5pt}
  ]
  \node[sv] (1) [] {};
  \node[] (point) [above of =1] {Point};
  \node[sv] (2) [right = 2.5cm of 1] {};
  \node[] (island) [above of =2] {Islands};
  \node[sv] (2a) [below of =2] {};
  \node[sv] (2b) [below of =2a] {};
  \node[sv] (2c) [below of =2b] {};
  \node[inner sep = 2pt] (2d) [below of =2c] {$\vdots$};
  \node[subsv] (3) [right = 2.5cm of 2] {};
  \node[] (chain) [right = 0.75cm of 3,above of =3] {Chain};
  \node[sv] (3a) [below of =3] {};
  \node[subsv] (3b) [below of =3a] {};
  \node[sv] (3c) [below of =3b] {};
  \node[subsv] (3d) [below of =3c] {};
  \node[inner sep = 2pt] (3e) [below of =3d] {$\vdots$};
  \path[] (3) edge node {} (3a)
          (3b) edge node {} (3a)
          (3b) edge node {} (3c)
          (3d) edge node {} (3c)
          (3d) edge node {} (3e);
  \node[sv] (4) [right = 1.5cm of 3] {};
  \node[subsv] (4a) [below of =4] {};
  \node[sv] (4b) [below of =4a] {};
  \node[subsv] (4c) [below of =4b] {};
  \node[sv] (4d) [below of =4c] {};
  \node[inner sep = 2pt] (4e) [below of =4d] {$\vdots$};
  \path[] (4a) edge node {} (4)
          (4a) edge node {} (4b)
          (4c) edge node {} (4b)
          (4c) edge node {} (4d)
          (4e) edge node {} (4d);
  \node[subsv] (5) [right = 2.5cm of 4] {};
  \node[] (braid) [right = 1cm of 5,above of =5] {Braid};
  \node[sv] (5a) [below left of =5] {};
  \node[subsv] (5b) [below of =5a] {};
  \node[sv] (5c) [below of =5b] {};
  \node[subsv] (5d) [below of =5c] {};
  \node[inner sep = 2pt] (5e) [below of =5d] {$\vdots$};
  \node[csv] (5j) [below right of =5] {};
  \node[subsv] (5k) [below of =5j] {};
  \node[csv] (5l) [below of =5k] {};
  \node[subsv] (5m) [below of =5l] {};
  \node[inner sep = 2pt] (5n) [below of =5m] {$\vdots$};
  \path[] (5) edge node {} (5a)
          (5b) edge node {} (5a)
          (5k) edge node {} (5a)
          (5b) edge node {} (5c)
          (5j) edge node {} (5b)
          (5l) edge node {} (5b)
          (5d) edge node {} (5c)
          (5k) edge node {} (5c)
          (5m) edge node {} (5c)
          (5d) edge node {} (5e)
          (5l) edge node {} (5d)
          (5n) edge node {} (5d)
          (5m) edge node {} (5e)
          (5j) edge node {} (5)
          (5j) edge node {} (5k)
          (5l) edge node {} (5k)
          (5l) edge node {} (5m)
          (5n) edge node {} (5m);	
  \node[subsv] (6) [right = 2cm of 5] {};
  \node[csv] (6a) [below left of =6] {};
  \node[subsv] (6b) [below of =6a] {};
  \node[csv] (6c) [below of =6b] {};
  \node[subsv] (6d) [below of =6c] {};
  \node[inner sep = 2pt] (6e) [below of =6d] {$\vdots$};
  \node[sv] (6j) [below right of =6] {};
  \node[subsv] (6k) [below of =6j] {};
  \node[sv] (6l) [below of =6k] {};
  \node[subsv] (6m) [below of =6l] {};
  \node[inner sep = 2pt] (6n) [below of =6m] {$\vdots$};
  \path[] (6) edge node {} (6j)
          (6k) edge node {} (6j)
          (6b) edge node {} (6j)
          (6k) edge node {} (6l)
          (6a) edge node {} (6k)
          (6c) edge node {} (6k)
          (6m) edge node {} (6l)
          (6b) edge node {} (6l)
          (6d) edge node {} (6l)
          (6m) edge node {} (6n)
          (6c) edge node {} (6m)
          (6e) edge node {} (6m)
          (6d) edge node {} (6n)
          (6a) edge node {} (6)
          (6a) edge node {} (6b)
          (6c) edge node {} (6b)
          (6c) edge node {} (6d)
          (6e) edge node {} (6d);	
\end{tikzpicture}
\caption{The structure of Fock spaces over the \ns{} algebra, for $t \in \QQ_+$ ($c \le \tfrac{3}{2}$).  Each black circle represents a \ssv{} and the arrows represent the action of the algebra as in \cref{fig:VermaStructures}.  The two braid diagrams are actually reflections of one another, but the repetition reminds us that braid type Fock spaces are not self-contragredient.} \label{fig:FockStructures}
\end{figure}

It is important to note that, unlike the Verma modules of the previous section, the submodules of the \ns{} Fock spaces are not necessarily generated by \svs{}.  Instead, submodules may be associated with \emph{\ssvs{}}, these being vectors which become singular in an appropriate quotient of the parent module.  Unlike \svs{}, the Fock space \ssvs{} are not usually unique up to normalisation because subsingularity is preserved by adding any element of the submodule by which one quotients to obtain a \sv{}.  In \cref{fig:FockStructures}, we may regard each black circle as representing a \ssv{}, modulo the aforementioned non-uniqueness, generating submodules of the Fock space.

It is natural to associate \ssvs{} with the simple quotient of the submodule that they generate.  The circles in \cref{fig:FockStructures} thus also represent the (simple) composition factors of the Fock space.  For point and islands type Fock spaces, there are no arrows in the associated diagrams, indicating that these Fock spaces are semisimple.  For chain and braid type Fock spaces, the composition factors may be partitioned into three classes according as to whether the arrows incident on the corresponding circle are all pointing towards it, all pointing away from it, or there are some pointing towards and some pointing away.  The latter class is empty for chain-type modules.  When all arrows point towards the circle, it represents a \sv{} generating a simple submodule.  The direct sum of the composition factors in this class therefore gives the maximal semisimple submodule, also known as the \emph{socle} of the Fock space.  Similarly, the maximal semisimple quotient, also called the \emph{head} of the Fock space, is the direct sum of the composition factors in the class corresponding to all arrows pointing away.

The conformal weights of the \ssvs{} of the Fock spaces $\Fock{r,s}$, with $r,s \in \ZZ_+$ and $r=s \bmod{2}$, coincide with those of the \svs{} in the \ns{} module $\Ver{r,s}$.  It therefore follows that $\Fock{r,s}$ has a \ssv{} of depth $\frac{1}{2} rs$ and that its contragredient dual $\Fock{-r,-s}$ does too.  One can verify that this \ssv{} is always associated to either the socle or the head of the Fock space (its circle in \cref{fig:FockStructures} has either all arrows pointing towards it or all arrows pointing away from it, respectively).  With the parametrisation \eqref{eq:DefLambda} that we chose above, it turns out that the depth $\frac{1}{2} rs$ \ssv{} is always associated with the head for $r,s \in \ZZ_+$ and with the socle for $r,s \in \ZZ_-$.  This structural identification may be extended to all $r,s \in \ZZ$ by using the symmetries \eqref{eq:FFSymm}.

This realisation fixes the structure of a chain type Fock space $\Fock{r,s}$ uniquely.  One only has to determine whether the depth $\frac{1}{2} rs$ \ssv{} belongs to the socle or the head and find the number of \ssvs{} of lesser depth; this is sufficient to distinguish between the two possibilities in \cref{fig:FockStructures}.  When $\Fock{r,s}$ is of braid type, this information is not quite sufficient.  At every other horizontal level in the braid type pictures in \cref{fig:FockStructures}, there is one singular and one (non-singular) \ssv{} (at the other horizontal levels, excepting the highest, no vector is singular).  To identify which is which, given their depths, the following fact is germane:  if the depth of the \sv{} at a given horizontal level is greater than that of the \ssv{} at the same level, then it will also be greater at the other horizontal levels (and vice versa).  One may then check which has greater depth in a given module because one knows the nature of the depth $\frac{1}{2} rs$ \ssv{}.

\subsection{Kac modules} \label{sec:Kac}

The reducible Fock spaces $\Fock{r,s}$, with $r,s \in \ZZ$ and $r=s \bmod{2}$, are not themselves the modules of central interest here.  Rather, it is a certain related class of modules that we shall call (\ns{}) \emph{Kac modules}, following \cite{RasCla11,BusKaz12,MorKac15}, that take centre stage.  These are indexed by positive integers $r$ and $s$.  Over the Virasoro algebra, Kac modules were introduced non-constructively in \cite{PeaLog06,PRannecy,RasFus07,RasFus07b} in order to describe the boundary sectors of the scaling limits of certain integrable lattice models.  Their characters were determined in many examples and the results were compatible with Kac modules being identified with certain quotients of Verma modules.  However, more recent consistency checks have led to a different proposal \cite{RasCla11} for the identity of Virasoro Kac modules, at least for some models, as submodules of Fock spaces rather than as quotients of Verma modules.  This proposal has been generalised in \cite{MorKac15} where evidence verifying this identification in many non-trivial examples is presented.

\ns{} Kac modules were recently considered from a lattice point of view in \cite{PeaLog14}.  Although this analysis only studied the action of $L_0$ on certain examples and so obtained only a bare minimum of structural information, it is reasonable to expect that these \ns{} Kac modules may likewise be identified with submodules of \ns{} Fock spaces.  Extrapolating the results of \cite{MorKac15} leads us to define the \emph{\ns{} Kac module} $\Kac{r,s}$, with $r,s \in \ZZ_+$ and $r=s \bmod{2}$, to be the submodule of the \ns{} Fock space $\Fock{r,s}$ that is generated by the subsingular vectors of depth strictly less than $\frac{1}{2} rs$.  Note that this does not exclude the possibility that $\Kac{r,s}$ may have a singular vector of depth greater than $\frac{1}{2} rs$.

We illustrate this definition with examples of Kac modules for $(p,p') = (2,4)$ ($c=0$).  First, we determine the structure of the corresponding Fock spaces in \cref{fig:Fockc=0}.  Remembering that $\Fock{r,s} = \Fock{r+p,s+p'}$, by the symmetries \eqref{eq:FFSymm}, we arrive at the Kac module structures depicted in \cref{fig:Kacc=0}.  For general $p$ and $p'$, with $\gcd\set{p,\frac{1}{2}(p'-p)}=1$,
the structure of the Kac module $\Kac{r,s}$ is indicated, for small $r$ and $s$, in \cref{fig:KacStructures}.

\begin{figure}[p]
\begin{tikzpicture}[->,-stealth',scale=0.8, transform shape, node distance=1.1cm]
  \node[main node, label = above:$0$] (1) [] {};
  \node[] (n) [above of =1] {$\Fock{1,1} = \Fock{3,5} = \cdots$};	
  \node[csv, label = left:$\frac{1}{2}$] (2) [below left of=1] {};
  \node[main node, label = right:$\frac{3}{2}$] (3) [below right of=1] {};
  \node[subsv, label = left:$3$] (4) [below of=2] {};
  \node[subsv, label = right:$5$] (5) [below of=3] {};
  \node[subsv, label = left:$\frac{15}{2}$] (7) [below of=4] {};
  \node[subsv, label = right:$\frac{21}{2}$] (6) [below of=5] {};
  \node[subsv, label = left:$14$, label = below:$\vdots$] (8) [below of=7] {};
  \node[subsv, label = right:$18$, label = below:$\vdots$] (9) [below of=6] {};
 \path[every node/.style={font=\sffamily\small}]
	(2) edge (1)
	(1) edge (3)
	(2) edge (4)
	(2) edge (5)
	(4) edge (3)
	(5) edge (3)
	(5) edge (6)
	(4) edge (6)
	(7) edge (4)
	(7) edge (5)
	(7) edge (8)
	(8) edge (6)
	(9) edge (6)
	(7) edge (9);
  \node[] (tmp) [right = 2cm of 1] {};
  \node[main node, label = left:$\frac{3}{16}$] (1d) [below = 7cm of tmp] {};
  \node[] (nd) [above of =1d] {$\Fock{2,2} = \Fock{4,6} = \cdots$};
  \node[subsv, label = left:$\frac{35}{16}$] (2d) [below of=1d] {};
  \node[subsv, label = left:$\frac{99}{16}$] (3d) [below of=2d] {};
  \node[subsv, label = left:$\frac{195}{16}$, label = below:$\vdots$] (4d) [below of=3d] {};
 \path[every node/.style={font=\sffamily\small}]
	(2d) edge (1d)
	(2d) edge (3d)
	(4d) edge (3d);
  \node[main node, label = above:$0$] (1a) [right = 4cm of 1] {};
  \node[] (na) [above of =1a] {$\Fock{1,3} = \Fock{3,7} = \cdots$};	
  \node[main node, label = left:$\frac{1}{2}$] (2a) [below left of=1a] {};
  \node[csv, label = right:$\frac{3}{2}$] (3a) [below right of=1a] {};
  \node[subsv, label = left:$3$] (4a) [below of=2a] {};
  \node[subsv, label = right:$5$] (5a) [below of=3a] {};
  \node[main node, draw, label = left:$\frac{15}{2}$] (6a) [below of=4a] {};  
  \node[subsv, label = right:$\frac{21}{2}$] (7a) [below of=5a] {};
  \node[subsv, label = left:$14$, label = below:$\vdots$] (8a) [below of=6a] {};
  \node[subsv, label = right:$18$, label = below:$\vdots$] (9a) [below of=7a] {};
 \path[every node/.style={font=\sffamily\small}]
	(1a) edge (2a)
	(3a) edge (1a)
	(4a) edge (2a)
	(5a) edge (2a)
	(3a) edge (4a)
	(3a) edge (5a)
	(4a) edge (6a)
	(5a) edge (6a)
	(7a) edge (4a)
	(7a) edge (5a)
	(8a) edge (6a)
 	(7a) edge (8a)
	(7a) edge (9a)
 	(9a) edge (6a);
  \node[main node, label = left:$-\frac{1}{16}$] (1e) [right = 4cm of 1d] {};
  \node[] (ne) [above of =1e] {$\Fock{2,4} = \Fock{4,8} = \cdots$};
  \node[subsv, label = left:$\frac{63}{16}$] (2e) [below of=1e] {};
  \node[subsv, label = left:$\frac{255}{16}$] (3e) [below of=2e] {};
  \node[subsv, label = left:$\frac{575}{16}$, label = below:$\vdots$] (4e) [below of=3e] {};
  \node[main node, label = above:$\frac{1}{2}$] (1b) [right = 4cm of 1a] {};
  \node[] (nb) [above of =1b] {$\Fock{1,5} = \Fock{3,9} = \cdots$};	
  \node[csv, label = left:$3$] (2b) [below left of=1b] {};
  \node[main node, label = right:$5$] (3b) [below right of=1b] {};
  \node[subsv, label = left:$\frac{15}{2}$] (4b) [below of=2b] {};
  \node[subsv, label = right:$\frac{21}{2}$] (5b) [below of=3b] {};
  \node[subsv, label = left:$14$] (7b) [below of=4b] {};
  \node[subsv, label = right:$18$] (6b) [below of=5b] {};
  \node[subsv, label = left:$\frac{45}{2}$, label = below:$\vdots$] (8b) [below of=7b] {};
  \node[subsv, label = right:$\frac{55}{2}$, label = below:$\vdots$] (9b) [below of=6b] {};
 \path[every node/.style={font=\sffamily\small}]
	(2b) edge (1b)
	(1b) edge (3b)
	(2b) edge (4b)
	(2b) edge (5b)
	(4b) edge (3b)
	(5b) edge (3b)
	(5b) edge (6b)
	(4b) edge (6b)
	(7b) edge (4b)
	(7b) edge (5b)
	(7b) edge (8b)
	(8b) edge (6b)
	(9b) edge (6b)
	(7b) edge (9b);
  \node[main node, label = left:$\frac{3}{16}$] (1f) [right = 4cm of 1e] {};
  \node[] (nf) [above of =1f] {$\Fock{2,6} = \Fock{4,10} = \cdots$};
  \node[subsv, label = left:$\frac{35}{16}$] (2f) [below of=1f] {};
  \node[subsv, label = left:$\frac{99}{16}$] (3f) [below of=2f] {};
  \node[subsv, label = left:$\frac{195}{16}$, label = below:$\vdots$] (4f) [below of=3f] {};
 \path[every node/.style={font=\sffamily\small}]
	(1f) edge (2f)
	(3f) edge (2f)
	(3f) edge (4f);
  \node[main node, label = above:$\frac{3}{2}$] (1c) [right = 4cm of 1b] {};
  \node[] (nc) [above of =1c] {$\Fock{1,7} = \Fock{3,11} = \cdots$};	
  \node[main node, label = left:$3$] (2c) [below left of=1c] {};
  \node[csv, label = right:$5$] (3c) [below right of=1c] {};
  \node[subsv, label = left:$\frac{15}{2}$] (4c) [below of=2c] {};
  \node[subsv, label = right:$\frac{21}{2}$] (5c) [below of=3c] {};
  \node[main node, draw, label = left:$14$] (6c) [below of=4c] {};  
  \node[subsv, label = right:$18$] (7c) [below of=5c] {};
  \node[subsv, label = left:$\frac{45}{2}$, label = below:$\vdots$] (8c) [below of=6c] {};
  \node[subsv, label = right:$\frac{55}{2}$, label = below:$\vdots$] (9c) [below of=7c] {};
 \path[every node/.style={font=\sffamily\small}]
	(1c) edge (2c)
	(3c) edge (1c)
	(4c) edge (2c)
	(5c) edge (2c)
	(3c) edge (4c)
	(3c) edge (5c)
	(4c) edge (6c)
	(5c) edge (6c)
	(7c) edge (4c)
	(7c) edge (5c)
	(8c) edge (6c)
 	(7c) edge (8c)
	(7c) edge (9c);
  \node[main node, label = left:$\frac{15}{16}$] (1g) [right = 4cm of 1f] {};
  \node[] (ng) [above of =1g] {$\Fock{2,8} = \Fock{4,12} = \cdots$};
  \node[subsv, label = left:$\frac{143}{16}$] (2g) [below of=1g] {};
  \node[subsv, label = left:$\frac{399}{16}$] (3g) [below of=2g] {};
  \node[subsv, label = left:$\frac{783}{16}$, label = below:$\vdots$] (4g) [below of=3g] {};
\end{tikzpicture}
\caption{Examples of the structures of \ns{} Fock modules when $p=2$ and $p'=4$.} \label{fig:Fockc=0}
\vspace{1cm}
\begin{tikzpicture}[->,-stealth',scale = 0.8, transform shape, node distance=1.1cm]
  \node[main node, label = left:$0$] (1) [] {};
  \node[] (n) [above of =1] {$\Kac{1,1}$};	
  \node[subsv, label = left:$\frac{3}{2}$] (3) [below of=1] {};
 \path[every node/.style={font=\sffamily\small}]
   (1) edge (3);
  \node[main node, label = left:$0$] (1a) [right=4cm of 1] {};
  \node[] (22) [above of =1a] {$\Kac{1,3}$};	
  \node[main node, label = left:$\frac{1}{2}$] (3a) [below of=1a] {};
 \path[every node/.style={font=\sffamily\small}]
   (1a) edge (3a);
 \node[main node, label = left:$\frac{1}{2}$] (1b) [right=4cm of 1a] {};
  \node[] (22) [above of =1b] {$\Kac{1,5}$};	
 \node[main node, label = left:$5$] (3b) [below of=1b] {};
 \path[every node/.style={font=\sffamily\small}]
   (1b) edge (3b);
  \node[main node, label = left:$\frac{3}{2}$] (1c) [right=4cm of 1b] {};
  \node[] (22) [above of =1c] {$\Kac{1,7}$};	
  \node[main node, label = left:$3$] (3c) [below of=1c] {};
 \path[every node/.style={font=\sffamily\small}]
   (1c) edge (3c);
  \node[] (dummy1) [right = 2cm of 1] {};
  \node[main node, label = left:$\frac{3}{16}$] (E1) [below = 3cm of dummy1] {};
  \node[] (22) [above of =E1] {$\Kac{2,2}$};	
  \node[main node, label = left:$-\frac{1}{16}$] (E2) [right = 4cm of E1] {};
  \node[] (22) [above of =E2] {$\Kac{2,4}$};	
  \node[main node, label = left:$\frac{3}{16}$] (E3) [right = 4cm of E2] {};
  \node[] (22) [above of =E3] {$\Kac{2,6}$};	
  \node[main node, label = left:$\frac{35}{16}$] (E3a) [below of=E3] {};
 \path[every node/.style={font=\sffamily\small}]
   (E3) edge (E3a);
  \node[main node, label = left:$\frac{15}{16}$] (E3') [right = 4cm of E3] {};
  \node[] (22) [above of =E3'] {$\Kac{2,8}$};	
  \node[main node, label = left:$\frac{3}{2}$] (1d) [below=6cm of 1] {};
  \node[] (22) [above of =1d] {$\Kac{3,1}$};	
  \node[main node, label = left:$5$] (3d) [below of=1d] {};
 \path[every node/.style={font=\sffamily\small}]
   (1d) edge (3d);
  \node[main node, label = left:$\frac{1}{2}$] (1e) [right=4cm of 1d] {};
  \node[] (22) [above of =1e] {$\Kac{3,3}$};	
  \node[main node, label = left:$3$] (3e) [below of=1e] {};
 \path[every node/.style={font=\sffamily\small}]
   (1e) edge (3e);
  \node[main node, label = above:$0$] (1f) [right=4cm of 1e] {};
  \node[] (22) [above of =1f] {$\Kac{3,5}$};	
  \node[csv, label = left:$\frac{1}{2}$] (2f) [below left of=1f] {};
  \node[main node, label = right:$\frac{3}{2}$] (3f) [below right of=1f] {};
  \node[subsv, label = left:$3$] (4f) [below of=2f] {};
  \node[subsv, label = right:$5$] (5f) [below of=3f] {};
  \node[subsv, label = below:$\frac{21}{2}$] (6f) [below left of=5f] {};
 \path[every node/.style={font=\sffamily\small}]
   (2f) edge (1f)
   (1f) edge (3f)
   (2f) edge (4f)
   (2f) edge (5f)
   (4f) edge (3f)
   (5f) edge (3f)
   (5f) edge (6f)
   (4f) edge (6f);
 \node[main node, label = above:$0$] (1g) [right = 4cm of 1f] {};
  \node[] (22) [above of =1g] {$\Kac{3,7}$};	
 \node[main node, label = left:$\frac{1}{2}$] (2g) [below left of=1g] {};
 \node[csv, label = right:$\frac{3}{2}$] (3g) [below right of=1g] {};
 \node[subsv, label = left:$3$] (4g) [below of=2g] {};
 \node[subsv, label = right:$5$] (5g) [below of=3g] {};
 \node[main node, draw, label = below:$\frac{15}{2}$] (6g) [below right of=4g] {};  
 \path[every node/.style={font=\sffamily\small}]
   (1g) edge (2g)
   (3g) edge (1g)
   (4g) edge (2g)
   (5g) edge (2g)
   (3g) edge (4g)
   (3g) edge (5g)
   (4g) edge (6g)
   (5g) edge (6g);
  \node[main node, label = left:$\frac{35}{16}$] (E4) [below = 7.5cm of E1] {};
  \node[] (22) [above of =E4] {$\Kac{4,2}$};	
  \node[main node, label = left:$\frac{15}{16}$] (E5) [right = 4cm of E4] {};
  \node[] (22) [above of =E5] {$\Kac{4,4}$};	
  \node[main node, label = left:$\frac{3}{16}$] (E6) [right = 4cm of E5] {};
  \node[] (22) [above of =E6] {$\Kac{4,6}$};	
  \node[main node, label = left:$\frac{35}{16}$] (E6a) [below of=E6] {};
  \node[main node, label = left:$\frac{99}{16}$] (E6b) [below of=E6a] {};
 \path[every node/.style={font=\sffamily\small}]
   (E6a) edge (E6)
   (E6a) edge (E6b);
  \node[main node, label = left:$-\frac{1}{16}$] (E7) [right = 4cm of E6] {};
  \node[] (22) [above of =E7] {$\Kac{4,8}$};	
  \node[main node, label = left:$\frac{63}{16}$] (E7a) [below of =E7] {};	
\end{tikzpicture}
\caption{Examples of the structures of \ns{} Kac modules when $p=2$ and $p'=4$.} \label{fig:Kacc=0}
\end{figure}

\begin{figure}
\scalebox{0.5}{
\begin{tabular}{|D{3cm}|D{0.5cm}|D{3cm}|D{0.5cm}|D{3cm}|D{0.5cm}|D{3cm}|D{0.5cm}|D{3cm}}
\hline
\IKL 
\begin{tikzpicture} 
[->,thick,-stealth', transform shape, node distance=1.1cm]
\node[main node] (1) [] {};
\node[main node] (3) [below of=1] {};
\path[]
   (1) edge node {} (3);
\end{tikzpicture}
& \BKL 
\begin{tikzpicture} 
\node[main node] (1) [] {};
\end{tikzpicture}
& \IKL 
\begin{tikzpicture} 
[->,thick,-stealth', transform shape, node distance=1.1cm]
\node[main node] (1) [] {};
\node[main node] (3) [below of=1] {};
\path[]
   (1) edge node {} (3);
\end{tikzpicture}
& \BKL 
\begin{tikzpicture} 
\node[main node] (1) [] {};
\end{tikzpicture}
& \IKL 
\begin{tikzpicture} 
[->,thick,-stealth', transform shape, node distance=1.1cm]
\node[main node] (1) [] {};
\node[main node] (3) [below of=1] {};
\path[]
   (1) edge node {} (3);
\end{tikzpicture}
& \BKL 
\begin{tikzpicture} 
\node[main node] (1) [] {};
\end{tikzpicture}
& \IKL 
\begin{tikzpicture} 
[->,thick,-stealth', transform shape, node distance=1.1cm]
\node[main node] (1) [] {};
\node[main node] (3) [below of=1] {};
\path[]
   (1) edge node {} (3);
\end{tikzpicture}
& \BKL 
\begin{tikzpicture} 
\node[main node] (1) [] {};
\end{tikzpicture}
& \IKL 
\begin{tikzpicture} 
[->,thick,-stealth', transform shape, node distance=1.1cm]
\node[main node] (1) [] {};
\node[main node] (3) [below of=1] {};
\path[]
   (1) edge node {} (3);
\end{tikzpicture}
\\
\hline
\BKL
\begin{tikzpicture}
\node[main node] (1) [] {};
\end{tikzpicture}
& 
\begin{tikzpicture}
\node[main node] (1) [] {};
\end{tikzpicture}
& \BKL 
\begin{tikzpicture} 
[->,thick,-stealth', transform shape, node distance=1.1cm]
\node[main node] (1) [] {};
\node[main node] (3) [below of=1] {};
\path[]
   (1) edge node {} (3);
\end{tikzpicture}
& 
\begin{tikzpicture}
\node[main node] (1) [] {};
\end{tikzpicture}
& \BKL
\begin{tikzpicture}
 [->,thick,-stealth', transform shape, node distance=1.1cm]
 \node[main node] (E6) [] {};
 \node[main node] (E6a) [below of=E6] {};
 \path[]
   (E6) edge node {} (E6a);
\end{tikzpicture}
& 
\begin{tikzpicture}
 \node[main node] (1) [] {};
\end{tikzpicture}
&\BKL
\begin{tikzpicture} 
[->,thick,-stealth', transform shape, node distance=1.1cm]
\node[main node] (1) [] {};
\node[main node] (3) [below of=1] {};
\path[]
   (1) edge node {} (3);
\end{tikzpicture}
& 
\begin{tikzpicture}
 \node[main node] (1) [] {};
\end{tikzpicture}
&\BKL
\begin{tikzpicture} 
[->,thick,-stealth', transform shape, node distance=1.1cm]
\node[main node] (1) [] {};
\node[main node] (3) [below of=1] {};
\path[]
   (1) edge node {} (3);
\end{tikzpicture}
\\
\hline
\IKL 
\begin{tikzpicture} 
[->,thick,-stealth', transform shape, node distance=1.1cm]
\node[main node] (1) [] {};
\node[main node] (3) [below of=1] {};
\path[]
   (1) edge node {} (3);
\end{tikzpicture}
& \BKL 
\begin{tikzpicture} 
[->,thick,-stealth', transform shape, node distance=1.1cm]
\node[main node] (1) [] {};
\node[main node] (3) [below of=1] {};
\path[]
   (1) edge node {} (3);
\end{tikzpicture}
& \IKL 
\begin{tikzpicture} 
[->,thick,-stealth', transform shape, node distance=1.1cm]
 \node[main node] (1f) [] {};
 \node[csv] (2f) [below left of=1f] {};
 \node[main node] (3f) [below right of=1f] {};
 \node[subsv] (4f) [below of=2f] {};
 \node[subsv] (5f) [below of=3f] {};
 \node[subsv] (6f) [below left of=5f] {};
 \path[]
   (2f) edge node {} (1f)
   (1f) edge node {} (3f)
   (2f) edge node {} (4f)
   (2f) edge node {} (5f)
   (4f) edge node {} (3f)
   (5f) edge node {} (3f)
   (5f) edge node {} (6f)
   (4f) edge node {} (6f);
\end{tikzpicture}
& \BKL 
\begin{tikzpicture}
 [->,thick,-stealth', transform shape, node distance=1.1cm]
 \node[main node] (E6) [] {};
 \node[main node] (E6a) [below of=E6] {};
 \node[main node] (E6b) [below of=E6a] {};
 \path[]
   (E6a) edge node {} (E6)
   (E6a) edge node {} (E6b);
\end{tikzpicture}
& \IKL
\begin{tikzpicture} 
[->,thick,-stealth', transform shape, node distance=1.1cm]
 \node[main node] (1f) [] {};
 \node[csv] (2f) [below left of=1f] {};
 \node[main node] (3f) [below right of=1f] {};
 \node[subsv] (4f) [below of=2f] {};
 \node[subsv] (5f) [below of=3f] {};
 \node[subsv] (6f) [below left of=5f] {};
 \path[]
   (2f) edge node {} (1f)
   (1f) edge node {} (3f)
   (2f) edge node {} (4f)
   (2f) edge node {} (5f)
   (4f) edge node {} (3f)
   (5f) edge node {} (3f)
   (5f) edge node {} (6f)
   (4f) edge node {} (6f);
\end{tikzpicture}
& \BKL 
\begin{tikzpicture}
 [->,thick,-stealth', transform shape, node distance=1.1cm]
 \node[main node] (E6) [] {};
 \node[main node] (E6a) [below of=E6] {};
 \node[main node] (E6b) [below of=E6a] {};
 \path[]
   (E6a) edge node {} (E6)
   (E6a) edge node {} (E6b);
\end{tikzpicture}
&\IKL
\begin{tikzpicture} 
[->,thick,-stealth', transform shape, node distance=1.1cm]
 \node[main node] (1f) [] {};
 \node[csv] (2f) [below left of=1f] {};
 \node[main node] (3f) [below right of=1f] {};
 \node[subsv] (4f) [below of=2f] {};
 \node[subsv] (5f) [below of=3f] {};
 \node[subsv] (6f) [below left of=5f] {};
 \path[]
   (2f) edge node {} (1f)
   (1f) edge node {} (3f)
   (2f) edge node {} (4f)
   (2f) edge node {} (5f)
   (4f) edge node {} (3f)
   (5f) edge node {} (3f)
   (5f) edge node {} (6f)
   (4f) edge node {} (6f);
\end{tikzpicture}
& \BKL 
\begin{tikzpicture}
 [->,thick,-stealth', transform shape, node distance=1.1cm]
 \node[main node] (E6) [] {};
 \node[main node] (E6a) [below of=E6] {};
 \node[main node] (E6b) [below of=E6a] {};
 \path[]
   (E6a) edge node {} (E6)
   (E6a) edge node {} (E6b);
\end{tikzpicture}
&\IKL
\begin{tikzpicture} 
[->,thick,-stealth', transform shape, node distance=1.1cm]
 \node[main node] (1f) [] {};
 \node[csv] (2f) [below left of=1f] {};
 \node[main node] (3f) [below right of=1f] {};
 \node[subsv] (4f) [below of=2f] {};
 \node[subsv] (5f) [below of=3f] {};
 \node[subsv] (6f) [below left of=5f] {};
 \path[]
   (2f) edge node {} (1f)
   (1f) edge node {} (3f)
   (2f) edge node {} (4f)
   (2f) edge node {} (5f)
   (4f) edge node {} (3f)
   (5f) edge node {} (3f)
   (5f) edge node {} (6f)
   (4f) edge node {} (6f);
\end{tikzpicture}
\\
\hline
\BKL 
\begin{tikzpicture}
\node[main node] (1) [] {};
\end{tikzpicture}
& 
\begin{tikzpicture}
\node[main node] (1) [] {};
\end{tikzpicture}
& \BKL 
\begin{tikzpicture}
 [->,thick,-stealth', transform shape, node distance=1.1cm]
 \node[main node] (E6) [] {};
 \node[main node] (E6a) [below of=E6] {};
 \node[main node] (E6b) [below of=E6a] {};
 \path[]
   (E6a) edge node {} (E6)
   (E6a) edge node {} (E6b);
\end{tikzpicture}
& 
\begin{tikzpicture}
 [->,thick,-stealth', transform shape, node distance=1.1cm]
 \node[main node] (E6) [] {};
 \node[main node] (E6a) [below of=E6] {};
\end{tikzpicture}
& \BKL 
\begin{tikzpicture}
 [->,thick,-stealth', transform shape, node distance=1.1cm]
 \node[main node] (E6) [] {};
 \node[main node] (E7) [above of=E6] {};
 \node[main node] (E6a) [below of=E6] {};
 \node[main node] (E6b) [below of=E6a] {};
 \path[]
   (E7) edge node {} (E6)
   (E6a) edge node {} (E6)
   (E6a) edge node {} (E6b);
\end{tikzpicture}
& 
\begin{tikzpicture}
 [->,thick,-stealth', transform shape, node distance=1.1cm]
 \node[main node] (E6) [] {};
 \node[main node] (E6a) [below of=E6] {};
\end{tikzpicture}
&\BKL
\begin{tikzpicture}
 [->,thick,-stealth', transform shape, node distance=1.1cm]
 \node[main node] (E6) [] {};
 \node[main node] (E7) [above of=E6] {};
 \node[main node] (E6a) [below of=E6] {};
 \node[main node] (E6b) [below of=E6a] {};
 \path[]
   (E7) edge node {} (E6)
   (E6a) edge node {} (E6)
   (E6a) edge node {} (E6b);
\end{tikzpicture}
& 
\begin{tikzpicture}
 [->,thick,-stealth', transform shape, node distance=1.1cm]
 \node[main node] (E6) [] {};
 \node[main node] (E6a) [below of=E6] {};
\end{tikzpicture}
&\BKL
\begin{tikzpicture}
 [->,thick,-stealth', transform shape, node distance=1.1cm]
 \node[main node] (E6) [] {};
 \node[main node] (E7) [above of=E6] {};
 \node[main node] (E6a) [below of=E6] {};
 \node[main node] (E6b) [below of=E6a] {};
 \path[]
   (E7) edge node {} (E6)
   (E6a) edge node {} (E6)
   (E6a) edge node {} (E6b);
\end{tikzpicture}
\\
\hline
\IKL 
\begin{tikzpicture} 
[->,thick,-stealth', transform shape, node distance=1.1cm]
\node[main node] (1) [] {};
\node[main node] (3) [below of=1] {};
\path[]
   (1) edge node {} (3);
\end{tikzpicture}
& \BKL 
\begin{tikzpicture} 
[->,thick,-stealth', transform shape, node distance=1.1cm]
\node[main node] (1) [] {};
\node[main node] (3) [below of=1] {};
\path[]
   (1) edge node {} (3);
\end{tikzpicture}
& \IKL 
\begin{tikzpicture} 
[->,thick,-stealth', transform shape, node distance=1.1cm]
 \node[main node] (1f) [] {};
 \node[csv] (2f) [below left of=1f] {};
 \node[main node] (3f) [below right of=1f] {};
 \node[subsv] (4f) [below of=2f] {};
 \node[subsv] (5f) [below of=3f] {};
 \node[subsv] (6f) [below left of=5f] {};
 \path[]
   (2f) edge node {} (1f)
   (1f) edge node {} (3f)
   (2f) edge node {} (4f)
   (2f) edge node {} (5f)
   (4f) edge node {} (3f)
   (5f) edge node {} (3f)
   (5f) edge node {} (6f)
   (4f) edge node {} (6f);
\end{tikzpicture}
& \BKL 
\begin{tikzpicture}
 [->,thick,-stealth', transform shape, node distance=1.1cm]
 \node[main node] (E6) [] {};
 \node[main node] (E7) [above of=E6] {};
 \node[main node] (E6a) [below of=E6] {};
 \node[main node] (E6b) [below of=E6a] {};
 \path[]
   (E7) edge node {} (E6)
   (E6a) edge node {} (E6)
   (E6a) edge node {} (E6b);
\end{tikzpicture}
& \IKL
\begin{tikzpicture} 
[->,thick,-stealth', transform shape, node distance=1.1cm]
 \node[main node] (1f) [] {};
 \node[csv] (2f) [below left of=1f] {};
 \node[main node] (3f) [below right of=1f] {};
 \node[subsv] (4f) [below of=2f] {};
 \node[subsv] (5f) [below of=3f] {};
 \node[subsv] (6f) [below of=4f] {};
 \node[subsv] (7f) [below of=5f] {};
 \node[subsv] (8f) [below of =6f] {};
 \node[subsv] (9f) [below of =7f] {};
 \node[subsv] (10f) [below left of=9f] {};
 \path[]
   (2f) edge node {} (1f)
   (1f) edge node {} (3f)
   (2f) edge node {} (4f)
   (2f) edge node {} (5f)
   (4f) edge node {} (3f)
   (5f) edge node {} (3f)
   (6f) edge node {} (4f)
   (4f) edge node {} (7f)
   (6f) edge node {} (5f)
   (5f) edge node {} (7f)
   (6f) edge node {} (8f)
   (6f) edge node {} (9f)
   (8f) edge node {} (7f)
   (9f) edge node {} (7f)
   (8f) edge node {} (10f)
   (9f) edge node {} (10f);
\end{tikzpicture}
& \BKL 
\begin{tikzpicture}
 [->,thick,-stealth', transform shape, node distance=1.1cm]
 \node[main node] (E6) [] {};
 \node[main node] (E7) [above of=E6] {};
 \node[main node] (E8) [above of=E7] {};
 \node[main node] (E6a) [below of=E6] {};
 \node[main node] (E6b) [below of=E6a] {};
 \path[]
   (E7) edge node {} (E8)
   (E7) edge node {} (E6)
   (E6a) edge node {} (E6)
   (E6a) edge node {} (E6b);
\end{tikzpicture}
& \IKL
\begin{tikzpicture} 
[->,thick,-stealth', transform shape, node distance=1.1cm]
 \node[main node] (1f) [] {};
 \node[csv] (2f) [below left of=1f] {};
 \node[main node] (3f) [below right of=1f] {};
 \node[subsv] (4f) [below of=2f] {};
 \node[subsv] (5f) [below of=3f] {};
 \node[subsv] (6f) [below of=4f] {};
 \node[subsv] (7f) [below of=5f] {};
 \node[subsv] (8f) [below of =6f] {};
 \node[subsv] (9f) [below of =7f] {};
 \node[subsv] (10f) [below left of=9f] {};
 \path[]
   (2f) edge node {} (1f)
   (1f) edge node {} (3f)
   (2f) edge node {} (4f)
   (2f) edge node {} (5f)
   (4f) edge node {} (3f)
   (5f) edge node {} (3f)
   (6f) edge node {} (4f)
   (4f) edge node {} (7f)
   (6f) edge node {} (5f)
   (5f) edge node {} (7f)
   (6f) edge node {} (8f)
   (6f) edge node {} (9f)
   (8f) edge node {} (7f)
   (9f) edge node {} (7f)
   (8f) edge node {} (10f)
   (9f) edge node {} (10f);
\end{tikzpicture}
& \BKL 
\begin{tikzpicture}
 [->,thick,-stealth', transform shape, node distance=1.1cm]
 \node[main node] (E6) [] {};
 \node[main node] (E7) [above of=E6] {};
 \node[main node] (E8) [above of=E7] {};
 \node[main node] (E6a) [below of=E6] {};
 \node[main node] (E6b) [below of=E6a] {};
 \path[]
   (E7) edge node {} (E8)
   (E7) edge node {} (E6)
   (E6a) edge node {} (E6)
   (E6a) edge node {} (E6b);
\end{tikzpicture}
& \IKL
\begin{tikzpicture} 
[->,thick,-stealth', transform shape, node distance=1.1cm]
 \node[main node] (1f) [] {};
 \node[csv] (2f) [below left of=1f] {};
 \node[main node] (3f) [below right of=1f] {};
 \node[subsv] (4f) [below of=2f] {};
 \node[subsv] (5f) [below of=3f] {};
 \node[subsv] (6f) [below of=4f] {};
 \node[subsv] (7f) [below of=5f] {};
 \node[subsv] (8f) [below of =6f] {};
 \node[subsv] (9f) [below of =7f] {};
 \node[subsv] (10f) [below left of=9f] {};
 \path[]
   (2f) edge node {} (1f)
   (1f) edge node {} (3f)
   (2f) edge node {} (4f)
   (2f) edge node {} (5f)
   (4f) edge node {} (3f)
   (5f) edge node {} (3f)
   (6f) edge node {} (4f)
   (4f) edge node {} (7f)
   (6f) edge node {} (5f)
   (5f) edge node {} (7f)
   (6f) edge node {} (8f)
   (6f) edge node {} (9f)
   (8f) edge node {} (7f)
   (9f) edge node {} (7f)
   (8f) edge node {} (10f)
   (9f) edge node {} (10f);
\end{tikzpicture}
\\
\hline
\BKL 
\begin{tikzpicture}
\node[main node] (1) [] {};
\end{tikzpicture}
& 
\begin{tikzpicture}
\node[main node] (1) [] {};
\end{tikzpicture}
& \BKL 
\begin{tikzpicture}
 [->,thick,-stealth', transform shape, node distance=1.1cm]
 \node[main node] (E6) [] {};
 \node[main node] (E6a) [below of=E6] {};
 \node[main node] (E6b) [below of=E6a] {};
 \path[]
   (E6a) edge node {} (E6)
   (E6a) edge node {} (E6b);
\end{tikzpicture}
& 
\begin{tikzpicture} 
[->,thick,-stealth', transform shape, node distance=1.1cm]
\node[main node] (1) [] {};
\node[main node] (3) [below of=1] {};
\end{tikzpicture}
& \BKL 
\begin{tikzpicture}
 [->,thick,-stealth', transform shape, node distance=1.1cm]
 \node[main node] (E6) [] {};
 \node[main node] (E7) [above of=E6] {};
 \node[main node] (E8) [above of=E7] {};
 \node[main node] (E6a) [below of=E6] {};
 \node[main node] (E6b) [below of=E6a] {};
 \path[]
   (E7) edge node {} (E8)
   (E7) edge node {} (E6)
   (E6a) edge node {} (E6)
   (E6a) edge node {} (E6b);
\end{tikzpicture}
& 
\begin{tikzpicture}
 [->,thick,-stealth', transform shape, node distance=1.1cm]
 \node[main node] (E6) [] {};
 \node[main node] (E6a) [below of=E6] {};
 \node[main node] (E6b) [below of=E6a] {};
\end{tikzpicture}
&\BKL
\begin{tikzpicture}
 [->,thick,-stealth', transform shape, node distance=1.1cm]
 \node[main node] (E6) [] {};
 \node[main node] (E7) [above of=E6] {};
 \node[main node] (E8) [above of=E7] {};
 \node[main node] (E9) [above of=E8] {};
 \node[main node] (E6a) [below of=E6] {};
 \node[main node] (E6b) [below of=E6a] {};
 \path[]
   (E9) edge node {} (E8)
   (E7) edge node {} (E8)
   (E7) edge node {} (E6)
   (E6a) edge node {} (E6)
   (E6a) edge node {} (E6b);
\end{tikzpicture}
& 
\begin{tikzpicture}
 [->,thick,-stealth', transform shape, node distance=1.1cm]
 \node[main node] (E6) [] {};
 \node[main node] (E6a) [below of=E6] {};
 \node[main node] (E6b) [below of=E6a] {};
\end{tikzpicture}
&\BKL
\begin{tikzpicture}
 [->,thick,-stealth', transform shape, node distance=1.1cm]
 \node[main node] (E6) [] {};
 \node[main node] (E7) [above of=E6] {};
 \node[main node] (E8) [above of=E7] {};
 \node[main node] (E9) [above of=E8] {};
 \node[main node] (E6a) [below of=E6] {};
 \node[main node] (E6b) [below of=E6a] {};
 \path[]
   (E9) edge node {} (E8)
   (E7) edge node {} (E8)
   (E7) edge node {} (E6)
   (E6a) edge node {} (E6)
   (E6a) edge node {} (E6b);
\end{tikzpicture}
\end{tabular}
}
\caption{A depiction of the structures of the Kac modules $\Kac{r,s}$ as $(r,s)$ varies over (a part of) the extended Kac table.  The genuine Kac table, bounded by $1 \le r \le p-1$ and $1 \le s \le p'-1$, is represented by the dark grey rectangle in the upper-left corner.  These are interior points of the extended Kac table and the light grey and white areas correspond to boundary and corner type labels as in \cref{fig:KacTables}.  If $p=1$ or $p'=1$ (or both), then the possible structures correspond to removing the rows or columns (or both) that contain interior labels.} \label{fig:KacStructures}
\end{figure}
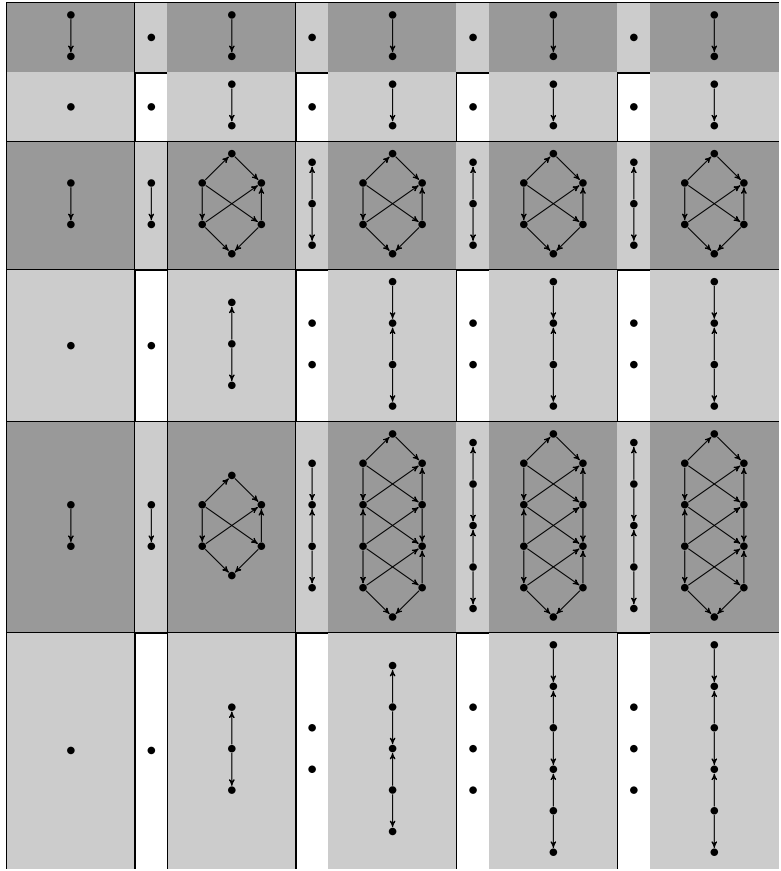

\section{A Verlinde formula} \label{sec:Verlinde}

For rational theories, a relatively efficient route to computing the fusion rules is to first determine the modular S-transforms of the characters of the simple modules and then apply the Verlinde formula.  For logarithmic theories, such a formula cannot compute the fusion multiplicities themselves, because characters cannot distinguish between an indecomposable module and the direct sum of its composition factors.  Instead, it is natural to require that a logarithmic Verlinde formula computes the structure constants of the projection of the fusion ring onto an appropriate ring of characters.\footnote{Other, typically model-dependent, Verlinde-like formulae have been proposed for logarithmic conformal field theories in \cite{FucNon04,FloVer07,GabFro08,GaiRad09,PeaGro10,RasVer10,PeaCos11}.}  In many cases, this character ring may be identified with the Grothendieck ring of fusion, in which indecomposable modules are identified with the sum of their composition factors.\footnote{This identification assumes that the characters of the simple modules are linearly independent.  This is indeed the case here, provided that one does not distinguish modules if they only differ through the parity of their \hws{}.  If one does want to distinguish these modules, then supercharacters should be used as well as characters.  In the latter setup, the sign with which a character appears becomes relevant.}  The expectation is therefore that the Verlinde formula determines the character, or equivalently the composition factors, of a fusion product.

There are a few provisos to this claim.  Technically, a Grothendieck ring of fusion will only exist if fusing with any given \ns{} module defines an exact functor from our chosen category of \ns{} modules to itself.  While this is not true for general \ns{} modules (the first example over the Virasoro algebra was given in \cite{GabFus09}), the modules for which it is true form a subring of the fusion ring \cite{KazTenIV94} (assuming that fusion defines a tensor structure on the appropriate category of \ns{} modules).  It has been conjectured that modules defining boundary sectors in a consistent boundary \cft{} do define exact functors under fusing.  We will therefore assume that the Kac modules have this property; consequently, this property will be shared by the modules that are generated from the Kac modules by fusion.  Thus, the subring of the \ns{} fusion ring that is generated by the Kac modules is hereby assumed to possess a well defined Grothendieck ring of fusion.  This subring is central to our investigations and we shall begin our analysis by exploring the detailed structure of its Grothendieck ring.

\subsection{Modular transformations} \label{sec:Mod}

We start with the study of the modular transformation properties of \ns{} characters.  Those of the Fock spaces are particularly accessible:
\begin{equation}
\fch{\Fock{\lambda}}{\tau} = q^{h_{\lambda} - c/24} \prod_{j=1}^{\infty} \frac{1+q^{j-1/2}}{1-q^j} = \frac{q^{\brac{\lambda - Q/2}^2 / 2}}{\func{\eta}{q}} \sqrt{\frac{\fjth{3}{1;q}}{\func{\eta}{q}}} \qquad \text{(\(q = \ee^{2 \pi \ii \tau}\)).}  
\label{chF}
\end{equation}
Here, $h_{\lambda}$ denotes the minimal conformal weight of $\Fock{\lambda}$ and we employ the Dedekind eta and Jacobi theta functions
\begin{equation}
\func{\eta}{q} = q^{1/24} \prod_{j=1}^{\infty} \brac{1-q^j}, \qquad 
\fjth{3}{z;q} = \prod_{j=1}^{\infty} \brac{1+zq^{j-1/2}} \brac{1-q^j} \brac{1+z^{-1}q^{j-1/2}}.
\end{equation}
Note that the character formula \eqref{chF} does not distinguish between $\Fock{\lambda}$ and its contragredient dual $\Fock{Q-\lambda}$.

The modular S-transformation of these characters is then
\begin{equation}
\fch{\Fock{\lambda}}{-1/\tau} = \int_{Q/2}^{\infty} \Smat{\Fock{\lambda}}{\Fock{\mu}} \fch{\Fock{\mu}}{\tau} \: \dd \mu, \qquad 
\Smat{\Fock{\lambda}}{\Fock{\mu}} = 2 \cos \tsqbrac{2 \pi \brac{\lambda - Q/2} \brac{\mu - Q/2}},
\end{equation}
where we restrict the range of the integral to $[Q/2,\infty)$ 
in order to integrate over linearly independent characters.  This may be easily verified using a standard gaussian integral, convergent for $\Im \tau > 0$.  However, because this S-transform requires a continuous family of Fock space characters, we expect to encounter singular distributions at some point.  As it can be confusing to allow endpoints to the integration domain when computing with these generalised functions, we will redefine the above S-transformation, once and for all, so that the integration range is open:
\begin{equation} \label{eq:DefS}
\fch{\Fock{\lambda}}{-1/\tau} = \int_{-\infty}^{\infty} \Smat{\Fock{\lambda}}{\Fock{\mu}} \fch{\Fock{\mu}}{\tau} \: \dd \mu, \qquad 
\Smat{\Fock{\lambda}}{\Fock{\mu}} = \cos \tsqbrac{2 \pi \brac{\lambda - Q/2} \brac{\mu - Q/2}}.
\end{equation}
The price to pay for extending to all $\mu \in \RR$ is that we must now identify $\Fock{\mu}$ and its contragredient $\Fock{Q-\mu}$ in all computations involving characters, in particular when we employ the Verlinde formula.

We remark that the kernel $\Smat{\Fock{\lambda}}{\Fock{\mu}}$ of this S-transform, the analogue of the S-matrix for rational theories, is symmetric 
($\Smat{\Fock{\lambda}}{\Fock{\mu}} = \Smat{\Fock{\mu}}{\Fock{\lambda}}$) and unitary:
\begin{align}
\int_{-\infty}^{\infty} \Smat{\Fock{\lambda}}{\Fock{\mu}} \Smat{\Fock{\nu}}{\Fock{\mu}}^* \: \dd \mu 
&= \int_{-\infty}^{\infty} \cos \tsqbrac{2 \pi \brac{\lambda - Q/2} \brac{\mu - Q/2}} \cos \tsqbrac{2 \pi \brac{\nu - Q/2} \brac{\mu - Q/2}} \: \dd \mu \notag \\
&= \frac{1}{2} \int_{-\infty}^{\infty} \brac{\cos \tsqbrac{2 \pi \brac{\lambda - \nu} \mu} + \cos \tsqbrac{2 \pi \brac{\lambda + \nu - Q} \mu}} \: \dd \mu \notag \\
&= \frac{1}{2} \tsqbrac{\func{\delta}{\nu = \lambda} + \func{\delta}{\nu = Q - \lambda}} = \func{\delta}{\nu = \lambda}.
\end{align}
Here, the final equality is justified by the fact that we must identify $\lambda$ and $Q - \lambda$ in such computations.  Because the S-transformation kernel is real, this computation also proves that $\modS$ squares to the identity operator.  In other words, conjugation is trivial in this theory (at the level of characters), as one would expect.

These results accord with our general expectations for good modular properties and indicate that we may expect meaningful results from the Verlinde formula.  Indeed, in the general formalism for the modularity of logarithmic \cfts{} proposed in \cite{CreLog13} and refined in \cite{RidVer14}, the $\Fock{\lambda}$ may be regarded as the \emph{standard modules}:  They are simple for almost all $\lambda \in \RR$, with respect to the Lebesgue measure used in \eqref{eq:DefS}; the simple $\Fock{\lambda}$ are termed the \emph{typical modules}.  The \emph{atypical modules} are then those corresponding to $\lambda = \lambda_{r,s}$, for $r,s \in \ZZ$.  In particular, the $\Fock{r,s}$ are not simple, hence are atypical.

The Kac modules $\Kac{r,s}$ of \cref{sec:Kac} are therefore also atypical and their characters are given by
\begin{equation} \label{ch:Kac}
\ch{\Kac{r,s}} = \ch{\Fock{\lambda_{r,s}}} - \ch{\Fock{\lambda_{-r,s}}}.
\end{equation}
We immediately obtain their S-transformations:
\begin{equation}
\begin{gathered}
\fch{\Kac{r,s}}{-1/\tau} = \int_{-\infty}^{\infty} \Smat{\Kac{r,s}}{\Fock{\mu}} \fch{\Fock{\mu}}{\tau} \: \dd \mu, \\
\Smat{\Kac{r,s}}{\Fock{\mu}} = \Smat{\Fock{\lambda_{r,s}}}{\Fock{\mu}} - \Smat{\Fock{\lambda_{-r,s}}}{\Fock{\mu}} = 2 \sin \tsqbrac{2 \pi r \alpha' \brac{\mu - Q/2}} \sin \tsqbrac{2 \pi s \alpha \brac{\mu - Q/2}}.
\end{gathered}
\end{equation}
Note that the result of S-transforming the character of a Kac module is an integral over the Fock space characters.  As the Fock spaces are the standard modules of the theory, their characters give the canonical topological basis in which to express all characters.\footnote{We refer to this basis as topological because one might only recover a given character, for example that of a simple atypical module $\Irr{r,s}$, as an infinite sum of Fock space characters.  In this case, the convergence is that of formal power series --- the contribution to the multiplicity of a given weight space is zero for all but finitely many terms in the sum.}  In particular, it is not clear that the quantity $\Smat{\Kac{r,s}}{\Kac{r',s'}}$ is well defined.  We will therefore perform all subsequent computations in the basis of standard characters without further comment.

Finally, we recall that Kac modules were only defined for $r,s \in \ZZ_+$.  If the character formula \eqref{ch:Kac} is extended to general $r,s \in \ZZ$, then we obtain
\begin{equation} \label{eq:KacCharSymm}
\ch{\Kac{-r,s}} = -\ch{\Kac{r,s}} = \ch{\Kac{r,-s}}, \qquad 
\ch{\Kac{r,0}} = \ch{\Kac{0,s}} = 0, \qquad 
\ch{\Kac{-r,-s}} = \ch{\Kac{r,s}}.
\end{equation}
These formulae are important for interpreting the results of general Verlinde computations.

\subsection{Verlinde products} \label{sec:VerProd}

Assuming, as discussed above, that the fusion product ``$\fuse$'' descends to a well defined product ``$\Grfuse$'' on the Grothendieck ring of characters, we may decompose a character product into a linear combination of Fock space characters:
\begin{equation} \label{eq:TheVerlindeFormula1}
\ch{\mathcal{M} \fuse \mathcal{N}} = \ch{\mathcal{M}} \Grfuse \ch{\mathcal{N}} = \int_{-\infty}^{\infty} \fuscoeff{\mathcal{M}}{\mathcal{N}}{\Fock{\nu}} \ch{\Fock{\nu}} \: \dd \nu.
\end{equation}
Here, $\mathcal{M}$ and $\mathcal{N}$ are \ns{} modules and we recall that the Fock space characters are the preferred (topological) basis for the space of all \ns{} characters.  The multiplicities $\fuscoeff{\mathcal{M}}{\mathcal{N}}{\Fock{\nu}}$ are the Verlinde coefficients and are computed, in terms of the S-transformation kernel, by the Verlinde formula:
\begin{equation} \label{eq:TheVerlindeFormula2}
\fuscoeff{\mathcal{M}}{\mathcal{N}}{\Fock{\nu}} = \int_{-\infty}^{\infty} \frac{\Smat{\mathcal{M}}{\Fock{\rho}} \Smat{\mathcal{N}}{\Fock{\rho}} \Smat{\Fock{\nu}}{\Fock{\rho}}^*}{\Smat{\Kac{1,1}}{\Fock{\rho}}} \: \dd \rho.
\end{equation}
Here, $\Kac{1,1}$ plays the role of the vacuum module.  Note that this Kac module is generated by a \hws{} of conformal weight $0$ that is annihilated by both $L_{-1}$ and $G_{-1/2}$.

It is now straight-forward to compute the character of fusion products.  The unitarity of the S-transformation implies that the vacuum module $\Kac{1,1}$ is the unit of the character product:  
\begin{equation}
\ch{\Kac{1,1} \fuse \mathcal{N}} = \ch{\Kac{1,1}} \Grfuse \ch{\mathcal{N}} = \ch{\mathcal{N}}.
\end{equation}
A somewhat less trivial example involves the fusion of $\Kac{3,1}$ with an arbitrary Fock space:
\begin{subequations}
\begin{align}
\fuscoeff{\Kac{3,1}}{\Fock{\mu}}{\Fock{\nu}} &= \int_{-\infty}^{\infty} \frac{\Smat{\Kac{3,1}}{\Fock{\rho}} \Smat{\Fock{\mu}}{\Fock{\rho}} \Smat{\Fock{\nu}}{\Fock{\rho}}^*}{\Smat{\Kac{1,1}}{\Fock{\rho}}} \: \dd \rho \notag \\
&= \int_{-\infty}^{\infty} \frac{\sin \sqbrac{6 \pi \alpha' \rho}}{\sin \sqbrac{2 \pi \alpha' \rho}} \cos \sqbrac{2 \pi \brac{\mu - Q/2} \rho} \cos \sqbrac{2 \pi \brac{\nu - Q/2} \rho} \: \dd \rho \notag \\
&= \frac{1}{2} \int_{-\infty}^{\infty} \brac{1 + 2 \cos \sqbrac{4 \pi \alpha' \rho}} \brac{\cos \sqbrac{2 \pi \brac{\mu - \nu} \rho} + \cos \sqbrac{2 \pi \brac{\mu + \nu - Q} \rho}} \: \dd \rho \notag \\
&= \func{\delta}{\nu = \mu - 2 \alpha'} + \func{\delta}{\nu = \mu} + \func{\delta}{\nu = \mu + 2 \alpha'} \notag \\
\Ra \qquad \ch{\Kac{3,1}} \Grfuse \ch{\Fock{\mu}} &= \ch{\Fock{\mu - 2 \alpha'}} + \ch{\Fock{\mu}} + \ch{\Fock{\mu + 2 \alpha'}}.
\intertext{Similar computations result in}
\ch{\Kac{1,3}} \Grfuse \ch{\Fock{\mu}} &= \ch{\Fock{\mu - 2 \alpha}} + \ch{\Fock{\mu}} + \ch{\Fock{\mu + 2 \alpha}}, \\
\ch{\Kac{2,2}} \Grfuse \ch{\Fock{\mu}} &= \ch{\Fock{\mu - \alpha' - \alpha}} + \ch{\Fock{\mu - \alpha' + \alpha}} + \ch{\Fock{\mu + \alpha' - \alpha}} + \ch{\Fock{\mu + \alpha' + \alpha}}.
\end{align}
\end{subequations}
Using \cref{ch:Kac}, we obtain the corresponding products with arbitrary Kac modules:
\begin{subequations}
\begin{align}
\ch{\Kac{3,1}} \Grfuse \ch{\Kac{r,s}} &= \ch{\Kac{r-2,s}} + \ch{\Kac{r,s}} + \ch{\Kac{r+2,s}}, \label{GrFR:K31xK} \\
\ch{\Kac{1,3}} \Grfuse \ch{\Kac{r,s}} &= \ch{\Kac{r,s-2}} + \ch{\Kac{r,s}} + \ch{\Kac{r,s+2}}, \label{GrFR:K13xK} \\
\ch{\Kac{2,2}} \Grfuse \ch{\Kac{r,s}} &= \ch{\Kac{r-1,s-1}} + \ch{\Kac{r-1,s+1}} + \ch{\Kac{r+1,s-1}} + \ch{\Kac{r+1,s+1}}. \label{GrFR:K22xK}
\end{align}
\end{subequations}
Here, we must employ \eqref{eq:KacCharSymm} if the labels on the Kac modules of the \rhs{} are not positive integers.

It follows from these character products that the Kac characters span a unital subring of the Grothendieck ring of \ns{} characters and that this subring is generated by $\ch{\Kac{3,1}}$, $\ch{\Kac{2,2}}$ and $\ch{\Kac{1,3}}$.\footnote{There is a simple exception when $c=\frac{3}{2}$ ($p=p'=1$) because then $\ch{\Kac{3,1}} = \ch{\Kac{1,3}}$ and $\ch{\Kac{2,2}} = \ch{\Kac{1,1}} + \ch{\Kac{1,3}}$.}  Associativity then leads to an explicit general formula for Kac character products:
\begin{equation} \label{GrFR:KxK}
\ch{\Kac{r,s}} \Grfuse \ch{\Kac{r',s'}} = \sideset{}{'} \sum_{r'' = \abs{r-r'}+1}^{r+r'-1} \ \sideset{}{'} \sum_{s'' = \abs{s-s'}+1}^{s+s'-1} \ch{\Kac{r'',s''}}.
\end{equation}
Here, the primed sums indicate that the summation variable increases in steps of two.  We mention, for later purposes, the following special case:
\begin{equation} \label{GrFR:Kr1xK1s}
\ch{\Kac{r,1}} \Grfuse \ch{\Kac{1,s}} = \ch{\Kac{r,s}}.
\end{equation}

\section{An explicit fusion product} \label{sec:TheExample}

In this section, we use an example to illustrate the steps involved in completely decomposing a fusion product and identifying its (indecomposable) direct summands.  To construct the fusion product itself, we utilise the Nahm-Gaberdiel-Kausch fusion algorithm, referring to \cref{app:Fusion} for further details concerning this technology.  The example is fairly involved and we have chosen it in order to illustrate a wide variety of the features and methods that we employ to analyse more general fusion rules.  Some rather more simple arguments are presented in \cref{sec:Kr1K1s} (though stripped of the explicit \NGK{} computations).

As our example, we consider the fusion of the \ns{} Kac module $\Kac{1,3}$ with itself at central charge $c=0$ ($p=2$ and $p'=4$).  A part of the extended \ns{} Kac table for $c=0$ appears in \cref{fig:KacTables}.  We remark that $\Kac{1,3}$, unlike most Kac modules, is a \hwm{}; indeed, it is generated by a \hws{} $v$ of conformal weight $h_{1,3} = 0$.  We may therefore identify $\Kac{1,3}$ as the quotient of the Verma module $\Ver{0}$ by the submodule generated by the \sv{} of conformal weight $\frac{3}{2}$.  Thus,
\begin{equation} \label{SV:K13}
\brac{L_{-1} G_{-1/2} - \frac{1}{2} G_{-3/2}} v = 0
\end{equation}
in $\Kac{1,3}$.  For simplicity, we shall assume throughout that $v$ is even.

First, we determine the character of the fusion product using the Verlinde formula.  Specifically, \cref{GrFR:K13xK} gives
\begin{equation}
\ch{\Kac{1,3}} \Grfuse \ch{\Kac{1,3}} = \ch{\Kac{1,1}} + \ch{\Kac{1,3}} + \ch{\Kac{1,5}}.
\end{equation}
However, each of the Kac modules appearing on the \rhs{} is reducible, with two (simple) composition factors each, so we learn that the fusion product has six composition factors in all:
\begin{equation} \label{CompFact:K13xK13}
\ch{\Kac{1,3} \fuse \Kac{1,3}} = 2 \: \ch{\Irr{0}} + 2 \: \ch{\Irr{1/2}} + \ch{\Irr{3/2}} + \ch{\Irr{5}}.
\end{equation}
Here, $\Irr{h}$ denotes the simple \hwm{} whose \hws{} has conformal weight $h$, as in \cref{sec:hwms}.  To understand how these six simple modules are glued together to form the fusion product, we will partially construct the product module and explicitly analyse the action of the \ns{} algebra upon it.

To construct the fusion product of $\Kac{1,3}$ with itself, we first calculate its special subspace.  This is defined (see \cref{app:NGK}) to be the (vector space) quotient of $\Kac{1,3}$ by the action of the algebra generated by the Virasoro modes $L_n$, with $n \le -2$, and superfield modes $G_j$, with $j \le -\frac{3}{2}$, leaving only linear combinations of vectors in which $L_{-1}$ and $G_{-1/2}$ act on $v$.  Imposing the \sv{} relation \eqref{SV:K13}, we find that $L_{-1} G_{-1/2} v = \frac{1}{2} G_{-3/2} v$ must be set to $0$ in the special subspace; generalising this shows that the special subspace is three-dimensional:
\begin{equation}
\spsub{\Kac{1,3}} = \vspn \set{v, G_{-1/2}v, L_{-1}v}.
\end{equation}

We will first determine the depth $0$ truncation of the fusion product $\Kac{1,3} \fuse \Kac{1,3}$.  Naturally enough, this requires the depth $0$ truncation of $\Kac{1,3}$.  This subspace is obtained by quotienting by the action of all \ns{} monomials with negative indices:
\begin{equation}
\Kac{1,3}^0 = \vspn \set{v}.
\end{equation}
The depth $0$ truncated fusion product is constructed within the tensor product of these two spaces.  Thus,
\begin{equation} \label{eq:BasisK13xK13}
\sqbrac{\Kac{1,3} \fuse \Kac{1,3}}^0 \subseteq \spsub{\Kac{1,3}} \otimes_{\CC} \Kac{1,3}^0 = \vspn \set{v \otimes v, L_{-1} v \otimes v \,\middle\vert\, G_{-1/2} v \otimes v}.
\end{equation}
The basis vectors here have been partitioned into even and odd parities, using a vertical delimiter, recalling that $v$ has been assumed to be even.  To determine which subspace of this three-dimensional tensor product is the depth $0$ fusion product, we search for spurious states.  These are linear dependence relations (see \cref{app:NGK}) that may be derived in $\spsub{\Kac{1,3}} \otimes_{\CC} \Kac{1,3}^0$ when we impose the \ns{} algebra action defined by the fusion coproduct formulae.  Inspection of the composition factors \eqref{CompFact:K13xK13} shows that there must be two vectors of conformal weight $0$ in the depth $0$ product, hence there can be at most one spurious state.

We search for spurious states by implementing the \sv{} relation \eqref{SV:K13}.  This will require the following cases of the master formulae \eqref{eq:Master} derived in \cref{app:Coprod}:
\begin{subequations}
\begin{align}
\coproduct{G_{-1/2}} &= G_{-1/2} \otimes \wun + \mu_1 \: \wun \otimes G_{-1/2}, \label{M1:G-1/2} \\
\coproduct{L_{-1}} &= L_{-1} \otimes \wun + \wun \otimes L_{-1}, \label{M1:L-1} \\
\coproduct{G_{-3/2}} &= G_{-1/2} \otimes \wun - \cdots + \mu_1 \: \wun \otimes G_{-3/2}, \label{M2:G-3/2} \\
G_{-3/2} \otimes \wun &= \coproduct{G_{-3/2}} + \cdots + \mu_1 \sqbrac{\wun \otimes G_{-1/2} + \cdots}. \label{M3:G-3/2}
\end{align}
\end{subequations}
Here, $\coproductsymb$ is the fusion coproduct, $\mu_1 = \pm 1$ is the parity of $w_1$ when the formula is applied to $w_1 \otimes w_2$, and the dots stand for infinite numbers of omitted terms which will not contribute to this calculation.  First, we note that all Virasoro and \ns{} modes, except $L_0$, will act as the zero operator on a depth $0$ space.  In particular, $\coproduct{G_{-1/2}} = \coproduct{L_{-1}} = \coproduct{G_{-3/2}} = 0$, so that
\begin{align} \label{eq:ExampleCalculation}
0 &= \coproduct{G_{-3/2}} v \otimes v = G_{-1/2} v \otimes v + v \otimes G_{-3/2} v = G_{-1/2} v \otimes v + 2 \: v \otimes L_{-1} G_{-1/2} v \notag \\
&= G_{-1/2} v \otimes v - 2 \: L_{-1} v \otimes G_{-1/2} v = G_{-1/2} v \otimes v + 2 \: G_{-1/2} L_{-1} v \otimes v = G_{-1/2} v \otimes v + 2 \: L_{-1} G_{-1/2} v \otimes v \notag \\
&= G_{-1/2} v \otimes v + G_{-3/2} v \otimes v = G_{-1/2} v \otimes v + v \otimes G_{-1/2} v = 0.
\end{align}
In this calculation, we have used \eqref{M2:G-3/2}, \eqref{SV:K13}, \eqref{M1:L-1}, \eqref{M1:G-1/2}, then the commutation relations \eqref{eq:CommN=1}, \eqref{SV:K13} again, \eqref{M3:G-3/2}, and finally \eqref{M1:G-1/2} again.  We have also assumed that the \hws{} $v$ has even parity.  In any case, the \rhs{} of \eqref{eq:ExampleCalculation} is identically zero which means that we have failed to find a spurious state.  Replacing $v \otimes v$ by $G_{-1/2} v \otimes v$ or $L_{-1} v \otimes v$ in this calculation likewise fails to uncover any spurious states.

We therefore assert that there are no spurious states to find and that the depth $0$ fusion product is three-dimensional.  It only remains to determine the action of $L_0$, that of the other modes being trivial.  To this end, we need three additional auxiliary formulae:
\begin{subequations}
\begin{align}
\coproduct{L_0} &= L_{-1} \otimes \wun + L_0 \otimes \wun + \wun \otimes L_0, \label{M1:L0} \\
L_{-2} \otimes \wun &= \coproduct{L_{-2}} + \cdots + \wun \otimes L_{-1} - \wun \otimes L_0 + \cdots, \label{M3:L-2} \\
L_{-1}^2 v &= -\frac{1}{2} G_{-3/2} G_{-1/2} v + L_{-2} v. \label{SV':K13}
\end{align}
\end{subequations}
The last is a consequence of the \sv{} relation \eqref{SV:K13}.  The action of $L_0$ is now given by
\begin{subequations}
\begin{align}
\coproduct{L_0} v \otimes v &= L_{-1} v \otimes v, \\
\coproduct{L_0} G_{-1/2} v \otimes v &= L_{-1} G_{-1/2} v \otimes v + \frac{1}{2} G_{-1/2} v \otimes v = \frac{1}{2} G_{-3/2} v \otimes v + \frac{1}{2} G_{-1/2} v \otimes v \notag \\
&= \frac{1}{2} v \otimes G_{-1/2} v + \frac{1}{2} G_{-1/2} v \otimes v = 0, \\
\coproduct{L_0} L_{-1} v \otimes v &= L_{-1}^2 v \otimes v + L_{-1} v \otimes v = -\frac{1}{2} G_{-3/2} G_{-1/2} v \otimes v + L_{-2} v \otimes v + L_{-1} v \otimes v \notag \\
&= \frac{1}{2} G_{-1/2} v \otimes G_{-1/2} v + v \otimes L_{-1} v + L_{-1} v \otimes v = \frac{1}{2} G_{-1/2}^2 v \otimes v = \frac{1}{2} L_{-1} v \otimes v.
\end{align}
\end{subequations}
With respect to the ordered basis \eqref{eq:BasisK13xK13}, we have
\begin{equation}
\coproduct{L_0} = 
\begin{amatrix}{cc|c}
0 & 0           & 0 \\
1 & \frac{1}{2} & 0 \\
\hline
0 & 0           & 0
\end{amatrix}
,
\end{equation}
where we have partitioned the matrix to indicate the separation into even and odd basis elements.  We conclude that the depth $0$ fusion product is spanned by two vectors of conformal weight $0$, one even and one odd, and one even vector of weight $\frac{1}{2}$.

The depth $0$ result therefore accounts for three of the six composition factors of the fusion product, namely both of the $\Irr{0}$ factors and one of the $\Irr{1/2}$ factors.  The remaining factors, $\Irr{1/2}$, $\Irr{3/2}$ and $\Irr{5}$, must appear as descendants of these via the action of the negative modes; otherwise, they would have appeared in the depth $0$ calculation.  The factor $\Irr{1/2}$ can only descend from one of the $\Irr{0}$ factors, but once this is fixed there are still three consistent possibilities, ignoring parities, for identifying $\Irr{3/2}$ and $\Irr{5}$ as descendants:
\begin{equation}
\parbox[c]{0.85\textwidth}{
\scalebox{0.85}{
\begin{tikzpicture}[->,node distance=1cm,>=stealth',semithick]
  \node[] (1) {$0$:};
  \node[] (1a) [below of=1] {$\frac{1}{2}$:};
  \node[] (1b) [below of=1a] {$\frac{3}{2}$:};
  \node[] (1c) [below of=1b] {$5$:};
  \node[sv] (2) [right = 1cm of 1] {};
  \node[sv] (2b) [right = 1cm of 1b] {};
  \path[] (2) edge (2b);
  \node[sv] (3) [right = 2.5cm of 1] {};
  \node[sv] (3a) [right = 2.5cm of 1a] {};
  \path[] (3) edge (3a);
  \node[sv] (4a) [right = 4cm of 1a] {};
  \node[sv] (4c) [right = 4cm of 1c] {};
  \path[] (4a) edge (4c);
  \node[sv] (5) [right = 6.5cm of 1] {};
  \node[sv] (6) [right = 8cm of 1] {};
  \node[sv] (6a) [right = 8.5cm of 1a] {};
  \node[sv] (6b) [right = 7.5cm of 1b] {};
  \path[] (6) edge (6a)
          (6) edge (6b);
  \node[sv] (7a) [right = 9.5cm of 1a] {};
  \node[sv] (7c) [right = 9.5cm of 1c] {};
  \path[] (7a) edge (7c);
  \node[sv] (8) [right = 12cm of 1] {};
  \node[sv] (9) [right = 13.5cm of 1] {};
  \node[sv] (9a) [right = 14cm of 1a] {};
  \node[sv] (9b) [right = 13cm of 1b] {};
  \node[sv] (9c) [right = 13.5cm of 1c] {};
  \path[] (9) edge (9a)
          (9) edge (9b)
          (9a) edge (9c)
          (9b) edge (9c);
  \node[sv] (10a) [right = 15cm of 1a] {};
\end{tikzpicture}
\ .}} \label{pic:3Poss}
\end{equation}
To distinguish between them, we must construct the fusion product to greater depth.

We therefore turn to the depth $\frac{1}{2}$ calculation in which vectors are set to zero if they may be obtained from other vectors by acting with linear combinations of \ns{} monomials whose indices are negative and sum to at most $-1$.  The special subspace of $\Kac{1,3}$ does not change, but now we consider its depth $\frac{1}{2}$ truncation which is spanned by $v$ and $G_{-1/2} v$.  Thus, the depth $\frac{1}{2}$ fusion product will be contained within a six-dimensional space:
\begin{equation} \label{eq:BasisK13xK13'}
\sqbrac{\Kac{1,3} \fuse \Kac{1,3}}^{1/2} \subseteq \vspn \set{v \otimes v, L_{-1} v \otimes v, G_{-1/2} v \otimes G_{-1/2} v \,\middle\vert\, G_{-1/2} v \otimes v, v \otimes G_{-1/2} v, L_{-1} v \otimes G_{-1/2} v}.
\end{equation}
Comparing with the three possible structures \eqref{pic:3Poss} for the fusion product, we see that allowing descendants by $G_{-1/2}$, as well as the depth zero vectors, always leads to five depth $\frac{1}{2}$ vectors with conformal weights $0$, $0$, $\frac{1}{2}$, $\frac{1}{2}$ and $1$.  This indicates that there is precisely one spurious state to find.

The calculation proceeds in much the same manner as before.  The difference is that because we are computing to depth $\frac{1}{2}$, we may no longer assert that $\coproduct{G_{-1/2}} = 0$ (nor that $\coproduct{G_{+1/2}} = 0$).  Using \eqref{M2:G-3/2}, \eqref{SV:K13} and \eqref{M1:L-1}, we quickly arrive at
\begin{align}
0 &= \coproduct{G_{-3/2}} v \otimes v = G_{-1/2} v \otimes v + v \otimes G_{-3/2} v = G_{-1/2} v \otimes v + 2 \: v \otimes L_{-1} G_{-1/2} v \notag \\
&= G_{-1/2} v \otimes v - 2 \: L_{-1} v \otimes G_{-1/2} v.
\end{align}
The \rhs{} has been expressed in terms of the basis elements \eqref{eq:BasisK13xK13'} and the fact that it does not vanish identically means that we have found a spurious state.  More precisely, it means that this relation must be imposed in the depth $\frac{1}{2}$ fusion product.  We have searched for more independent spurious states, but found none in accord with the structural arguments above.

Imposing this relation reduces the dimension of the depth $\frac{1}{2}$ fusion product from $6$ to $5$.  Computing the action of $L_0$ on this space is now straight-forward.  With respect to the ordered basis consisting of the first five elements of the \rhs{} of \eqref{eq:BasisK13xK13'}, we obtain
\begin{equation}
\coproduct{L_0} = 
\begin{amatrix}{ccc|cc}
0 & 0           & 0            & 0           & 0           \\
1 & 0           & -\frac{1}{2} & 0           & 0           \\
0 & \frac{1}{2} & 1            & 0           & 0           \\
\hline
0 & 0           & 0            & \frac{1}{2} & \frac{1}{2} \\
0 & 0           & 0            & \frac{1}{2} & \frac{1}{2}
\end{amatrix}
\sim
\begin{amatrix}{ccc|cc}
0 & 0           & 0           & 0 & 0 \\
0 & \frac{1}{2} & 1           & 0 & 0 \\
0 & 0           & \frac{1}{2} & 0 & 0 \\
\hline
0 & 0           & 0           & 0 & 0 \\
0 & 0           & 0           & 0 & 1
\end{amatrix}
,
\end{equation}
where we also indicate the Jordan canonical form.  While we do find the expected conformal weights, more interesting is the presence of a rank 2 Jordan block for the weight $\frac{1}{2}$, indicating the presence of a staggered submodule in the fusion product (see \cref{app:StagMod}).

To determine which of the three possibilities of \eqref{pic:3Poss} is realised by the fusion product, we can repeat the above computations to depth $\frac{3}{2}$ and show that the fusion product has no submodule isomorphic to $\Irr{0}$, that is that no weight $0$ vector is annihilated by both singular combinations $G_{-1/2}$ and $L_{-1} G_{-1/2} - \frac{1}{2} G_{-3/2}$.  This fact implies that the fusion product corresponds to the leftmost possibility in \eqref{pic:3Poss}.  We mention that the required computation is rather tedious by hand, involving one spurious state in a $12$-dimensional space, but is practically instantaneous in our computer algebra implementation.

The full structure of the fusion product is
\begin{equation} \label{FR:K13xK13}
\parbox[c]{0.25\textwidth}{
\scalebox{0.75}{
\begin{tikzpicture}[->,node distance=1cm,>=stealth',semithick]
  \node[] (1) {$0$:};
  \node[] (1a) [below of=1] {$\frac{1}{2}$:};
  \node[] (1b) [below of=1a] {$\frac{3}{2}$:};
  \node[] (1c) [below of=1b] {$5$:};
  \node[sv] (2) [right = 1cm of 1] {};
  \node[sv] (2b) [right = 1cm of 1b] {};
  \path[] (2) edge (2b);
  \node[sv] (3) [right = 2.5cm of 1] {};
  \node[sv] (3a) [right = 2.5cm of 1a] {};
  \path[] (3) edge (3a);
  \node[sv] (4a) [right = 4cm of 1a] {};
  \node[sv] (4c) [right = 4cm of 1c] {};
  \path[] (4a) edge (4c)
          (4a) edge (3)
          (4a) edge (3a)
          (4c) edge (3a);
\end{tikzpicture}
}}
\qquad \Ra \qquad \Kac{1,3} \fuse \Kac{1,3} = \Kac{1,1} \oplus \Stag{1,4}{0,1},
\end{equation}
where $\Stag{1,4}{0,1}$ denotes a staggered module described by the short exact sequence
\begin{equation} \label{ses:K13K15}
\dses{\Kac{1,3}}{}{\Stag{1,4}{0,1}}{}{\Kac{1,5}}.
\end{equation}
The notation here derives from \eqref{ses:K13K15} in that $\Stag{1,4}{0,1}$ has a submodule isomorphic to the Kac module with labels $(r,s) = (1,4) - (0,1) = (1,3)$ and the quotient by this submodule is isomorphic to the Kac module with labels $(r,s) = (1,4) + (0,1) = (1,5)$.  We will use the obvious extension of this notation to describe more general staggered modules in what follows (see \cref{app:Stag}).\footnote{We emphasise that this notation differs from a similar notation $\mathcal{R}_{r,s}^{a,b}$ that has been used to indicate certain modules over the Virasoro \cite{RasFus07,RasFus07b} and \ns{} \cite{PeaLog14} algebras. These modules are believed to arise in the continuum scaling limit of certain statistical models via a lattice fusion prescription and are conjectured to have Jordan blocks for $L_0$ of rank $2$, if exactly one of $a$ and $b$ is non-zero, and rank $3$, if both $a$ and $b$ are non-zero. If the rank-$2$ module $\mathcal{R}_{r,s}^{a,b}$ is staggered, then the two notations are believed to agree: $\mathcal{R}_{r,s}^{a,b}=\Stag{r,s}{a,b}$.}

With our depth $\frac{3}{2}$ computation, we are now able to check every aspect of \eqref{FR:K13xK13} except for explicitly verifying the arrow from the \ssv{} of conformal weight $5$ to the $L_0$-eigenvector of weight $\frac{1}{2}$.  Unfortunately, this would require computing fusion truncations to depth $5$ which is well beyond the current limits of our computer.  Below, we will discuss an alternative means of checking that this arrow is present.

However, having determined that the fusion product involves a staggered module $\Stag{1,4}{0,1}$, we have to determine if the structure depicted in \eqref{FR:K13xK13}, equivalently if the short exact sequence \eqref{ses:K13K15}, completely specifies its isomorphism class.  The general theory states that this isomorphism class is characterised by its logarithmic coupling $\logcoup{1,4}{0,1}$ which may be determined, in this example, as follows (see \cref{app:LogCoup} for generalities):  Let $x \in \Stag{1,4}{0,1}$ have conformal weight $0$, so that $G_{-1/2} x$ is singular.  Choose any $y \in \Stag{1,4}{0,1}$ satisfying $\brac{L_0 - \frac{1}{2}} y = G_{-1/2} x$ and note that $G_{1/2} y$ must be proportional to $x$.  The constant of proportionality is $\logcoup{1,4}{0,1}$.

We may compute $\logcoup{1,4}{0,1}$ within the depth $\frac{1}{2}$ fusion product by computing $\coproduct{G_{-1/2}}$ and $\coproduct{G_{+1/2}}$.  In the basis consisting of the first five elements of the \rhs{} of \eqref{eq:BasisK13xK13'}, we find that
\begin{equation}
\coproduct{G_{-1/2}} = 
\begin{amatrix}{ccc|cc}
0 & 0           & 0           & 0  & 0  \\
0 & 0           & 0           & 1  & -1 \\
0 & 0           & 0			  & -1 & 1  \\
\hline
1 & \frac{1}{2} & \frac{1}{2} & 0  & 0  \\
0 & \frac{1}{2} & \frac{1}{2} & 0  & 0		 		   		  
\end{amatrix}
, \qquad \coproduct{G_{+1/2}} = 
\begin{amatrix}{ccc|cc}
0 & 0           & 0           & 0 & 0 \\
0 & 0           & 0           & 1 & 0 \\
0 & 0           & 0			  & 0 & 1 \\
\hline
1 & 1           & \frac{1}{2} & 0 & 0 \\
0 & \frac{1}{2} & 0           & 0 & 0		 		   		  
\end{amatrix}
.
\end{equation}
The element $x \in \Stag{1,4}{0,1}$ may be identified (in the depth $\frac{1}{2}$ product) with the vector $(0,0,0\,\vert\, 1,-1)^T$.  We then solve $\brac{L_0 - \frac{1}{2}} y = G_{-1/2} x$, giving $y = (0,-2,-2\,\vert\, 0,0)^T$ modulo arbitrary multiples of $G_{-1/2} x$ and so $G_{1/2} y = (0,0,0\,\vert\, -3,-1)^T = -x-2z$, where $z = (0,0,0\,\vert\, 1,1)^T$ has conformal weight $1$.  Now, this appears to contradict the fact that $G_{1/2} y$ must be proportional to $x$.  This is down to a subtlety with the computation of $\coproduct{G_{1/2}}$.  This mode should not be regarded as mapping the depth $\frac{1}{2}$ fusion product into itself, but rather as a map from the depth $\frac{1}{2}$ product into the depth $0$ product.  The vector $z$, being of conformal weight $1$ and thus not in the depth $0$ product, should therefore be set to $0$ in order to arrive at the correct result:  $G_{1/2} y = -x$.  We therefore conclude that $\logcoup{1,4}{0,1} = -1$ and identify the fusion product as
\begin{equation} \label{FR:K13xK13'}
\Kac{1,3} \fuse \Kac{1,3} = \Kac{1,1} \oplus \Stag{1,4}{0,1}(-1).
\end{equation}

Actually, we can refine this even further by keeping track of parities.  We assumed in our computations that the minimal conformal weight vectors of both copies of $\Kac{1,3}$ were even.  The same is true for the summand $\Kac{1,1}$ found above, though the vector $x$ of minimal conformal weight in $\Stag{1,4}{0,1}(-1)$ is odd.  A maximally precise version of \eqref{ses:K13K15} and \eqref{FR:K13xK13'} is therefore
\begin{equation}
\Kac{1,3}^+ \fuse \Kac{1,3}^+ = \Kac{1,1}^+ \oplus \Stag{1,4}{0,1}(-1)^-, \qquad
\dses{\Kac{1,3}^-}{}{\Stag{1,4}{0,1}(-1)^-}{}{\Kac{1,5}^+}.
\end{equation}

We have also confirmed the logarithmic coupling $\logcoup{1,4}{0,1} = -1$ using the method described in \cite{RidLog07}.  This succeeds because, in this case, the exact sequence \eqref{ses:K13K15} fixes the isomorphism class of the staggered module $\Stag{1,4}{0,1}$ completely.   We omit the rather tedious calculations.  Instead, we indicate how to confirm this value using the heuristic formula \eqref{eq:LogCoupForm} developed in \cite{VasInd11}.  For this, we first perturb the parameter $t$, hence the central charge $c$ and Kac weights $h_{r,s}$, as in \eqref{eq:Perturb} to $\func{t}{\eps} = t + \eps = \frac{1}{2} + \eps$.  We then compute the scalar product
\begin{equation}
\inner{\func{x}{\eps}}{G_{1/2} G_{-1/2} \func{x}{\eps}} = \inner{\func{x}{\eps}}{2 L_0 \func{x}{\eps}} = 2 \func{h_{1,3}}{\eps} = 2 \eps
\end{equation}
in the (poorly characterised) perturbed theory in which $\func{x}{\eps}$ is a \hws{} of conformal weight $\func{h_{1,3}}{\eps} = \eps$.  Substituting into \eqref{eq:LogCoupForm}, we obtain
\begin{equation}
\logcoup{1,4}{0,1} = \frac{8t^2}{0 - \brac{5^2-3^2} t^2} \lim_{\eps \ra 0} \frac{2 \eps}{\eps} = -1,
\end{equation}
in agreement with the explicit depth $\frac{1}{2}$ fusion construction.

Finally, we point out that one can also arrive at the leftmost possibility in \eqref{pic:3Poss}, without performing depth $\frac{3}{2}$ calculations, by instead appealing to the theory of staggered modules (\cref{app:StagMod}).  This is generally far more efficient than explicitly constructing the truncated fusion product, especially when the depth required for a complete identification becomes large.  First, there are two independent vectors of conformal weight $\frac{1}{2}$ and we know that only one, $w$ say, is descended from a weight $0$ vector.  As the weight $0$ vectors are eigenvectors of $L_0$, so is $w$.  Thus, $w$ is the $L_0$-eigenvector in the Jordan block at weight $\frac{1}{2}$.  Now, choose a Jordan partner $y$ for $w$ so that $\brac{L_0 - \frac{1}{2}} y = w$.  Then, \cref{prop:StagAnn} shows that $w$ cannot have a singular descendant of weight $5$ unless this descendant also has a Jordan partner (take $U$ there to be the singular combination of weight $\frac{9}{2}$ that annihilates $\pi y$).  This rules out the rightmost diagram in \eqref{pic:3Poss} --- the corresponding staggered module simply does not exist.

One can similarly rule out the middle diagram using some deeper structural results for staggered modules.  In this case, the purported staggered module $\Stag{}{}(-1)$ would be described by the following exact sequence:
\begin{equation} \label{es:NonExistentStag}
\dses{\frac{\Ver{0}}{\Ver{3} + \Ver{5}}}{}{\Stag{}{}(-1)}{}{\frac{\Ver{1/2}}{\Ver{3}}}.
\end{equation}
Here, we indicate the required logarithmic coupling which was obtained from a depth $\frac{1}{2}$ computation.  If we replace the third module in this sequence by its Verma cover $\Ver{1/2}$, then a staggered module $\Stag{}{}(-1)'$ with this new exact sequence may be shown to exist (and be unique up to isomorphism) using the same methods that were employed in \cite{RidSta09} for Virasoro staggered modules.  Moreover, a staggered module $\Stag{}{}(-1)$ with sequence \eqref{es:NonExistentStag} will exist if and only if there exists a \sv{} of conformal weight $3$ in $\Stag{}{}(-1)'$.  It is not difficult to check this explicitly with a computer implementation of $\Stag{}{}(-1)'$ --- the result is that no such \sv{} exists, hence that $\Stag{}{}(-1)$ does not exist either.  This rules out the middle diagram.

A similar calculation may be used to verify the presence of the arrow in \eqref{FR:K13xK13} from the conformal weight $5$ \ssv{} to the weight $\frac{1}{2}$ \sv{} $G_{-1/2} x$.  If this arrow were not present, then this weight $5$ vector would have to be singular in $\Stag{1,4}{0,1}(-1)$.  Again, an explicit search for a \sv{} of this weight leads to no solutions, thus verifying the arrow.  In principle, the arrow could instead point to the weight $0$ \sv{} $x \in \Stag{1,4}{0,1}(-1)$.  However, this is easy to rule out because all of the positive weight vectors in the submodule generated by $x$ actually belong to that generated by $G_{-1/2} x$.  More generally, we may appeal to the \ns{} generalisation of the Projection Lemma \cite[Lem.~5.1]{RidSta09} to identify the targets of such arrows (assuming we have shown that said arrows exist).

To summarise, we have completely identified the fusion product considered above by combining information obtained from four distinct sources.  First, the Verlinde formula \eqref{eq:TheVerlindeFormula2} decomposed the corresponding Grothendieck fusion product, giving us the composition factors of the fusion product.  Second, applying the \NGK{} algorithm to depth $0$ indicated which composition factors were descendants of others and which were not, while the depth $\frac{1}{2}$ computation uncovered a rank $2$ Jordan block for the action of $L_0$ and determined the corresponding logarithmic coupling.  We also determined the parities of the indecomposable direct summands of the fusion product from these computations.  Third, the structure theory for staggered modules allowed us to completely fix the rest of the structure of the fusion product, except for one arrow in our diagrammatic representation of the structure that may or may not have been present.  Fourth, the presence of this arrow was confirmed by explicitly showing the non-existence of a \sv{}, of the appropriate conformal weight, in the staggered module.

The logic that led us to the structure of this fusion product is fairly typical.  We have employed it to analyse many further examples of \ns{} fusion products.  The conclusions that we have drawn from these analyses are reported in the next section.

\section{Results} \label{sec:Results}

In this section, we summarise the results that we have obtained by combining the character product rules \eqref{GrFR:KxK} with explicit \NGK{} fusion computations and the structure theory of staggered modules.  As was explained in the previous section, this combination allows us to significantly reduce the depth to which the fusion algorithm must be applied in order to completely identify the product.  For brevity, we have only considered fusion rules between Kac modules, restricting to the central charges $c = \frac{3}{2}$, $-\frac{5}{2}$, $-\frac{81}{10}$, $0$, $-\frac{21}{4}$ and $\frac{7}{10}$, corresponding to $(p,p') = (1,1)$, $(1,3)$, $(1,5)$, $(2,4)$, $(2,8)$ and $(3,5)$, respectively.  The results obtained suggest conjectures for certain classes of general Kac fusion rules which we describe below.

\subsection{Fusing $\bm{\Kac{r,1}}$ with $\bm{\Kac{1,s}}$} \label{sec:Kr1K1s}

Perhaps the simplest Kac module fusion products are those involving a ``first row'' module and a ``first column'' one.  In this case, the proposed fusion formalism for the underlying lattice models \cite{PeaLog14,MorKac15} 
requires the following fusion rule for consistency:
\begin{equation} \label{FR:Kr1K1s}
\Kac{r,1}\fuse \Kac{1,s} = \Kac{r,s} \qquad \text{(\(r,s \in 2 \ZZ_+ - 1\)).}
\end{equation}
This is certainly consistent with the corresponding character product rule \eqref{GrFR:Kr1xK1s} and we have verified it explicitly, using the \NGK{} algorithm, in many cases (see below).  The evidence is, in our opinion, sufficient to conjecture that \eqref{FR:Kr1K1s} holds in complete generality. Whilst this accords with the proposed lattice fusion calculations, we view the result as also confirming, indirectly, that we have made the correct abstract definition for Kac modules.  

We remark that when confirming the fusion product \eqref{FR:Kr1K1s}, the case in which $\Kac{r,s}$ is a corner type module, hence is semisimple, is the most computationally intensive.  To explicitly verify that each composition factor splits off as a direct sum, thereby forming the required collection of islands (as indicated in \cref{fig:KacStructures}), we must compute to the depth given by the maximal difference between the conformal weights of consecutive composition factors (when they are ordered by their conformal weight).  This quickly becomes infeasible with our implementation as $r$ and $s$ grow, so the direct evidence for corner type modules is somewhat less compelling than for the other cases.

We also note here that the computational complexity of the fusion algorithm means that we were only able to successfully confirm \eqref{FR:Kr1K1s} when the required depth was at most $4$.  However, as $p$ or $p'$ increases, the labels $r$ and $s$ requiring a given depth calculation tend to increase leading to an overall steady decrease in the feasible depths due to the increasing complexity of the \svs{} for high $r$ and $s$.  For $(p,p') = (2,8)$ and $(3,5)$, we were therefore limited to depths at most $2$.

Including parity in the fusion rule \eqref{FR:Kr1K1s} is easy:  In each case, explicit computation confirms that
\begin{equation}
\Kac{r,1}^+\fuse \Kac{1,s}^+ = \Kac{r,s}^+ \qquad \text{(\(r,s \in 2 \ZZ_+ - 1\)).}
\end{equation}
More generally, if the parities of $\Kac{r,1}$ and $\Kac{1,s}$ coincide (differ), then that of $\Kac{r,s}$ will be even (odd).  
This observation does not appear to have a simple explanation in terms of the fusion algorithm, although it is in accord with the well known principle of conservation of fermion numbers.  We will show in \cite{CanFusII15} that it follows readily from a fermionic version of the Verlinde formula.  From the lattice, it follows from considerations of the parity of the system size \cite{PeaLog14}.

To illustrate some of the simpler issues that arise with fusion computations of the form \eqref{FR:Kr1K1s}, we consider the example $\Kac{3,1}\fuse \Kac{1,3}$, for $(p,p') = (2,4)$ ($c=0$). The Verlinde formula tells us that the character is that of $\Kac{3,3}$ which means that the fusion product has two composition factors, $\Irr{1/2}$ and $\Irr{3}$. However, there are three inequivalent structural possibilities:
\begin{equation}
\parbox[c]{0.3\textwidth}{
\begin{tikzpicture}[->,node distance=1cm,scale = 0.8, transform shape,>=stealth',semithick]
  \node[main node] (1) {};
  \node[main node] (1a) [below of=1] {};
  \path[] (1) edge (1a);
  \node[sv] (2) [right = 1.5cm of 1] {};
  \node[sv] (2a) [right = 1.5cm of 1a] {};
  \path[] (2a) edge (2);
  \node[sv] (3) [right = 1.5cm of 2] {};
  \node[sv] (3a) [right = 1.5cm of 2a] {};
  \node (0) [left = 1cm of 1] {$\tfrac{1}{2}$:};
  \node     [below of=0] {$3$:};
\end{tikzpicture}
\ .}
\end{equation}
The first possibility is $\Kac{3,3}$, the expected result, the second is its contragredient dual, and the third is the direct sum $\Irr{1/2} \oplus \Irr{3}$. Constructing $\Kac{3,1}\fuse \Kac{1,3}$ to depth $0$ leads to a one-dimensional truncated space.  Comparing with the possible structures, we see that this is consistent with $\Kac{3,3}$ where the $\Irr{3}$ factor is descended from $\Irr{1/2}$, hence does not appear at depth $0$. Moreover, the depth $0$ truncations of the other two structural possibilities are two-dimensional. For this reason, a depth 0 calculation alone is sufficient to confirm that $\Kac{3,1}\fuse \Kac{1,3} = \Kac{3,3}$.

A slightly more complicated example is the $(p,p') = (1,3)$ ($c=-5/2$) fusion product $\Kac{3,1} \fuse \Kac{1,5}$.  The character is that of $\Kac{3,5}$, so the composition factors of the product are $\Irr{1/2}$, $\Irr{5/2}$ and $\Irr{4}$.  Depth $0$ computations reveal a two-dimensional truncated space with conformal weights $1/2$ and $5/2$.  It follows from the general structure theory that $\Irr{4}$ is descended from $\Irr{5/2}$, so it only remains to decide whether $\Irr{1/2}$ splits off as a direct summand or whether it is generated from $\Irr{5/2}$ through the action of the positive modes.  This requires computing to depth $2$ and the result indicates that $\Irr{1/2}$ is not a direct summand.  The structure is therefore
\begin{equation}
\parbox[c]{0.15\textwidth}{
\begin{tikzpicture}  [->,-stealth', scale = 0.7, transform shape, node distance=1.1cm]
  \node[main node, label=left:$\frac{1}{2}$] (3) [] {};
  \node[csv] (3a) [below of=3, label=left:$\frac{5}{2}$] {};
  \node[main node] (3b) [below of=3a, label=left:$4$] {};  
  \path[] (3a) edge node {} (3)
          (3a) edge node {} (3b);
\end{tikzpicture}
\ ,}
\end{equation}
confirming the fusion rule $\Kac{3,1} \fuse \Kac{1,5} = \Kac{3,5}$.

In the braided case, there may be further obstacles to overcome in completely identifying the structure of the fusion product.  For example, the product $\Kac{3,1} \fuse \Kac{1,5}$ at $(p,p') = (2,4)$ ($c=0$) involves six composition factors:  $\Irr{0}$, $\Irr{1/2}$, $\Irr{3/2}$, $\Irr{3}$, $\Irr{5}$ and $\Irr{21/2}$.  A depth $0$ calculation indicates that all but $\Irr{0}$ and $\Irr{1/2}$ are descendants and a depth $\frac{1}{2}$ calculation shows that $G_{1/2}$ maps the weight $\frac{1}{2}$ vector to that of weight $0$.  The (partial) structure of the fusion product is thus
\begin{equation} \label{Str:K31xK15c=0}
\parbox[c]{0.15\textwidth}{
\begin{tikzpicture}  [->,-stealth', scale = 0.7, transform shape, node distance=1.1cm]
 \node[main node, label = above:$0$] (1f) [] {};
 \node[csv, label = left:$\frac{1}{2}$] (2f) [below left of=1f] {};
 \node[main node, label = right:$\frac{3}{2}$] (3f) [below right of=1f] {};
 \node[subsv, label = left:$3$] (4f) [below of=2f] {};
 \node[subsv, label = right:$5$] (5f) [below of=3f] {};
 \node[subsv, label = below:$\frac{21}{2}$] (6f) [below left of=5f] {};
 \path[]
   (2f) edge node {} (1f)
   (1f) edge node {} (3f)
   (2f) edge node {} (4f)
   (2f) edge node {} (5f)
   (5f) edge node {} (6f)
   (4f) edge node {} (6f);
\end{tikzpicture}
\ ,}
\end{equation}
where it only remains to determine if there are upwards-pointing arrows emanating from the three lowest nodes.  The $N=1$ version of the Projection Lemma of \cite[Lem.~5.1]{RidSta09} rules out any such arrow from the node labelled by $\frac{21}{2}$ and from the nodes labelled by $3$ and $5$ to that labelled by $0$.  There are thus only two possible arrows:  those from $3$ or $5$ to $\frac{3}{2}$.  The presence of these arrows may be ascertained as in \cite{RidLog07}, see also \cite[Sec.~4.2.2]{MorKac15}.  If they are absent, then the fusion product would possess a \sv{} of weight $3$ or $5$, respectively.  By computing within the most general abstract module with structure \eqref{Str:K31xK15c=0} (an arbitrary extension of the \hwm{} $\Ver{1/2} / \Ver{15/2}$ by the \hwm{} $\Kac{1,1} \cong \Ver{0} / \Ver{1/2}$), we can explicitly verify that such \svs{} do not exist.  This is a relatively efficient calculation for a computer; in particular, we do not need to invoke the \NGK{} algorithm.  In this way, we arrive at the structure
\begin{equation}
\parbox[c]{0.15\textwidth}{
\begin{tikzpicture}  [->,-stealth', scale = 0.7, transform shape, node distance=1.1cm]
 \node[main node, label = above:$0$] (1f) [] {};
 \node[csv, label = left:$\frac{1}{2}$] (2f) [below left of=1f] {};
 \node[main node, label = right:$\frac{3}{2}$] (3f) [below right of=1f] {};
 \node[subsv, label = left:$3$] (4f) [below of=2f] {};
 \node[subsv, label = right:$5$] (5f) [below of=3f] {};
 \node[subsv, label = below:$\frac{21}{2}$] (6f) [below left of=5f] {};
 \path[]
   (2f) edge node {} (1f)
   (1f) edge node {} (3f)
   (2f) edge node {} (4f)
   (2f) edge node {} (5f)
   (4f) edge node {} (3f)
   (5f) edge node {} (3f)
   (5f) edge node {} (6f)
   (4f) edge node {} (6f);
\end{tikzpicture}
\ ,}
\end{equation}
from which we conclude that $\Kac{3,1} \fuse \Kac{1,5} = \Kac{3,5}$.

\subsection{Fusing near the edge} \label{sec:Near}

We do not have enough data to make conjectures concerning the general fusion rules of the Kac modules (see the following section for some complicated examples).  However, we have observed some patterns that seem to be followed when the Kac modules to be fused lie sufficiently close to the edges of the extended Kac table.  Below, we will indicate what this means precisely. For brevity, we will refer to this situation as fusing ``near the edge'' (of the extended Kac table).

Recall the following specialisations of the Kac character rules \eqref{GrFR:KxK}:
\begin{subequations}
\begin{align}
\ch{\Kac{r,1}} \Grfuse \ch{\Kac{r',s'}} &= \sideset{}{'} \sum_{r'' = \abs{r-r'}+1}^{r+r'-1} \ch{\Kac{r'',s'}}, \label{GrFR:Kr1} \\
\ch{\Kac{1,s}} \Grfuse \ch{\Kac{r',s'}} &= \sideset{}{'} \sum_{s'' = \abs{s-s'}+1}^{s+s'-1} \ch{\Kac{r',s''}}. \label{GrFR:K1s}
\end{align}
\end{subequations}
As usual, primed sums indicate that the index increases in steps of two.  If the Kac modules $\Kac{r'',s'}$ ($\Kac{r',s''}$) that appear in the above decompositions all satisfy either $r'' \le p$ or $s' \le p'$ ($r' \le p$ or $s'' \le p'$), then we conjecture that the fusion rule corresponding to \cref{GrFR:Kr1} (\cref{GrFR:K1s}) may be determined through the following procedure:
\begin{enumerate}[leftmargin=*,label=\arabic*)]
\item Write down a list of all the Kac modules $\Kac{r'',s'}$ ($\Kac{r',s''}$) from the decomposition \eqref{GrFR:KxK} in order of increasing $r''$ ($s''$). \label{it:KacList}
\item Starting from the \emph{smallest} value of $r''$ ($s''$), check whether there exists a $\Kac{\rho'',s'}$ ($\Kac{r',\sigma''}$) in the list which is the reflection of $\Kac{r'',s'}$ ($\Kac{r',s''}$) about the next boundary.  This means that $\rho''$ ($\sigma''$) must satisfy $0 < \rho'' - r'' < 2p$ ($0 < \sigma'' -s'' < 2p'$) and $\Kac{\frac{1}{2} (r'' + \rho''),s'}$ ($\Kac{r',\frac{1}{2} (s'' + \sigma'')}$) must be of boundary or corner type.\footnote{It is worth mentioning here that this boundary or corner type module may well be from the Ramond sector.}
\item If there does, then replace $\Kac{r'',s'}$ and $\Kac{\rho'',s'}$ ($\Kac{r',s''}$ and $\Kac{r',\sigma''}$) in the list by the staggered module $\Stag{\frac{1}{2} (\rho'' + r''), s'}{\frac{1}{2} (\rho'' - r''),0}$ ($\Stag{r',\frac{1}{2} (\sigma'' + s'')}{0,\frac{1}{2} (\sigma'' - s'')}$).  Any logarithmic coupling must be determined through other means.
\item Repeat with $\Kac{r'',s'}$ ($\Kac{r',s''}$), where $r''$ ($s''$) is the next-highest value.  Once all values are exhausted, the list consists of the direct summands of the fusion product.
\end{enumerate}
Similar conjectures were made for certain classes of Virasoro Kac modules in \cite{EbeVir06,RasFus07,RasFus07b,RidPer07}.  We have checked that this procedure gives results that are consistent (up to the values of any logarithmic couplings) with our explicit fusion computations.  \cref{app:Results} lists, for each central charge considered (except $c=\frac{3}{2}$ for reasons that are explained in \cref{sec:32}), those computations for which these checks have been performed.

We illustrate this procedure with a few examples.  First, take $(p,p')=(1,3)$ ($c=-\frac{5}{2}$) and consider $\Kac{1,3} \fuse \Kac{1,5}$.  The list, for this product, is $\Kac{1,3}$, $\Kac{1,5}$, $\Kac{1,7}$, according to \eqref{GrFR:K1s}.  Since $\Kac{1,3}$ is of corner type, it has no reflection.  However, $\Kac{1,5}$ reflects onto $\Kac{1,7}$ about the (Ramond) corner type module $\Kac{1,6}$, so they are replaced by $\Stag{1,6}{0,1}$.  Computing the logarithmic coupling using \NGK{} fusion or \eqref{eq:LogCoupForm} then gives
\begin{equation} \label{FR:K13K15}
\Kac{1,3} \fuse \Kac{1,5} = \Kac{1,3} \oplus \Stag{1,6}{0,1}(-2) \qquad \text{(\(c = -\tfrac{5}{2}\)).}
\end{equation}
Similar arguments for $(p,p')=(2,4)$ result in
\begin{equation}
\Kac{3,1} \fuse \Kac{3,3} = \Stag{2,3}{1,0}(\tfrac{1}{2}) \oplus \Kac{5,3}, \quad
\Kac{1,5} \fuse \Kac{1,5} = \Stag{1,4}{0,3}(-4) \oplus \Stag{1,4}{0,1}(-1) \oplus \Kac{1,9} \qquad \text{(\(c = 0\)).}
\end{equation}
We mention that in these examples, the original list of Kac modules did not by itself uniquely determine which list members are combined to form a staggered module.  This is where it is important to start the above procedure with the smallest value of the appropriate Kac label.  For example, with $\Kac{1,5} \fuse \Kac{1,5}$, we combined $\Kac{1,1}$ with $\Kac{1,7}$ to correctly identify $\Stag{1,4}{0,3}(-4)$ as a direct summand, instead of combining $\Kac{1,7}$ with $\Kac{1,9}$.

The parities of the modules obtained by fusing near the edge of the extended Kac table are easily determined.  Assuming that the Kac modules being fused are both assigned an even parity, we find that the parities of the Kac modules in the ordered list constructed in step \ref{it:KacList} above always alternate, starting (and therefore ending) with even parity.  This is also consistent with lattice expectations \cite{PeaLog14}.  As an example, take $(p,p')=(2,4)$ ($c=0$) and consider the fusion product $\Kac{1,3}^+ \fuse \Kac{2,4}^+$.  This yields the list $\Kac{2,2}^+$, $\Kac{2,4}^-$, $\Kac{2,6}^+$ from which we deduce that the first and last Kac modules combine to form a staggered module.  As $h_{2,2} = h_{2,6}$ and both $\Kac{2,2}$ and $\Kac{2,6}$ are \hwms{} (see \cref{fig:Kacc=0}), there is no logarithmic coupling to find and the final result is
\begin{equation}
\Kac{1,3}^+ \fuse \Kac{2,4}^+ = {\Stag{2,4}{0,2}}^+ \oplus \Kac{2,4}^- \qquad \text{(\(c = 0\)).}
\end{equation}
As in the previous section, this parity rule will be derived in \cite{CanFusII15} from a fermionic Verlinde formula.

The (conjectural) procedure described above for fusing near the edge of the extended Kac table implies that the resulting fusion products may only admit Jordan blocks of rank at most $2$ for the action of $L_0$.  More precisely, it implies that these products always decompose as direct sums of Kac modules and staggered modules.  We have observed such staggered modules in every model considered except for that with $(p,p')=(1,1)$ ($c=\frac{3}{2}$).  This exceptional case is discussed separately in \cref{sec:32} below.

Finally, the procedure proposed above, for determining Kac fusion rules near the edge of the extended Kac table, does not require restricting to fusing with either $\Kac{r,1}$ or $\Kac{1,s}$, although the Kac module labels $(r'',s'')$ appearing in the character product \eqref{GrFR:KxK} may need to satisfy either $r'' \le p$ or $s'' \le p'$.  An example illustrating this is $(p,p')=(2,4)$ ($c=0$) with $\Kac{2,2} \fuse \Kac{2,2}$.  The corresponding list in this case is $\Kac{1,1}$, $\Kac{3,1}$, $\Kac{1,3}$, $\Kac{3,3}$ and an explicit calculation shows that the first two and last two members combine to form staggered modules as one would expect from the boundary reflection principle:
\begin{equation}
\Kac{2,2} \fuse \Kac{2,2} = \Stag{2,1}{1,0}(\tfrac{3}{8}) \oplus \Stag{2,3}{1,0}(\tfrac{1}{2}) \qquad \text{(\(c = 0\)).}
\end{equation}
Further examples like this are common for larger $p$ and $p'$ and we have checked in several cases that the fusion decompositions do lead to staggered modules whenever two Kac modules in the list (not ordered in these examples) are related by reflection about a boundary, see \cref{app:Results}.  However, the number of examples that we are able to fully analyse is not particularly large, explaining why we have not included these observations in the conjectured procedure.

\subsection{Fusing away from the edge} \label{sec:Away}

We first consider an interesting product, $\Kac{1,3} \fuse \Kac{2,2}$ at $(p,p')=(1,3)$ ($c=-\frac{5}{2}$), which takes us slightly out of our comfort zone near the edge of the corresponding extended Kac table.  The Kac module list given by \eqref{GrFR:K1s} is $\Kac{2,2}$, $\Kac{2,4}$ (since the character of $\Kac{2,0}$ is formally $0$, by \eqref{eq:KacCharSymm}) and we note that $(r',s'')=(2,4)$ is not near the edge --- it fails to satisfy both $r' \le p=1$ and $s'' \le p'=3$.  However, $\Kac{2,3}$ is of corner type, so one might expect that the result is the staggered module $\Stag{2,3}{0,1}$ (with some logarithmic coupling).  It is, at first, surprising that the fusion product is actually found to be the staggered module $\Stag{1,6}{0,1}(-2)$ of \eqref{FR:K13K15}.  However, there is no contradiction as the structures are identical:
\begin{equation}
\parbox[c]{0.3\textwidth}{
\begin{tikzpicture}[->,-stealth', scale = 0.8, transform shape, node distance=1cm]
 \node[main node, label = left:$\frac{1}{2}$] (l1) [] {};
 \node[main node, label = right:$\frac{1}{2}$] (r1) [right=2cm of l1] {};
 \node[main node, label = right:$0$] (r0) [above of=r1] {};
 \node[main node, label = right:$\frac{5}{2}$] (r2) [below of=r1] {};
 \node (tmp) [left of=r0] {};
 \node [above of=tmp] {\scalebox{1.25}{$\ses{\Kac{2,2}}{\Stag{2,3}{0,1}}{\Kac{2,4}}$}};
 \path[]
   (r1) edge node {} (l1)
   (r0) edge node {} (l1)
   (r1) edge node {} (r0)
   (r1) edge node {} (r2)
   (r2) edge node {} (l1);
\end{tikzpicture}
}
\hspace{0.05\textwidth}
\parbox[c]{0.3\textwidth}{
\begin{tikzpicture}  [->,-stealth', scale = 0.8, transform shape, node distance=1cm]
 \node[main node, label = left:$\frac{1}{2}$] (l1) [] {};
 \node[main node, label = right:$\frac{1}{2}$] (r1) [right=2cm of l1] {};
 \node[main node, label = left:$0$] (l0) [above of=l1] {};
 \node[main node, label = right:$\frac{5}{2}$] (r2) [below of=r1] {};
 \node (tmp) [right of=l0] {};
 \node [above of=tmp] {\scalebox{1.25}{$\ses{\Kac{1,5}}{\Stag{1,6}{0,1}}{\Kac{1,7}}$}};
 \path[]
   (r1) edge node {} (l1)
   (l0) edge node {} (l1)
   (r1) edge node {} (l0)
   (r1) edge node {} (r2)
   (r2) edge node {} (l1);
\end{tikzpicture}
\ .}
\end{equation}
The fact that $G_{-1/2}$ annihilates the weight $0$ vector of $\Kac{2,4}$ means that its action on the preimage of this vector in $\Stag{2,3}{0,1}$ will be proportional to the weight $\frac{1}{2}$ vector of the $\Kac{2,2}$ submodule.  The fusion algorithm merely shows that the proportionality constant is not zero.  In this case, the conjectured procedure of the previous section does predict the correct answer.  However, it indicates strongly that a theory of staggered extensions of Kac modules, rather than \hwms{}, will be needed to properly understand the Kac fusion rules away from the edge.

The next example is a much more structurally intricate fusion product for which the conjecture of the previous section fails spectacularly.  Here, we consider $\Kac{2,2} \fuse \Kac{2,4}$ at $(p,p')=(2,4)$ ($c=0$), the result of which is sufficiently complicated that we have not unravelled its full structure.  First, the corresponding character rule suggests that the product should involve the following Kac modules (which take us well away from the edge of the extended Kac table):
\begin{equation} \label{FR:K22xK24}
\parbox[c]{0.75\textwidth}{
\begin{tikzpicture}[->,scale=0.8,transform shape,node distance=1cm,>=stealth',semithick]
  \node[main node, label = left:$0$](1a){};
  \node[] (1aa) [above of =1a] {$\Kac{1,3}$};	
  \node[main node, label = left:$\frac{1}{2}$](1b)[below of =1a]{};
  \path[] (1a) edge (1b);
  \node[main node, label = left:$\frac{1}{2}$](2a)[right=4cm of 1a]{};
  \node[] (2aa) [above of =2a] {$\Kac{1,5}$};	
  \node[main node, label = left:$3$](2b)[below of =2a]{};
  \path[] (2a) edge (2b);
  \node[main node, label = left:$\frac{1}{2}$](3a)[right=4cm of 2a]{};
  \node[] (3aa) [above of =3a] {$\Kac{3,3}$};	
  \node[main node, label = left:$5$](3b)[below of =3a]{};
  \path[] (3a) edge (3b);
  \node[main node, label = above:$0$] (4a) [right=4cm of 3a] {};
  \node[] (22) [above of =4a] {$\Kac{3,5}$};	
  \node[csv, label = left:$\frac{1}{2}$] (4b) [below left of=4a] {};
  \node[main node, label = right:$\frac{3}{2}$] (4c) [below right of=4a] {};
  \node[subsv, label = left:$3$] (4d) [below of=4b] {};
  \node[subsv, label = right:$5$] (4e) [below of=4c] {};
  \node[subsv, label = below:$\frac{21}{2}$] (4f) [below left of=4e] {};
 \path[every node/.style={font=\sffamily\small}]
   (4b) edge (4a)
   (4a) edge (4c)
   (4b) edge (4d)
   (4b) edge (4e)
   (4d) edge (4c)
   (4e) edge (4c)
   (4e) edge (4f)
   (4d) edge (4f);
\end{tikzpicture}
\ .}
\end{equation}
In particular, this implies that the composition factors (including multiplicities) of the fusion product are $2 \: \Irr{0}$, $4 \: \Irr{1/2}$, $\Irr{3/2}$, $2 \: \Irr{3}$, $2 \: \Irr{5}$ and $\Irr{21/2}$.  If the result were the direct sum of these Kac modules, then we would see five linearly independent eigenvectors when computing with the \NGK{} algorithm to depth 0, three of eigenvalue $\frac{1}{2}$ and two of eigenvalue $0$. However, the special subspace of $\Kac{2,2}$ is only four-dimensional, so such a direct sum is ruled out.  Indeed, explicit computation shows that the depth $0$ truncation of the fusion product has conformal weights $0$ and $\frac{1}{2}$, both with multiplicity $2$, and that $L_0$ possesses two rank $2$ Jordan blocks.

Extending to depth $\frac{1}{2}$, we encounter a new feature:  our first rank $3$ Jordan block.  The placement of the $\Irr{3/2}$ factor is now determined by looking at the staggered submodule generated by the copies of $\Irr{0}$ and applying \cref{prop:StagAnn} to rule out one of the two possibilities:
\begin{equation}
\parbox[c]{0.35\textwidth}{
\begin{tikzpicture}[->,node distance=1cm,scale=0.8,transform shape,>=stealth',semithick]
  \node[] (1) {$0$:};
  \node[] (1a) [below of=1] {$\frac{1}{2}$:};
  \node[] (1b) [below of=1a] {$\frac{3}{2}$:};
  \node[sv] (2) [right = 1cm of 1] {};
  \node[sv] (2a) [right = 0.5cm of 1a] {};
  \node[cross] (2b) [right = 1.5cm of 1b] {};
  \path[] (2) edge (2a)
		  (2) edge (2b);
  \node[sv] (3) [right = 2.5cm of 1] {};
  \node[sv] (3a) [right = 2cm of 1a] {};
  \node[sv] (3b) [right = 3cm of 1b] {};
  \path[]   (3) edge (2)
			(3) edge (3a)
			(3) edge (3b)
			(3a) edge (2a);
  \node[sv] (4a) [right = 4cm of 1a] {};
  \path[] (4a) edge [bend left] (2a);
  \node[sv] (5a) [right = 5.5cm of 1a] {};
  \path [] (5a) edge (4a);
\end{tikzpicture}
\ .}
\end{equation}
We have confirmed this placement by computing to depth $\frac{3}{2}$.  However, determining the placement of the $\Irr{3}$ factors using the fusion algorithm is too computationally demanding at present.  We are therefore left with the question of where to place the composition factors $2\:\Irr{3}$, $2\:\Irr{5}$ and $\Irr{21/2}$.  The first thing to note is that $\Irr{21/2}$ can only appear as a descendant of \emph{both} an $\Irr{3}$ and an $\Irr{5}$. We are able to further restrict the possibilities by appealing to staggered module theory.\footnote{Here, we look at the constraints on the existence of the staggered subquotients of this module.  Some of these are simple consequences of \cref{prop:StagAnn}, but more powerful non-existence theorems follow from the \ns{} analogues of \cite[Sec.~7]{RidSta09}.  Similar arguments (in the Virasoro case) may be found in \cite[Sec.~4.2.2]{MorKac15}.}  The Jordan block structure thereby narrows the structural possibilities down to just two (though this is before upwards-pointing arrows are considered):
\begin{equation}
\parbox[c]{0.92\textwidth}{
\begin{tikzpicture}[->,scale =0.8, transform shape, node distance=1cm,>=stealth',semithick]
  \node[] (1) {};
  \node[] (1a) [below of=1] {};
  \node[] (1b) [below of=1a] {};
  \node[] (1c) [below of=1b] {};
  \node[] (1d) [below of=1c] {};
  \node[] (1e) [below of=1d] {};
  \node[sv] (2) [right = 1.5cm of 1] {};
  \node[sv] (2a) [right = 1cm of 1a] {};
  \node[cross] (2b) [right = 2cm of 1b] {};
  \node[sv] (3) [right = 3.5cm of 1] {};
  \node[sv] (3a) [right = 3cm of 1a] {};
  \node[sv] (3b) [right = 4cm of 1b] {};
  \node[sv] (3c) [right = 3cm of 1c] {};
  \node[sv] (3d) [right = 4cm of 1d] {};
  \node[sv] (3e) [right = 3.5cm of 1e] {};
  \node[sv] (4a) [right = 6cm of 1a] {};
  \node[sv] (5a) [right = 7.5cm of 1a] {};
  \node[sv] (5b) [right = 7cm of 1c] {};
  \node[sv] (5c) [right = 8cm of 1d] {};
  \path[] (2) edge (2a)
  		  (2) edge (2b);
  \path[] (3) edge (2)
   (3a) edge (2a)
   (3) edge (3a)
   (3) edge (3b)
   (3a) edge (3c)
   (3a) edge (3d)
   (3b) edge (3c)
   (3b) edge (3d)
   (3d) edge (3e)
   (3c) edge (3e);
  \path[] (4a) edge [bend left] (2a);
  \path[] (5a) edge (4a)
   (5a) edge (5b)
   (5a) edge (5c);
  \node[] (11) [right = 9cm of 1] {:$0$:};
  \node[] (11a) [below of=11] {:$\frac{1}{2}$:};
  \node[] (11b) [below of=11a] {:$\frac{3}{2}$:};
  \node[] (11c) [below of=11b] {:$3$:};
  \node[] (11d) [below of=11c] {:$5$:};
  \node[] (11e) [below of=11d] {:$\frac{21}{2}$:};
  \node[sv] (12) [right = 1.5cm of 11] {};
  \node[sv] (12a) [right = 1cm of 11a] {};
  \node[cross] (12b) [right = 2cm of 11b] {};
  \node[sv] (13) [right = 3.5cm of 11] {};
  \node[sv] (13a) [right = 3cm of 11a] {};
  \node[sv] (13b) [right = 4cm of 11b] {};
  \node[sv] (14a) [right = 5.5cm of 11a] {};
  \node[sv] (14c) [right = 5cm of 11c] {};
  \node[sv] (14d) [right = 6cm of 11d] {};
  \node[sv] (15a) [right = 7.5cm of 11a] {};
  \node[sv] (15c) [right = 7cm of 11c] {};
  \node[sv] (15d) [right = 8cm of 11d] {};
  \node[sv] (15e) [right = 7.5cm of 11e] {};
  \path[] (12) edge (12a)
		  (12) edge (12b);
  \path[] (13) edge (12)
   (13a) edge (12a)
   (13) edge (13a)
   (13) edge (13b);
  \path[] (14a) edge [bend left] (12a)
   (14a) edge (14c)
   (14a) edge (14d);
  \path[] (15a) edge (14a)
   (15a) edge (15c)
   (15a) edge (15d)
   (15d) edge (15e)
   (15c) edge (15e)
   (15c) edge (14c)
   (15d) edge (14d);
\end{tikzpicture}
\ .}
\end{equation}
However, we note that even if we were able to decide between these structures (by computing to greater depths, for example), then it is still not clear if there are additional parameters, generalising the logarithmic couplings of staggered modules, required to completely identify the isomorphism class of this indecomposable module. 

The existence of indecomposable \ns{} modules like these is not unexpected given that similar modules have been investigated over the Virasoro algebra \cite{EbeVir06}.  However, a full understanding of a single example of these modules is still missing (see \cite[Sec.~6.4]{CreLog13} for a recent discussion), hence attempts to further explore these structures here would be misguided.

Nevertheless, an obvious feature of $\Kac{1,3} \fuse \Kac{2,2}$ at $c=0$ is that the Kac modules \eqref{FR:K22xK24} suggested by the character product are related by boundary reflections of \emph{different} orientations.  For example, $\Kac{1,3}$ reflects onto $\Kac{1,5}$ about the boundary type (Ramond) module $\Kac{1,4}$, but it also reflects onto $\Kac{3,3}$ about the boundary module $\Kac{2,3}$.  With this observation, it is not surprising that the procedure conjectured in \cref{sec:Near} does not suffice.  It would be interesting to further investigate whether this procedure gives correct fusion results whenever this multiple orientations phenomenon is absent.  An issue to address here is what happens when the (expected) structure of the fusion product includes non-cyclic Kac modules, meaning that the module is not generated by a single \ssv{}.  Non-cyclicity is generic for Kac modules (see \cref{fig:KacStructures}), but current computational limits mean that our fusion calculations do not shed any light on this issue.

\subsection{Fusion subrings}\label{sec:fundamental}

As a (finitely generated) fusion ring is commutative and associative, it may be presented as a quotient of the polynomial ring over $\ZZ$ in which the indeterminates are the generators.  In this section, we derive conjectures for such presentations of the fusion subrings generated by either $\Kac{3,1}$ or $\Kac{1,3}$, generalising similar conjectures for Virasoro fusion subrings made in \cite{RPpoly08}.

We begin by discussing polynomial presentations for the ring of Kac characters equipped with the Verlinde product of \cref{sec:VerProd}.  In this case, the character product \eqref{GrFR:KxK} exhibits an obvious $\SLA{sl}{2}$ symmetry that suggests a presentation in terms of Chebyshev polynomials.  Indeed, it is easy to check that the subrings generated by $\ch{\Kac{3,1}}$ and $\ch{\Kac{1,3}}$ are isomorphic to $\ZZ[x]$ and $\ZZ[y]$, respectively, and that explicit isomorphisms are given by
\begin{equation} \label{eq:ChIsoPoly}
\ch{\Kac{2i+1,1}} \leftrightarrow \ChebyU{2i}{\tfrac{1}{2} \sqrt{x+1}}, \qquad \ch{\Kac{1,2j+1}} \leftrightarrow \ChebyU{2j}{\tfrac{1}{2} \sqrt{y+1}},
\end{equation}
where $\ChebyPolyU{n}$ denotes the $n$-th Chebyshev polynomial of the second kind.  Despite the square roots, the images of the Kac characters are polynomials of degrees $i$ and $j$, in $x$ and $y$, respectively.

We remark that the subring generated by $\ch{\Kac{3,1}}$ and $\ch{\Kac{1,3}}$ is obviously isomorphic to $\ZZ[x,y]$, but that this does not contain the \ns{} Kac characters with even indices.  To rectify this, one may introduce a new generator $z$ corresponding to $\ch{\Kac{2,2}}$ and show that the full Kac character ring admits the following presentation:
\begin{equation}
\ideal{\ch{\Kac{3,1}}, \ch{\Kac{2,2}}, \ch{\Kac{1,3}}} \cong \frac{\ZZ[x,y,z]}{\ideal{z^2-xy-x-y-1}}.
\end{equation}

Before we can lift \eqref{eq:ChIsoPoly} to presentations of the fusion subrings generated by either $\Kac{3,1}$ or $\Kac{1,3}$, there are a few matters to address.  First, we mention that $p=1$ implies that the fusion subring generated by $\Kac{3,1}$ is semisimple (see \cref{sec:32} below).  In particular, it is isomorphic to the character ring generated by $\ch{\Kac{3,1}}$.  The story is identical for $p'=1$ and $\Kac{1,3}$.  Second, the case $p=2$ also requires a separate treatment because fusing two (non-trivial) elements of $\ideal{\Kac{3,1}}$, such as $\Kac{3,1} \fuse \Kac{3,1}$, never gives $\Kac{3,1}$ as a direct summand.  Again, the story is identical for $p'=2$ and $\Kac{1,3}$.  For simplicity, we will therefore assume that $p\ge3$ or $p'\ge3$ when analysing the fusion subring generated by $\Kac{3,1}$ or $\Kac{1,3}$, respectively.\footnote{The mild pathology exhibited when $p$ or $p'$ is equal to $2$ is a manifestation of the fact that we are ignoring the Ramond sector fusion.  Nevertheless, the results of this section can also be applied in these cases with minor modifications.}

From here on, we will restrict to the fusion subring generated by $\Kac{3,1}$ as the results for the subring generated by $\Kac{1,3}$ then follow by interchanging labels and swapping $p$ for $p'$.  Consider the $n$-fold iterated fusion product $\Kac{3,1}^{\fuse n} \equiv \Kac{3,1} \fuse \cdots \fuse \Kac{3,1}$.  The (conjectured) procedure of \cref{sec:Near} states that for $n \le \lfloor \frac{p-1}{2} \rfloor$, this iterated fusion product will decompose as a direct sum of Kac modules $\Kac{2i+1,1}$, with $i \in \set{0, 1, \ldots, \lfloor \frac{p-1}{2} \rfloor}$.  However, when $n = \lfloor \frac{p+1}{2} \rfloor$, the fusion product will have a staggered module as a direct summand.  Unfortunately, we have not computed any fusion rules in which one of the modules being fused is staggered.

If $p$ is sufficiently large, however, the fusion rules of the staggered modules may be derived using associativity.  For example, if $p \ge 5$ is odd, then
\begin{equation}
\Kac{3,1} \fuse \Kac{p,1} = \Kac{p,1} \oplus \Stag{p,1}{2,0}
\end{equation}
implies that
\begin{equation}
\Kac{3,1} \fuse \Stag{p,1}{2,0} = (\Kac{3,1} \fuse \Kac{3,1} \fuse \Kac{p,1}) \ominus (\Kac{3,1} \fuse \Kac{p,1}) = \Kac{p,1} \oplus (\Kac{5,1} \fuse \Kac{p,1}) = 2 \: \Kac{p,1} \oplus \Stag{p,1}{2,0} \oplus \Stag{p,1}{4,0},
\end{equation}
where we ignore any logarithmic couplings.\footnote{There are further questions to address here concerning whether the coupling $\logcoup{p,1}{2,0}$ obtained, for example, from $\Kac{3,1} \fuse \Kac{p,1}$ is the same as that obtained from $\Kac{5,1} \fuse \Kac{p,1}$.  The evidence that we have collated to date indicates that the answers are always yes.}  A similar calculation shows that if $p \ge 6$ is even, then
\begin{equation}
\Kac{3,1} \fuse \Stag{p,1}{1,0} = 2 \: \Stag{p,1}{1,0} \oplus \Stag{p,1}{3,0}.
\end{equation}
One can also obtain the fusion rules involving other Kac modules and those of staggered modules with one another (for sufficiently large $p$).  Calculations of this nature lead us to conjecture that the indecomposable modules appearing in the fusion subring generated by $\Kac{3,1}$ are precisely as follows:
\begin{equation} \label{eq:SubringSpan}
\ideal{\Kac{3,1}} = \vspn_{\ZZ} \set{\Kac{2i+1,1}, \Kac{mp,1}, \Stag{kp,1}{a,0} \st i=0, 1, \ldots, \lfloor \tfrac{p-3}{2} \rfloor; \ k,m \in \ZZ_+; \ 0 < a < p; \ kp-a = mp = 1 \bmod{2}}.
\end{equation}
Here, we again ignore any logarithmic couplings.  Note that if $p$ is even, the $\Kac{mp,1}$ do not belong to this subring.

Combining the (conjectured) results of \cref{sec:Near} with associativity leads to a general conjecture for the fusion rules of the (unital) subring generated by $\Kac{3,1}$.  This is conveniently encoded in the following polynomial ring structure, generalising that presented above for the corresponding character subring.  In particular, we lift \eqref{eq:ChIsoPoly} from characters to the Kac modules in \eqref{eq:SubringSpan}:
\begin{equation} \label{eq:ModIsoPoly1}
\Kac{2i+1,1} \leftrightarrow \ChebyU{2i}{\tfrac{1}{2} \sqrt{x+1}} \quad \text{(\(i=0, 1, \ldots, \lfloor \tfrac{p-3}{2} \rfloor\)),} \qquad
\Kac{mp,1} \leftrightarrow \ChebyU{mp-1}{\tfrac{1}{2} \sqrt{x+1}} \quad \text{(\(m \in \ZZ_+\), \(mp=1 \bmod{2}\)).}
\end{equation}
To obtain the map for the staggered modules $\Stag{kp,1}{a,0}$ in \eqref{eq:SubringSpan}, we recall that \eqref{eq:ChIsoPoly} maps their characters to
\begin{align}
\ch{\Stag{kp,1}{a,0}} &= \ch{\Kac{kp-a,1}} + \ch{\Kac{kp+a,1}} \notag \\
&\leftrightarrow \ChebyU{kp-1-a}{\tfrac{1}{2} \sqrt{x+1}} + \ChebyU{kp-1+a}{\tfrac{1}{2} \sqrt{x+1}} = 2 \: \ChebyT{a}{\tfrac{1}{2} \sqrt{x+1}} \ChebyU{kp-1}{\tfrac{1}{2} \sqrt{x+1}},
\end{align}
where $\ChebyPolyT{n}$ denotes the $n$-th Chebyshev polynomial of the first kind.  Lifting this to modules, we arrive at
\begin{equation} \label{eq:ModIsoPoly2}
\Stag{kp,1}{a,0} \leftrightarrow 2 \: \ChebyT{a}{\tfrac{1}{2} \sqrt{x+1}} \ChebyU{kp-1}{\tfrac{1}{2} \sqrt{x+1}} \qquad \text{(\(k \in \ZZ_+\), \(0 < a < p\), \(kp-a = 1 \bmod{2}\)).}
\end{equation}
As the degrees of the images in \eqref{eq:ModIsoPoly1} and \eqref{eq:ModIsoPoly2} are $i$, $\frac{1}{2} (mp-1)$ and $\frac{1}{2} (kp-1+a)$, respectively, and as the coefficient of the highest degree term of each image is $1$, we see that these maps define a $\ZZ$-module isomorphism from the fusion subring $\ideal{\Kac{3,1}}$ to $\ZZ[x]$.  Our conjecture for the fusion rules is that this is actually a ring isomorphism.  Similarly, we conjecture that the map from $\ideal{\Kac{1,3}}$ to $\ZZ[y]$, defined by
\begin{equation} \label{eq:FRy}
\begin{aligned}
\Kac{1,2j+1} &\leftrightarrow \ChebyU{2j}{\tfrac{1}{2} \sqrt{y+1}} & &\text{(\(j=0, 1, \ldots, \lfloor \tfrac{p'-3}{2} \rfloor\)),} \\
\Kac{1,np'} &\leftrightarrow \ChebyU{np'-1}{\tfrac{1}{2} \sqrt{y+1}} & &\text{(\(n \in \ZZ_+\), \(np'=1 \bmod{2}\)),} \\
\Stag{1,\ell p'}{0,b} &\leftrightarrow (2 - \delta_{b=0}) \: \ChebyT{b}{\tfrac{1}{2} \sqrt{y+1}} \ChebyU{\ell p'-1}{\tfrac{1}{2} \sqrt{y+1}} & &\text{(\(\ell \in \ZZ_+\), \(0 \le b < p'\), \(\ell p'-b = 1 \bmod{2}\)),}
\end{aligned}
\end{equation}
is a ring isomorphism.

To illustrate these conjectures, we consider how \eqref{eq:FRy} allows us to compute the fusion rule $\Kac{1,3} \fuse \Kac{1,3}$ when $(p,p')=(2,4)$:
\begin{equation}
 \Kac{1,3}\fuse\Kac{1,3} \leftrightarrow yy=1+(y^2-1) \leftrightarrow \Kac{1,1}\oplus\Stag{1,4}{0,1} \qquad \text{(\(c=0\)).}
\end{equation}
The result agrees with \eqref{FR:K13xK13'}, up to the logarithmic coupling of the staggered module.  It is very easy now to predict more complicated fusion rules, as long as the modules being fused belong to either of the subrings considered above.  For example, \eqref{eq:FRy} predicts the $(p,p') = (2,4)$ fusion rule
\begin{align}
 \Stag{1,4}{0,1}\fuse\Stag{1,4}{0,3} &\leftrightarrow (y^2-1)(y^3-2y^2-y+2) \nonumber\\
 &=2(y^2-1)+2(y^4-2y^3-2y^2+2y+1)+(y^5-4y^4+2y^3+6y^2-3y-2) \nonumber\\
 &\leftrightarrow 2 \: \Stag{1,4}{0,1}\oplus 2\: \Stag{1,8}{0,1}\oplus\Stag{1,8}{0,3} \qquad \text{(\(c=0\)).}
\end{align}

We remark that these conjectures may be incorporated into the procedure described in \cref{sec:Near} for fusion near the edge of the extended Kac table.  If either of the modules to be fused is staggered, $\Stag{r,s}{a,b}$ say, then one replaces it by $\Kac{r-a,s-b} \oplus \Kac{r+a,s+b}$ when applying the character decomposition \eqref{GrFR:KxK} in the first step.

\subsection{An exceptional case: $\bm{c=\frac{3}{2}}$}\label{sec:32}

In the exceptional case where $(p,p')=(1,1)$, every Kac module has the islands structure, hence they are all semisimple.  The fusion rules therefore reduce to those for the simple Kac modules and these are exhausted, up to isomorphism, by the $\Kac{1,s}$ (in particular, $\Kac{r,1} \cong \Kac{1,r}$).  According to the conjectured prescription in \cref{sec:Near}, the fusion products of these simple modules always decompose into a direct sum of Kac modules $\Kac{1,s''}$ because corner type modules do not have reflections about boundaries.  It follows that, in this case, the general character rules \eqref{GrFR:KxK} lift to the genuine fusion rules
\begin{equation}
\Kac{r,s} \fuse \Kac{r',s'} = \sideset{}{'} \bigoplus_{r'' = \abs{r-r'}+1}^{r+r'-1} \ \sideset{}{'} \bigoplus_{s'' = \abs{s-s'}+1}^{s+s'-1} \Kac{r'',s''} \qquad \text{(\(c = \tfrac{3}{2}\)),} \label{KxK}
\end{equation}
where the primes on the direct sum symbols indicate that the summation variables increase in steps of two.  The fusion coefficients corresponding to this rule were previously derived in \cite{MilFus02}.

We remark that one can derive this result from the isomorphism $\Kac{r,s}\cong\Kac{s,r}$ (specific to this model), the conjectured fusion rule \eqref{FR:Kr1K1s}, and the definition of corner type Kac modules.  Indeed, we have
\begin{align}
 \Kac{r,s}\fuse\Kac{r',s'}&=\Big(\sideset{}{'} \bigoplus_{\rho= \abs{r-s}+1}^{r+s-1}\Kac{\rho,1}\Big)
 \fuse\Big(\sideset{}{'} \bigoplus_{\sigma=\abs{r'-s'}+1}^{r'+s'-1}\Kac{1,\sigma}\Big)
=\sideset{}{'}\bigoplus_{\rho=\abs{r-s}+1}^{r+s-1}\ \sideset{}{'} \bigoplus_{\sigma=\abs{r'-s'}+1}^{r'+s'-1}\Kac{\rho,\sigma}
\nonumber\\
 &=\sideset{}{'}\bigoplus_{\rho=\abs{r-s}+1}^{r+s-1}\ \sideset{}{'} \bigoplus_{\sigma=\abs{r'-s'}+1}^{r'+s'-1}\
  \sideset{}{'}\bigoplus_{\sigma'=\abs{\rho-\sigma}+1}^{\rho+\sigma-1}\Kac{1,\sigma'}
 =\sideset{}{'} \bigoplus_{r'' = \abs{r-r'}+1}^{r+r'-1} \ \sideset{}{'} \bigoplus_{s'' = \abs{s-s'}+1}^{s+s'-1}\
 \sideset{}{'}\bigoplus_{\sigma''=\abs{r''-s''}+1}^{r''+s''-1}\Kac{1,\sigma''} \nonumber \\
 &= \sideset{}{'} \bigoplus_{r'' = \abs{r-r'}+1}^{r+r'-1} \ \sideset{}{'} \bigoplus_{s'' = \abs{s-s'}+1}^{s+s'-1} \Kac{r'',s''}.
\end{align}
Of course, semisimplicity also proves that the character product \eqref{GrFR:Kr1xK1s} lifts to the fusion rule \eqref{FR:Kr1K1s}, when $c=\frac{3}{2}$. It follows, in particular, that the Kac modules form a closed fusion ring, without the need to introduce any staggered modules.  This does not mean, however, that staggered modules do not exist at $c=\frac{3}{2}$.  For example, one can construct staggered self-extensions of every (simple) \ns{} Kac module $\Kac{1,s}$, with $s \neq 1$, by following the arguments of \cite[Prop~7.5]{RidSta09}.

\section{Discussion and Outlook}

The results presented in this article demonstrate that the paradigm of fusion product computations using the \NGK{} algorithm is as successful for the \ns{} algebra as for the Virasoro algebra.  In particular, our computations have allowed us to formulate several conjectures for Kac module fusion rules in the logarithmic $N=1$ superconformal minimal models.  We have seen that staggered \ns{} modules, upon which $L_0$ has rank $2$ Jordan blocks, are readily encountered and that more complicated indecomposable modules can be generated on which $L_0$ has rank $3$ blocks.  These results provide, among other things, strong evidence for the conjectures made in \cite{PeaLog14} from numerical lattice-theoretic studies.  On the other hand, the reader may have noticed that our results and conjectures bear a striking resemblance to their counterparts for the Virasoro algebra, see \cite{EbeVir06,RasFus07b,RidPer07,MorKac15}.  This accords with expectations, as there are many instances in which the representation theories of the \ns{} and Virasoro algebras mirror one another.

From this point of view, the key advance made in this paper is not so much the explicit results themselves, but rather the development of the formalism that led to them.  This should be understood in the context of the extension of this formalism, to include the Ramond algebra, that will appear in \cite{CanFusII15}.  In this case, the standard module development that led to the logarithmic Verlinde formula will need to be generalised to accommodate supercharacters, ultimately resulting in a fermionic Verlinde formula generalising that of \cite{EhoFus94}.  More tellingly, the \NGK{} fusion algorithm will require significant modifications because the Ramond sector, consisting of twisted modules, forces one to deal with non-integer field-theoretic monodromies.  The required changes have been partially addressed in \cite{GabFus97}, but this work requires refinement (and simplification) before the twisted version of this algorithm can be efficiently coded and utilised.

Let us recall that the representation theory of the Ramond algebra can be significantly more involved than that of the Virasoro or \ns{} algebras.  In particular, the relation $G_0^2 = L_0 - c/24$ means that $G_0$ may act nilpotently, hence possibly non-semisimply, on weight spaces.  Such a non-semisimple $G_0$-action is observed, for example, on the Ramond Verma module of \hw{} $c/24$.  Because it seems that the majority of the difficulties in the Ramond sector may be traced back to this observation, and because nilpotent actions are the norm for the odd generators of the $N>1$ superconformal algebras, it is reasonable to regard the $N=1$ Ramond sector as a non-trivial, but accessible, toy model for the representation theory of these important superalgebras.  This gives a separate motivation for studying logarithmic $N=1$ models:  They give us an idea of what one can expect in the $N>1$ case.  As logarithmic behaviour is now recognised to be generic (rather than pathological) for \cfts{}, it makes sense to look for logarithmic structures when investigating poorly understood representation theories.\footnote{We mention an example of where this looking has paid off:  The longstanding issue of negative ``fusion coefficients'' for fractional level $\AKMA{sl}{2}$ models, see \cite{RidSL208,CreMod12}.}

We mention that such superconformal algebra studies fit rather naturally into a bigger research programme that is currently being actively pursued (see \cite{RidRel15} for further details).  Here, the point is that the superconformal algebras may be naturally constructed as quantum hamiltonian reductions of certain affine Lie superalgebras \cite{KacQua03}.  For the $N=1$ superconformal algebras, the corresponding affine Lie superalgebra is $\AKMSA{osp}{1}{2}$.  The \cfts{} corresponding to these affine superalgebras are expected to be logarithmic, in general, and the relatively tractable affine symmetry suggests that these theories will be very valuable for understanding general logarithmic behaviour.  Unfortunately, little is known about these models (but see \cite{RozQua92,SalGL106,SalSU207,GotWZN07,CreRel11,CreWAl11}).  A more thorough understanding of these affine theories is therefore warranted and we expect that the structural features of the logarithmic superconformal models will aid in this undertaking (and vice-versa).

This work also relates to affine models through the celebrated coset construction of Goddard, Kent and Olive \cite{GKO85,GKO86}.  It is well known that the simple modules of the $N=1$ minimal models arise when considering admissible \hwms{} \cite{KacWak88,KacWak89} of the $\AKMA{sl}{2}$ components of the coset.  However, as proposed in \cite{PRcoset13}, one expects to similarly realise $N=1$ Kac modules, in particular, by extending these considerations to the analogues of Kac modules over $\AKMA{sl}{2}$.  As this approach is also applicable to the infinite hierarchy of extended minimal models \cite{Date86,Date87a,Date87b,Ahn91,Ber97,MDPR14}, this will give important insight into their logarithmic counterparts. These ideas are also currently being actively pursued.

Finally, it would be of interest to explore whether our results admit a W-extended picture, generalising the
situation \cite{GabRat96,FeiLog06,ReaAss07,SemNot07,GabFus09,RasPol09,RasIrr10,WooFus10,RasWKac11,TsuTen12} for the Virasoro logarithmic minimal models. From the lattice \cite{PRR1p08,RPWperc08,RasWLM09}, this would amount to identifying a W-extended vacuum boundary condition for the Neveu-Schwarz algebra and devising the appropriate lattice implementation of fusion of the W-extended representations. Presumably, this would give rise to W-extended Neveu-Schwarz representations whose characters could then be compared with the recent results of Adamovi\'{c} and Milas \cite{AdaMil09,AdaMil08,AdaMil09b}.

\section*{Acknowledgements}

MC is supported by an Australian Postgraduate Award from the Australian Government.  JR was supported by the Australian Research Council under the Future Fellowship scheme, project number FT100100774.  DR's research is supported by the Australian Research Council Discovery Project DP1093910.  The authors thank Pierre Mathieu and Simon Wood for helpful correspondence and discussions.

\appendix

\section{Fusion and the \NGK{} Algorithm} \label{app:Fusion}

In this appendix, we review an algorithmic approach to fusion that was proposed by Nahm \cite{NahQua94} and then significantly generalised by Gaberdiel and Kausch \cite{GabInd96}.  This approach aims to construct the fusion product of two (\voa{}) modules by realising it as a vector space, in fact as a quotient of the vector space tensor product of the modules being fused, upon which the chiral algebra acts through explicitly given coproduct formulae.  These formulae are deduced from fairly straight-forward manipulations involving \opes{} \cite{GabFus94,GabFus94b} and the aforementioned quotient corresponds to imposing the condition that two seemingly different means of arriving at these coproduct formulae actually give identical results.  As we will see, this imposition follows from the mutual locality requirement for conformal fields.

The algorithm that has come to be known as \NGK{} fusion then observes that this tensor product realisation can be consistently truncated so as to explicitly construct only a quotient, preferably a finite-dimensional one, of the fusion product vector space.  It is often the case that one can completely identify a given fusion product by analysing the coproduct action on a sufficiently large truncation.  This allows one to perform fusion calculations explicitly with a computer algebra system.  There are several advantages to this approach to fusion over traditional approaches, the most important of which is that it facilitates the exploration of the new classes of representations that fusion produces.  For example, the result of fusing two simple modules need not be a direct sum of simple modules in general; in fact, the fusion product need not even be \hw{}.  The main disadvantage is that the algorithm amounts to a computationally intensive brute force construction and is therefore not well suited for general fusion calculations or theoretical studies.  There is also the issue of determining when one has determined all of the so-called \emph{spurious states}.  The algorithm itself does not guarantee termination here, so it is important to be able to check this independently.  In this article, we employ a Verlinde formula for this purpose; in some cases, one can instead use correlation function computations to confirm the result \emph{a posteriori}.

Readers who are already familiar with the \NGK{} fusion algorithm, as introduced in \cite{GabInd96}, will find little that is new here, though they may wish to skim what follows in order to familiarise themselves with our notation and nomenclature.  The \ns{} supersymmetry does lead to the appearance of additional signs, as one expects with a graded tensor product, but the ideas and implementation follow the bosonic case.  On the other hand, fusing with a Ramond module requires significant modifications to the algorithm.  We shall defer a detailed discussion of these modifications to \cite{CanFusII15}.

\subsection{Coproduct Formulae} \label{app:Coprod}

Here, we review the derivation, following \cite{GabFus94b}, of the coproduct formulae that define the action of the chiral algebra on the fusion product.  We provide almost all of the details because the generalisation to the twisted sector, necessary for Ramond fusion, is significantly more involved.  In fact, our treatment of the twisted case, which will be reported in \cite{CanFusII15}, simplifies the known formulae \cite{GabFus97} considerably.  We mention that the derivation for negative modes is only valid in the limit when one of the insertion points is sent to zero.  This is not important for practical purposes, but is expected to be relevant to demonstrating that these fusion coproducts define a tensor structure on appropriate categories of vertex operator algebra modules.

Let $\tfunc{S^{(j)}}{z}$ be chiral fields of conformal weight $h^{(j)}$ whose mode expansions take the form
\begin{equation}
\func{S^{(j)}}{z} = \sum_{n \in \ZZ - h^{(j)}} S^{(j)}_n z^{-n-h^{(j)}}.
\end{equation}
For the purposes of this article, the relevant fields are $\func{T}{z}$, with $h=2$, and $\func{G}{z}$, with $h=\frac{3}{2}$.  In what follows, we will omit the index $j$ labelling the chiral field for simplicity.  We also introduce arbitrary fields $\tfunc{\psi_1}{w_1}$ and $\tfunc{\psi_2}{w_2}$ that are local with respect to $\tfunc{S}{z}$:
\begin{equation} \label{eq:UntwistedLocality}
\func{S}{z} \func{\psi_i}{w_i} = \mu_i \func{\psi_i}{w_i} \func{S}{z}.
\end{equation}
Here, $\mu_i \in \CC \setminus \set{0}$ is the mutual locality index of $S$ with $\psi_i$ (this index also depends on $S$; the notation likewise keeps this implicit for simplicity).  

We deduce coproduct formulae for fusion by determining the natural action of the modes $S_n$ on the (radially ordered) products $\tfunc{\psi_1}{w_1} \tfunc{\psi_2}{w_2}$.  This will define an action of the $S_n$ on the tensor product of the corresponding states $\psi_1 \otimes \psi_2$, the latter being interpreted (after quotienting) as a state in the fusion product.  The starting point of the computation is the contour integral
\begin{equation} \label{eq:UntwistedCoprodInt}
\oint_{\Gamma} \inner{\phi}{\func{S}{z} \func{\psi_1}{w_1} \func{\psi_2}{w_2} \Omega} z^{n+h-1} \: \frac{\dd z}{2 \pi \ii},
\end{equation}
where the contour $\Gamma$ encloses $0$, $w_1$ and $w_2$, $\Omega$ is the vacuum, and $\phi$ is an arbitrary spectator state.  We remark that $\phi$ may even depend on other insertion points, noting that because of radial ordering, such insertion points are not enclosed by $\Gamma$.

We will assume that the fields $\tfunc{\psi_1}{w_1}$ and $\tfunc{\psi_2}{w_2}$ correspond to untwisted representations of the chiral algebra, referring to \cite{GabFus97,CanFusII15} for the twisted case.  In other words, their \opes{} with $\tfunc{S}{z}$ are characterised by modes $S_m$ with $m \in \ZZ - h$:
\begin{equation}
\func{S}{z} \func{\psi_i}{w_i} = \sum_{m \in \ZZ - h} \func{(S_m \psi_i)}{w_i} \brac{z-w_i}^{-m-h}.
\end{equation}
Inserting these \opes{} into \eqref{eq:UntwistedCoprodInt}, we see that there are no branch cuts in the integrand.  We may therefore split the contour into three simple contours around each of the (potential) singularities $w_1$, $w_2$ and $0$.  The residue at $w_1$ is computed by substituting the \ope{} of $\tfunc{S}{z}$ and $\tfunc{\psi_1}{w_1}$:
\begin{align} \label{eq:UntwContz=w1}
\oint_{w_1} \inner{\phi}{\func{S}{z} \func{\psi_1}{w_1} \func{\psi_2}{w_2} \Omega} z^{n+h-1} \: \frac{\dd z}{2 \pi \ii} &= \sum_{m \in \ZZ - h} \inner{\phi}{\func{(S_m \psi_1)}{w_1} \func{\psi_2}{w_2} \Omega} \oint_{w_1} \brac{z-w_1}^{-m-h} z^{n+h-1} \: \frac{\dd z}{2 \pi \ii} \notag \\
&= \sum_{m=-h+1}^{\infty} \binom{n+h-1}{m+h-1} w_1^{n-m} \inner{\phi}{\func{(S_m \psi_1)}{w_1} \func{\psi_2}{w_2} \Omega}.
\end{align}
The residue at $w_2$ is similarly computed, though we must first apply \eqref{eq:UntwistedLocality} before we are able to substitute the \ope{} $\tfunc{S}{z} \tfunc{\psi_2}{w_2}$:
\begin{equation} \label{eq:UntwContz=w2}
\oint_{w_2} \inner{\phi}{\func{S}{z} \func{\psi_1}{w_1} \func{\psi_2}{w_2} \Omega} z^{n+h-1} \: \frac{\dd z}{2 \pi \ii} = \mu_1 \sum_{m=-h+1}^{\infty} \binom{n+h-1}{m+h-1} w_2^{n-m} \inner{\phi}{\func{\psi_1}{w_1} \func{(S_m \psi_2)}{w_2} \Omega}.
\end{equation}
Note that \eqref{eq:UntwContz=w2} follows readily from \eqref{eq:UntwContz=w1} upon swapping $w_1$ with $w_2$ and then making the replacements $S_m \psi_1 \to \psi_1$ and $\psi_2 \to \mu_1 S_m \psi_2$.

If $n \ge -h+1$, then the integrand of \eqref{eq:UntwistedCoprodInt} has no pole at $z=0$ and the coproduct formula for $S_n$ is just
\begin{equation}
\parcoproduct{w_1,w_2}{S_n} = \sum_{m=-h+1}^n \binom{n+h-1}{m+h-1} \sqbrac{w_1^{n-m} \brac{S_m \otimes \wun} + \mu_1 w_2^{n-m} \brac{\wun \otimes S_m}} \qquad \text{(\(n \ge -h+1\)).}
\end{equation}
Here, we have extracted the action of $S_m$ on the fields in the correlator so that, for example,
\begin{equation}
\inner{\phi}{\func{(S_m \psi_1)}{w_1} \func{\psi_2}{w_2} \Omega} \lra \brac{S_m \otimes \wun}.
\end{equation}
It should be clear now why we have insisted on the spectator state $\phi$ even though it plays no role whatsoever in the analysis.  If $n \le -h$, then there is a pole at $z=0$ and we can evaluate the corresponding residue by using either \ope{}:
\begin{subequations}
\begin{align}
\oint_0 \inner{\phi}{\func{S}{z} \func{\psi_1}{w_1} \func{\psi_2}{w_2} \Omega} z^{n+h-1} \: \frac{\dd z}{2 \pi \ii} &= \sum_{m \in \ZZ - h} \oint_0 \brac{z-w_1}^{-m-h} z^{n+h-1} \: \frac{\dd z}{2 \pi \ii} \brac{S_m \otimes \wun} \label{eq:InsertOPE1} \\
&= \mu_1 \sum_{m \in \ZZ - h} \oint_0 \brac{z-w_2}^{-m-h} z^{n+h-1} \: \frac{\dd z}{2 \pi \ii} \brac{\wun \otimes S_m}. \label{eq:InsertOPE2}
\end{align}
\end{subequations}
We remark that $z$ is supposed to make a small circle around $0$, while inserting the appropriate \ope{} requires that $z$ be close to either $w_1$ or $w_2$.  We conclude that \eqref{eq:InsertOPE1} is only valid when $w_1$ is close to $0$ whereas \eqref{eq:InsertOPE2} is only valid when $w_2$ is close to $0$.  We will later send $w_1$ or $w_2$ to $0$ to arrive at a simplified form for the fusion coproducts.

Evaluating the integral in \eqref{eq:InsertOPE1}, we see that the $z=0$ contribution to the fusion coproduct $\parcoproduct{w_1,w_2}{S_n}$, with $n \le -h$, takes the form
\begin{multline} \label{eq:UntwContz=0}
\sum_{m \in \ZZ - h} \binom{-m-h}{-n-h} \brac{-w_1}^{n-m} \brac{S_m \otimes \wun} \\
= -\sum_{m=-h+1}^{\infty} \binom{n+h-1}{m+h-1} w_1^{n-m} \brac{S_m \otimes \wun} + \sum_{m=-\infty}^{-h} \binom{-m-h}{-n-h} \brac{-w_1}^{n-m} \brac{S_m \otimes \wun}.
\end{multline}
Here, we have used the binomial coefficient identity
\begin{equation}
\binom{-m-h}{-n-h} = \brac{-1}^{-n-h} \binom{m-n-1}{-n-h} = \brac{-1}^{-n-h} \binom{m-n-1}{m+h-1} = \brac{-1}^{m-n-1} \binom{n+h-1}{m+h-1}
\end{equation}
which is valid when $m \ge -h+1$ and $n \le -h$.  It is clear that the first sum on the \rhs{} of \eqref{eq:UntwContz=0} precisely cancels the contribution \eqref{eq:UntwContz=w1} from $z=w_1$, when $n \le -h$.  Similarly, starting from the evaluation of \eqref{eq:InsertOPE2} shows that the contribution \eqref{eq:UntwContz=w2} from $z=w_2$ is cancelled.

In this way, we arrive at \emph{two} formulae for the fusion coproduct of $S_n$ with $n \le -h$:
\begin{subequations}
\begin{align}
\parNcoproduct{1}{w_1,w_2}{S_n} &= \sum_{m=-\infty}^{-h} \binom{-m-h}{-n-h} \brac{-w_1}^{n-m} \brac{S_m \otimes \wun} + \mu_1 \sum_{m=-h+1}^{\infty} \binom{n+h-1}{m+h-1} w_2^{n-m} \brac{\wun \otimes S_m} \\
\parNcoproduct{2}{w_1,w_2}{S_n} &= \sum_{m=-h+1}^{\infty} \binom{n+h-1}{m+h-1} w_1^{n-m} \brac{S_m \otimes \wun} + \mu_1 \sum_{m=-\infty}^{-h} \binom{-m-h}{-n-h} \brac{-w_2}^{n-m} \brac{\wun \otimes S_m}.
\end{align}
\end{subequations}
Of course, these formulae should coincide in some sense.  However, they are only supposed to be valid when $w_1$ or $w_2$ is small, respectively, so we make the substitutions $w_1 = 0$ and $w_2 = -w$ in the first and $w_1 = w$ and $w_2 = 0$ in the second to obtain
\begin{subequations}
\begin{align}
\parNcoproduct{1}{0,-w}{S_n} &= \brac{S_n \otimes \wun} + \mu_1 \sum_{m=-h+1}^{\infty} \binom{n+h-1}{m+h-1} \brac{-w}^{n-m} \brac{\wun \otimes S_m} \\
\parNcoproduct{2}{w,0}{S_n} &= \sum_{m=-h+1}^{\infty} \binom{n+h-1}{m+h-1} w^{n-m} \brac{S_m \otimes \wun} + \mu_1 \brac{\wun \otimes S_n}.
\end{align}
\end{subequations}
We remark that we can set $w_1=0$ or $w_2=0$ in the first or second coproduct formula above because the binomial coefficient $\tbinom{-m-h}{-n-h}$ vanishes if $m>n$.  We would not have been able to make these simplifications if the residue at $z=0$ did not partially cancel the contribution from $z=w_1$ or $z=w_2$.

To summarise, the fusion coproduct formulae (for untwisted representations) may be expressed as
\begin{subequations} \label{eq:UntwCoprods}
\begin{align}
\parNcoproduct{1}{0,-w}{S_n} &= \brac{S_n \otimes \wun} + \mu_1 \sum_{m=-h+1}^n \binom{n+h-1}{m+h-1} \brac{-w}^{n-m} \brac{\wun \otimes S_m} & &\text{(\(n \ge -h+1\)),} \\
\parNcoproduct{1}{0,-w}{S_{-n}} &= \brac{S_{-n} \otimes \wun} + \mu_1 \sum_{m=-h+1}^{\infty} \binom{m+n-1}{m+h-1} \brac{-1}^{-n+h-1} w^{-n-m} \brac{\wun \otimes S_m} & &\text{(\(n \ge h\))}
\intertext{or as}
\parNcoproduct{2}{w,0}{S_n} &= \sum_{m=-h+1}^n \binom{n+h-1}{m+h-1} w^{n-m} \brac{S_m \otimes \wun} + \mu_1 \brac{\wun \otimes S_n} & &\text{(\(n \ge -h+1\)),} \\
\parNcoproduct{2}{w,0}{S_{-n}} &= \sum_{m=-h+1}^{\infty} \binom{m+n-1}{m+h-1} \brac{-1}^{m+h-1} w^{-n-m} \brac{S_m \otimes \wun} + \mu_1 \brac{\wun \otimes S_{-n}} & &\text{(\(n \ge h\)).}
\end{align}
\end{subequations}
These formulae are related by translation:
\begin{equation} \label{eq:Translation}
\parNcoproduct{1}{0,-w}{S_n} = \parNcoproduct{2}{w,0}{\ee^{wL_{-1}} S_n \ee^{-wL_{-1}}}.
\end{equation}
Using $\comm{L_{-1}}{S_n} = \tbrac{-h+1-n} S_{n-1}$, we derive inductively the following formulae:
\begin{equation}
L_{-1}^k S_n = \sum_{j=0}^k \binom{k}{j} \brac{-h+1-n} \cdots \brac{-h+j-n} S_{n-j} L_{-1}^{k-j} \quad \Ra \quad 
\ee^{wL_{-1}} S_n = \sum_{j=0}^{\infty} \binom{-h+j-n}{j}w^j S_{n-j} \ee^{wL_{-1}}.
\end{equation}
Consequently, one has
\begin{equation}
\parNcoproduct{1}{0,-w}{S_{-n}} = \sum_{j=0}^{\infty} \binom{n-h+j}{j}w^j \parNcoproduct{2}{w,0}{S_{-n-j}} = \sum_{m=n}^{\infty} \binom{m-h}{m-n} w^{m-n} \parNcoproduct{2}{w,0}{S_{-m}}
\end{equation}
which leads to the three \emph{master equations} for untwisted fusion:
\begin{subequations} \label{eq:Master}
\begin{align}
\coproduct{S_n} &= \sum_{m=-h+1}^n \binom{n+h-1}{m+h-1}w^{n-m} \brac{S_m \otimes \wun} + \mu_1 \brac{\wun \otimes S_n} & &\text{(\(n \ge -h+1\))} \label{eq:Master1} \\
\coproduct{S_{-n}} &= \sum_{m=-h+1}^{\infty} \binom{m+n-1}{n-h} \brac{-1}^{m+h-1}w^{-n-m} \brac{S_m \otimes \wun} + \mu_1 \brac{\wun \otimes S_{-n}} & &\text{(\(n \ge h\)),} \label{eq:Master2} \\
S_{-n} \otimes \wun &= \sum_{m=n}^{\infty} \binom{m-h}{n-h} w^{m-n} \coproduct{S_{-m}} + \mu_1 \brac{-1}^{-n+h} \sum_{m=-h+1}^{\infty} \binom{m+n-1}{n-h} w^{-n-m}\brac{\wun \otimes S_m} & &\text{(\(n \ge h\)).} \label{eq:Master3}
\end{align}
\end{subequations}
Here, we let $\coproductsymb$ denote $\parNcoproductsymb{2}{w,0}$ for brevity.  In practice, such as when performing the explicit computations reported in this work, we would set $w$ to $1$ to further simplify these formulae.  However, this masks the natural grading of these formulae by conformal weight in the same way that choosing insertion points masks the conformal grading of \opes{}.

We remark that the interpretation of fusion as a quotient of the (vector space) tensor product is captured in \cref{eq:Translation}.  This formula is actually a requirement imposed by the locality of \opes{} on the coproduct formulae.  Thus, we may define the fusion product (as a vector space) of two chiral algebra modules $M$ and $N$ to be the quotient
\begin{equation} \label{eq:DefFusion}
M \fuse N = \frac{M \otimes_{\CC} N}{\left\langle \brac{\parNcoproduct{1}{0,-w}{S_n} - \parNcoproduct{2}{w,0}{\ee^{wL_{-1}} S_n \ee^{-wL_{-1}}}} \brac{M \otimes_{\CC} N} \right\rangle},
\end{equation}
where the ideal is the sum of the images for all chiral modes $S_n$ (and all $w$).  Of course, this is the same as the intersection of the corresponding kernels.  The point is that $M \fuse N$ is defined to be the \emph{largest} quotient of $M \otimes_{\CC} N$ upon which the coproduct actions coincide.  We expect that this can be interpreted as a universality property for fusion as is imposed in \cite{HuaLog07}.

Finally, we remark that the cancellation between the contributions to the integral \eqref{eq:UntwistedCoprodInt} from the residues at $z=0$ and $z=w_1$ or $z=w_2$ is best understood analytically in terms of the regularity of the integrand at infinity.  Explicitly, if we use the \ope{} $\tfunc{S}{z} \tfunc{\psi_1}{w_1}$, then the sum of the contributions is, for fixed $m \ge -h+1$ and $n \le -h$, proportional to the integral
\begin{align}
\sqbrac{\oint_0 + \oint_{w_1}} \brac{z-w_1}^{-m-h} z^{n+h-1} \: \frac{\dd z}{2 \pi \ii} &= -\oint_{\infty} \brac{z-w_1}^{-m-h} z^{n+h-1} \: \frac{\dd z}{2 \pi \ii} \notag \\
&= \oint_0 \brac{1 - w_1 y}^{-m-h} y^{m-n-1} \: \frac{\dd y}{2 \pi \ii} \notag \\
&= 0,
\end{align}
because $m-n-1 \ge -h+1 + h - 1 = 0$.  In the above derivation, we have chosen to instead derive the cancellation of these contributions algebraically, using binomial coefficient identities, because this method readily generalises to the twisted sector \cite{CanFusII15}, whereas the contour analysis becomes rather more subtle there due to the presence of branch cuts.

\subsection{The \NGK{} Fusion Algorithm} \label{app:NGK}

One important insight \cite{NahQua94,GabInd96} into the definition \eqref{eq:DefFusion} of the fusion product is that it admits consistent truncations which are easier to construct explicitly.  For this, we consider subalgebras $\alg{U}$ of the \uea{} of the chiral algebra.  One important example is that generated by all the chiral modes with index not greater than minus their conformal weight:
\begin{equation}
\spsub{\alg{U}} = \left\langle S^{(j)}_n \st n \le -h^{(j)} \right\rangle.
\end{equation}
For the \ns{} algebra, this subalgebra is generated by the $L_n$, with $n \le -2$, and the $G_j$, with $j \le -\frac{3}{2}$.  The second example, actually a family of examples labelled by $d \in \ZZ$, that we shall employ is that spanned by monomials of weight greater than $d$ in the chiral modes:
\begin{equation}
\alg{U}^d = \vspn \set{S^{(j_1)}_{n_1} S^{(j_2)}_{n_2} \cdots S^{(j_r)}_{n_r} \st r \in \ZZ_{\ge 0}, \ n_1 + n_2 + \cdots + n_r < -d}.
\end{equation}
The integer $d$ will be referred to as the depth.

The ability to consistently truncate the fusion product amounts to the following claim \cite{NahQua94,GabInd96} for the chiral algebra modules $M$ and $N$:
\begin{equation} \label{eq:NGKClaim}
\frac{M \fuse N}{\func{\alg{U}^d}{M \fuse N}} \subseteq \frac{M}{\func{\spsub{\alg{U}}}{M}} \otimes_{\CC} \frac{N}{\func{\alg{U}^d}{N}}.
\end{equation}
The first factor on the \rhs{} defines the \emph{special subspace} $\spsub{M}$ of $M$ and the second factor defines the depth $d$ subspace $N^d$ of $N$ (even though both are defined as quotients).  The claim is therefore succinctly expressed as the inclusion $\brac{M \fuse N}^d \subseteq \spsub{M} \otimes_{\CC} N^d$.  We remark that fusion is commutative, so one may swap the roles of $M$ and $N$ if desired.

The proof of the claim \eqref{eq:NGKClaim} amounts to showing that any $v \otimes w$ representing the \lhs{} may be written as a linear combination of elements of the \rhs{} by using the master coproduct formulae \eqref{eq:Master}.  This is demonstrated through the following algorithm which is applied, at each step, to each term $v \otimes w$ of the result of the previous step:
\begin{itemize}
\item If $v \notin \spsub{M}$, so $v = S_n v'$, with $n \leq -h$, then \eqref{eq:Master3}, perhaps followed by \eqref{eq:Master2}, may be used to replace $v \otimes w$ by a linear combination of terms of the form $v' \otimes w'$ and $S_m v' \otimes w$, where $m > -h$.  Repeat until none of the resulting terms $v \otimes w$ have $v = S_n v'$, with $n \leq -h$.  One may need to take into account relations in $M$ to arrive at terms with $v \in \spsub{M}$.
\item If $w \notin N^d$, so $w = U w'$, where $U$ is a monomial in the chiral modes of weight greater than $d$, then combine $\coproduct{U} = 0$, coming from the \lhs{} of \eqref{eq:NGKClaim}, with \eqref{eq:Master1} and \eqref{eq:Master2} to replace $v \otimes w$ by a linear combination of terms of the form $U' v \otimes w'$, where the $U'$ are monomials in the chiral modes of weight strictly smaller than that of $U$.  Repeat until each of the resulting terms have $w \in N^d$, using relations in $N$.
\item Repeat the above two steps as required.  Termination is guaranteed for modules whose weights are bounded below over a large class of chiral algebras, the \ns{} algebra included, because each step requires that the sum of the weights of the factors in each term strictly decreases.
\end{itemize}
We remark that these steps are also used when computing fusion products to a given depth $d$.  An explicit example illustrating this is detailed in \cref{sec:TheExample}.

The first goal in constructing a depth $d$ fusion product $\brac{M \fuse N}^d$ is to determine the subspace of $\spsub{M} \otimes_{\CC} N^d$ with which it may be identified.  This determination proceeds through the identification of \emph{spurious states} which are actually relations in the tensor product space that are derived from relations in $M$ and $N$ (or even in the chiral algebra).  Specifically, one combines $\coproduct{U} = 0$, for monomials of weight greater than $d$, with these relations; a spurious state arises if reducing the result to an element of $\spsub{M} \otimes_{\CC} N^d$ using the above algorithm does not yield zero identically.  We again refer to \cref{sec:TheExample} for examples of this process.  Quotienting by the spurious states then gives the fusion product to depth $d$.

Once the depth $d$ fusion product has been identified, its structure is analysed by computing the action of the chiral modes $S_n$ with $\abs{n} \le d$ (all other chiral modes must act as the zero operator).  For this, one applies the coproduct formulae \eqref{eq:Master1} and \eqref{eq:Master2} to a basis element of the depth $d$ fusion product, reducing the result to an element of $\spsub{M} \otimes_{\CC} N^d$ using the above algorithm, then to an element of $\brac{M \times N}^d$ by imposing the spurious state relations.  By analysing the structures obtained for various (small) values of $d$, one gets highly non-trivial information about the fusion product itself; in favourable cases, the information obtained is sufficient to completely identify the fusion product.  The example of \cref{sec:TheExample} illustrates such a case.

\section{Staggered Modules and Logarithmic Couplings} \label{app:StagMod}

Staggered modules form a particular class of indecomposable modules upon which the Virasoro zero mode $L_0$ acts non-semisimply.  In a sense, they form the simplest class of such modules and they are responsible for the logarithmic structure of most of the best understood \lcfts{}.  The term was introduced for Virasoro modules in \cite{RohRed96}, where some classification results were reported, shortly after the first examples had been exhibited \cite{GabInd96}.  A full classification of staggered modules over the Virasoro algebra was completed in \cite{RidSta09}.  This notion has been recently developed \cite{CreLog13} for other chiral algebras, though only a few results have been proven at this level of generality.  In this appendix, we discuss the situation for the \ns{} superconformal algebra and explain the results that we use in the course of identifying the fusion products reported in \cref{sec:Results}.

One particularly important feature of staggered modules is that they are generally not determined, up to isomorphism, by the structural diagrams commonly used to depict them.  In the simplest examples over the Virasoro algebra, one requires an additional parameter $\beta$ to fix the isomorphism class completely.  This was first recognised in \cite{GabInd96}, but a general invariant definition of $\beta$ does not seem to have appeared until \cite{RidPer07}, where it was christened the \emph{logarithmic coupling}.  Other commonly used nomenclature for $\beta$ includes ``beta-invariant'' \cite{RidSta09} and ``indecomposability parameter'' \cite{VasInd11}.  Here, we define logarithmic couplings for the \ns{} algebra and describe the methods we use to compute these parameters in this paper.

\subsection{Staggered Modules} \label{app:Stag}

A staggered module is normally defined to be an extension of \hwms{} upon which $L_0$ acts non-semisimply.  Given the results reported here, and in \cite{RasCla11,MorKac15}, we believe that it will be necessary to generalise this to extensions of Kac modules.  However, most of the rigorous results \cite{RohRed96,RidSta09} about staggered modules were proven for extensions of \hwms{} over the Virasoro algebra and we expect that these results will lift to the \ns{} algebra without difficulty.  Moreover, the simplest Kac modules are also \hwms{}, so that many of the staggered modules we explicitly analyse will be of \hw{} type.  We will therefore make use of Virasoro results, lifted to the \ns{} algebra, that apply to \hw{} type staggered modules, deferring rigorous proofs of these lifts to a future publication.

Let us therefore define a \emph{\ns{} staggered module} to be an extension of a Kac module by another Kac module upon which $L_0$ acts non-semisimply.  The short exact sequence characterising the staggered module $\Stag{}{}$ is then
\begin{equation} \label{ses:DefStag}
\dses{\Kac{r,s}}{\iota}{\Stag{}{}}{\pi}{\Kac{\rho,\sigma}},
\end{equation}
where the injection $\iota$ and surjection $\pi$ are \ns{} module homomorphisms.  We will customarily label the staggered module with two sets of indices $\Stag{i,j}{k,\ell}$ such that the Kac quotient is $\Kac{i+k,j+\ell}$ and the Kac submodule is $\Kac{i-k,j-\ell}$.  In the above exact sequence, we would therefore write
\begin{equation}
\Stag{}{} = \Stag{\frac{1}{2} \brac{\rho+r}, \frac{1}{2} \brac{\sigma+s}}{\frac{1}{2} \brac{\rho-r}, \frac{1}{2} \brac{\sigma-s}}.
\end{equation}
We remark that this labelling need not completely specify the staggered module up to isomorphism.  When necessary, we shall append the required additional labels in parentheses; for example, $\Stag{i,j}{k,\ell}(\beta)$.  We also mention that the staggered modules $\Stag{i,j}{k,\ell}$ that we have encountered in our fusion computations all have either $k=0$ or $\ell = 0$.  
However, staggered modules with both $k$ and $\ell$ non-zero do exist.

Most of the staggered modules analysed in this paper have the property that both the Kac submodule and quotient appearing in \eqref{ses:DefStag} are \hwms{} (an example where this is not the case is discussed in \cref{sec:Away}).  In this case, we have the following result.
\begin{prop}
If the Kac modules $K_{r,s}$ and $K_{\rho,\sigma}$ in \eqref{ses:DefStag} are \hw{}, then their minimal conformal weights must satisfy $h_{\rho,\sigma} \ge h_{r,s}$ and $h_{\rho,\sigma} - h_{r,s} \in \frac{1}{2} \ZZ$.
\end{prop}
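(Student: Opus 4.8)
The plan is to exploit the additive Jordan--Chevalley decomposition of $L_0$ and to convert the non-semisimplicity hypothesis into a nonzero module map between the two Kac modules. Since $\Kac{r,s}$ and $\Kac{\rho,\sigma}$ have conformal weights bounded below and finite-dimensional weight spaces, the same holds for $\Stag{}{}$, so $L_0$ is locally finite and we may write $L_0 = D + N$, where $D$ is diagonalisable, acting as the scalar $\mu$ on the generalised $\mu$-eigenspace of $L_0$, and $N = L_0 - D$ is nilpotent on each such eigenspace. The first key step is to verify that $N$ is in fact a module endomorphism of $\Stag{}{}$. Each mode $S_k$ (that is, $L_k$ or $G_k$) satisfies $\comm{L_0}{S_k} = -k S_k$, hence maps the generalised $\mu$-eigenspace into the generalised $(\mu - k)$-eigenspace; it follows that $\comm{D}{S_k} = -k S_k$ as well, and subtracting yields $\comm{N}{S_k} = 0$ for every mode. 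Thus $N$ commutes with the entire \ns{} action.

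Next I would locate the image of $N$. On the generalised $\mu$-eigenspace we have $N = L_0 - \mu \id$; since $L_0$ acts semisimply on any Kac module (these being submodules of Fock spaces, on which $L_0$ is diagonalisable), the projection $\func{\pi}{x}$ of any such $x$ is a genuine $L_0$-eigenvector, so $\func{\pi}{N x} = \brac{L_0 - \mu \id} \func{\pi}{x} = 0$. Hence $\im N \subseteq \ker \pi = \iota \brac{\Kac{r,s}}$, and the same semisimplicity shows $N$ annihilates $\iota \brac{\Kac{r,s}}$. Being weight-preserving and a module map, $N$ therefore descends to a module homomorphism $\overline{N} \colon \Kac{\rho,\sigma} \to \Kac{r,s}$ of conformal weight $0$. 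The hypothesis that $L_0$ acts non-semisimply on $\Stag{}{}$ is exactly the statement $N \neq 0$, so $\overline{N} \neq 0$.

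The conclusion is then weight bookkeeping. Let $v$ be the generating \hws{} of $\Kac{\rho,\sigma}$, of conformal weight $h_{\rho,\sigma}$. As $v$ generates $\Kac{\rho,\sigma}$ and $\overline{N} \neq 0$, necessarily $\func{\overline{N}}{v} \neq 0$; and since $\overline{N}$ preserves conformal weight, $\func{\overline{N}}{v}$ is a nonzero vector of $\Kac{r,s}$ of weight $h_{\rho,\sigma}$ (indeed a \sv{}, as it is annihilated by the positive modes). But every weight occurring in the \hwm{} $\Kac{r,s}$ lies in $h_{r,s} + \tfrac{1}{2} \ZZ_{\ge 0}$, because the lowering generators $L_{-n}$ and $G_{-j}$ raise the conformal weight by $n \in \ZZ_{>0}$ and $j \in \tfrac{1}{2} + \ZZ_{\ge 0}$ respectively. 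Hence $h_{\rho,\sigma} \in h_{r,s} + \tfrac{1}{2} \ZZ_{\ge 0}$, which delivers both $h_{\rho,\sigma} \ge h_{r,s}$ and $h_{\rho,\sigma} - h_{r,s} \in \tfrac{1}{2} \ZZ$ simultaneously.

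The main obstacle is the first step: establishing that the semisimple/nilpotent splitting of $L_0$ is globally well defined and, crucially, that $N$ commutes with the whole algebra and not merely with $L_0$. This rests on the finite-dimensionality of the weight spaces together with the commutator identity $\comm{L_0}{S_k} = -k S_k$. Once $N$ is known to be a genuine module map, the remainder is the standard fact that a nonzero homomorphism out of a cyclic \hwm{} is nonzero on its generator, followed by elementary weight counting.
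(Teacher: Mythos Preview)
Your proof is correct. The paper, however, states this proposition without proof --- there is no \texttt{proof} environment following it, only a brief remark that the condition is necessary for the existence of $\Stag{}{}$. So there is no ``paper's approach'' to compare against directly.

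That said, your argument is closely aligned with the spirit of the surrounding discussion. The paper's remark following \cref{prop:StagAnn} observes that when $\Kac{\rho,\sigma}$ is \hw{} with \hws{} $\pi y$, the element $w = \brac{L_0 - h_{\rho,\sigma}} y$ must be nonzero (else there would be no Jordan blocks) and singular in $\func{\iota}{\Kac{r,s}}$. Your construction of the nilpotent part $N$ as a module endomorphism is precisely what makes this observation rigorous and globalises it: the induced map $\overline{N} \colon \Kac{\rho,\sigma} \to \Kac{r,s}$ is exactly the map sending $\pi y \mapsto w$, and cyclicity of $\Kac{\rho,\sigma}$ forces $\func{\overline{N}}{v} \neq 0$. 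The weight-counting conclusion is then immediate. Your approach has the virtue of packaging the non-semisimplicity cleanly as a single nonzero homomorphism between the two Kac modules, which is arguably the conceptually cleanest way to extract the weight inequality.
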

If this condition on the minimal conformal weights is not met, then there is no staggered module $\Stag{}{}$ making \eqref{ses:DefStag} exact.  The requirement that the Kac modules be \hw{} is necessary as the example discussed in \cref{sec:Away} shows.  In contrast, the following results hold for general staggered modules.
\begin{prop}
The Jordan blocks of $L_0$, acting on $\Stag{}{}$, have rank at most $2$.
\end{prop}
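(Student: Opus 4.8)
The plan is to exploit the fact that $L_0$ acts semisimply on every Kac module, and then run the standard ``two semisimple layers'' argument on the defining extension \eqref{ses:DefStag}. First I would record the semisimplicity: each Kac module $\Kac{r,s}$ is, by definition, a submodule of a Fock space $\Fock{r,s}$, and the latter is a weight module in the sense that it decomposes as the direct sum of its finite-dimensional conformal-weight subspaces, on each of which $L_0$ acts as a scalar. Any submodule inherits this grading, so $L_0$ is diagonalisable on $\Kac{r,s}$. Applied to \eqref{ses:DefStag}, this shows that $L_0$ acts semisimply both on the submodule $\im \iota \cong \Kac{r,s}$ and on the quotient $\Kac{\rho,\sigma}$.

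The second step is the two-layer argument itself. Let $v \in \Stag{}{}$ be a generalised eigenvector of $L_0$ with eigenvalue $\lambda$, so that $(L_0 - \lambda)^N v = 0$ for some $N \ge 1$. Since $L_0$ acts semisimply on the quotient, $\pi(v)$ is a genuine $L_0$-eigenvector, whence $\pi\tbrac{(L_0 - \lambda) v} = (L_0 - \lambda) \pi(v) = 0$ and therefore $(L_0 - \lambda) v \in \ker \pi = \im \iota$. Now $(L_0 - \lambda) v$ is itself a generalised eigenvector of $L_0$ with eigenvalue $\lambda$ (it is annihilated by $(L_0 - \lambda)^{N-1}$), but it lies in the submodule $\im \iota$ on which $L_0$ is semisimple; in a module with semisimple $L_0$-action, a generalised eigenvector is an honest eigenvector, so $(L_0 - \lambda)\tbrac{(L_0 - \lambda) v} = 0$. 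Thus $(L_0 - \lambda)^2 v = 0$ for every generalised eigenvector $v$, which is exactly the assertion that the Jordan blocks of $L_0$ on $\Stag{}{}$ have rank at most $2$.

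The only point that needs care is the first step --- that Kac modules carry a semisimple $L_0$-action --- and this is immediate from their realisation as submodules of Fock spaces, requiring none of the more delicate highest-weight structure theory (so, unlike the preceding proposition, no highest-weight hypothesis is needed here). The graded argument is moreover insensitive to the $\ZZ_2$-grading of the superalgebra, since $L_0$ is an even element, so no sign subtleties intervene. I would therefore expect the complete proof to occupy only a few lines once the semisimplicity of $L_0$ on Kac modules has been recorded.
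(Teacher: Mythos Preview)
Your argument is correct and is essentially the paper's own proof: both use that $\pi((L_0-\lambda)v)=0$ to place $(L_0-\lambda)v$ in $\im\iota$, then invoke the semisimplicity of $L_0$ on the Kac submodule to conclude $(L_0-\lambda)^2v=0$. Your version is slightly more explicit in justifying that semisimplicity via the Fock-space realisation, which the paper leaves implicit.
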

\begin{proof}
Let $v$ belong to a Jordan block for $L_0$ where the (generalised) eigenvalue is $h$.  Then, $\brac{L_0 - h} v$ need not be zero, but $\pi \brac{L_0 - h} v = \brac{L_0 - h} \pi v = 0$, since $\pi v$ belongs to the Kac module $\Kac{\rho,\sigma}$.  By exactness, $\brac{L_0 - h} v = \iota w$ for some $w \in \Kac{r,s}$, hence $\brac{L_0 - h}^2 v = \brac{L_0 - h} \iota w = \iota \brac{L_0 - h} w = 0$.
\end{proof}
It follows from this proposition, and the definition, that the action of $L_0$ on a staggered module always possesses Jordan blocks of rank $2$.
\begin{prop} \label{prop:StagAnn}
Let $w,y \in \Stag{}{}$ be elements of a rank $2$ Jordan block for $L_0$ satisfying $\brac{L_0 - h} y = w$.  If $\pi y$ is annihilated by some $U$ in the \uea{} of the \ns{} algebra, then $Uw = 0$.  In particular, if $\pi y \in \Kac{\rho,\sigma}$ is singular of conformal weight $h$, then $w$ is singular or zero in $\iota (\Kac{r,s})$.
\end{prop}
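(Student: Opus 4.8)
The plan is to leverage the single structural feature that separates $\Stag{}{}$ from its Kac constituents: $L_0$ acts \emph{semisimply} on every Kac module, since these are submodules of Fock spaces on which $L_0$ is the conformal grading operator. The argument is then simply a matter of tracking where a generalised eigenvector of $L_0$ can live.

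First I would record that $\pi y$ is a genuine $L_0$-eigenvector of weight $h$. Since $y$ is a generalised eigenvector with $\brac{L_0 - h} y = w$ and $\brac{L_0 - h} w = 0$, its image $\pi y \in \Kac{\rho,\sigma}$ is a generalised eigenvector; but $L_0$ is semisimple on $\Kac{\rho,\sigma}$, so $\pi y$ is a genuine eigenvector and $L_0 \pi y = h \pi y$. In particular $\pi w = \pi \brac{L_0 - h} y = \brac{L_0 - h} \pi y = 0$, whence $w \in \ker \pi = \im \iota$. This also lets me reduce to the case where $U$ is homogeneous for the conformal grading, say $\comm{L_0}{U} = n U$: decomposing $U$ into homogeneous pieces $U_n$, the vectors $U_n \pi y$ carry the distinct weights $h + n$, so $U \pi y = 0$ forces $U_n \pi y = 0$ for each $n$, and it suffices to prove $U_n w = 0$ separately.

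Next I would collect two containments and two relations. As $\pi$ is a homomorphism, $\pi\tbrac{U y} = U \pi y = 0$, so $U y \in \ker \pi = \im \iota$; and $U w \in \im \iota$ because $\im \iota$ is a submodule containing $w$. Using $\comm{L_0}{U} = n U$, $L_0 y = h y + w$ and $L_0 w = h w$, a direct computation yields
\begin{equation}
\brac{L_0 - h - n} U y = U w, \qquad \brac{L_0 - h - n} U w = 0.
\end{equation}
Setting $A = L_0 - h - n$ on $\im \iota \cong \Kac{r,s}$, the key step is that $A$ is diagonalisable there (its eigenspaces are the conformal weight spaces), so $\ker A \cap \im A = 0$. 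The first relation puts $U w \in \im A$ and the second puts $U w \in \ker A$; hence $U w = 0$.

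The only real obstacle is conceptual: everything hinges on $L_0$ acting semisimply on the Kac submodule (and quotient), which is precisely what prevents the rank-$2$ Jordan pair $\tbrac{Uy, Uw}$ from surviving inside $\im \iota$; once this is granted the computation is immediate. Finally, for the stated special case, if $\pi y$ is singular of weight $h$ then it is killed by every positive mode $L_m$ and $G_k$ $(m,k > 0)$; applying the result just proved with $U$ running over these modes shows that $w$ too is annihilated by all positive modes. Since $w$ is an $L_0$-eigenvector lying in $\im \iota$, it is therefore either zero or singular in $\iota\tbrac{\Kac{r,s}}$.
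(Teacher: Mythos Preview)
Your proof is correct and follows essentially the same route as the paper's: reduce to homogeneous $U$, use $\pi(Uy)=U\pi y=0$ to place $Uy$ in $\im\iota$, and then exploit the semisimplicity of $L_0$ on the Kac submodule. The only cosmetic difference is that the paper observes directly that $Uy\in\im\iota$ is an $L_0$-eigenvector, so $A(Uy)=0$ and hence $Uw=0$ in one line, whereas you package the same fact as $\ker A\cap\im A=0$; these are equivalent, and your treatment of the ``in particular'' clause matches the paper's as well.
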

\begin{proof}
We may assume, without loss of generality, that $U$ is homogeneous, meaning that $\comm{L_0}{U} = -n U$ for some $n \in \ZZ$.  Since $\pi U y = U \pi y = 0$, we have $Uy \in \iota (\Kac{r,s})$ by exactness, hence $Uy$ is an eigenvector of $L_0$ with eigenvalue $h-n$.  Thus, $Uw = U \brac{L_0 - h} y = \brac{L_0 - h + n} Uy = 0$.  The last statement now follows by combining this result for $U = L_0 - h$ with that for $U$ a positive mode.
\end{proof}
Note that if $\Kac{\rho,\sigma}$ is \hw{} with \hws{} $\pi y$ of conformal weight $h$, then $w = \brac{L_0 - h} y$ cannot be $0$ if there are to be any Jordan blocks at all.  It follows that, in this case, $w$ is singular.

\subsection{Logarithmic Couplings} \label{app:LogCoup}

In this section, we assume that the Kac submodule and quotient of each staggered module is \hw{}.  With this restriction, we can follow \cite{RidSta09,CreLog13} in discussing parametrisations of the isomorphism classes of the staggered modules \eqref{ses:DefStag} with $\Kac{r,s}$ and $\Kac{\rho,\sigma}$ fixed.  The parametrisations for general staggered modules are beyond the scope of this paper.

When the \hwss{} of the (\hw{}) modules $\Kac{r,s}$ and $\Kac{\rho,\sigma}$ have equal conformal weight, $h_{r,s} = h_{\rho,\sigma}$, then this sequence determines the staggered module up to isomorphism (assuming that it exists).  This follows easily from the fact that such staggered modules are quotients \cite[Cor.~4.7]{RidSta09} of universal ``Verma-like'' staggered modules.  However, the exact sequence typically does not determine the isomorphism class uniquely when $h_{\rho,\sigma} > h_{r,s}$.  This seems to have been first recognised in the Virasoro examples constructed in \cite{GabInd96}, where an additional parameter $\logcoup{}{}$ was introduced to specify the module structure.  The claim that this parameter determines the isomorphism class was subsequently demonstrated for a class of Verma-like Virasoro staggered modules in \cite{RohRed96}.  Extending this work to general Virasoro staggered modules required a general invariant definition of $\logcoup{}{}$ \cite{RidPer07} and was completed in \cite{RidSta09}.

The staggered modules typically encountered in fusion computations have the following structure:  Let $x$ denote the \hws{} of the submodule $\func{\iota}{\Kac{r,s}}$ and let $w = Ux$ denote its singular descendant of conformal weight $h_{\rho,\sigma}$ (unique up to rescaling).  The existence of $w$ is guaranteed by \cref{prop:StagAnn}, so we may choose $y$ such that $\brac{L_0 - h_{\rho,\sigma}} y = w$.  It follows that $U^{\dag} y$ must be proportional to $x$ and we define \cite{RidPer07} this constant to be the \emph{logarithmic coupling} $\logcoup{}{}$.  The choice of $y$ is generally only unique up to adding elements of $\func{\iota}{\Kac{r,s}}$ (with conformal weight $h_{\rho,\sigma}$); however, these are annihilated by $U^{\dag}$ as $Ux$ is singular.  The logarithmic coupling is therefore independent of such choices once we fix a normalisation for the \sv{} $w$.  For the \ns{} algebra, it is not hard to show that the coefficient of the monomial involving only $G_{-1/2}$ is non-zero \cite{AstStr97}, hence we let
\begin{equation} \label{eq:LogCoupNorm}
w = \brac{G_{-1/2}^{h_{\rho,\sigma} - h_{r,s}} + \cdots} x.
\end{equation}
With the (antilinear) adjoint generated by $L_n^{\dag} = L_{-n}$ and $G_j^{\dag} = G_{-j}$, we see that renormalising $w$ by a factor of $a$ leads to a renormalisation of $\logcoup{}{}$ by a factor of $\abs{a}^2$.  We will often denote the logarithmic coupling of a staggered module $\Stag{i,j}{k,\ell}$, obtained from fusion calculations, by $\logcoup{i,j}{k,\ell}$ for convenience.

Logarithmic couplings are therefore fundamental representation-theoretic quantities.  Their physical interest in \lcft{} lies in the fact that they also appear in the coefficients of certain \opes{} and correlation functions, often accompanied by the factors with logarithmic singularities.  One can compute the value of any logarithmic coupling by explicitly constructing the staggered module to sufficient depth as in the \NGK{} fusion algorithm.  However, this is, computationally, very intensive.  An alternative method proposed in \cite{RidLog07}, but typically limited to staggered modules involving braid type \hwms{}, is to consider staggered modules for which the quotient $\Kac{\rho,\sigma}$ is replaced by its Verma cover.  The desired staggered module may then be realised as a quotient of the Verma-like one if the latter possesses a \sv{} of the appropriate conformal weight.  Checking explicitly for the existence of such a \sv{} fixes $\logcoup{}{}$ (in the braid type case); see \cite{RidSta09} for the proofs.

To the best of our knowledge, the most efficient means of computing the logarithmic coupling in a fusion product is the field-theoretic, though somewhat heuristic, method outlined in \cite{VasInd11} (noting the important clarifications described in \cite[App.~D]{GaiLat13}).  This builds on earlier work \cite{GurCon04} addressing the so-called ``$c \ra 0$ catastrophe'' that is reviewed in \cite{CarLog13,GurLog13}.  The computation for the staggered module appearing in \eqref{ses:DefStag} realises its logarithmic coupling $\logcoup{}{}$ as a limit --- the parameter $t$ is perturbed away from the desired value $\func{t}{0}$, hence the perturbed central charge $c$ and Kac weights $h_{r,s}$ take the form
\begin{equation} \label{eq:Perturb}
\func{t}{\eps} = \func{t}{0} + \eps, \quad 
\func{c}{\eps} = \func{c}{0} - 3 \brac{1 - \frac{1}{\func{t}{0}^2}} \eps + \cdots, \quad 
\func{h_{r,s}}{\eps} = \func{h_{r,s}}{0} - \brac{\frac{r^2-1}{8 \func{t}{0}^2} - \frac{s^2-1}{8}} \eps + \cdots.
\end{equation}
Let $\func{x}{\eps}$ denote a \hws{} of central charge $\func{c}{\eps}$ and conformal weight $\func{h_{r,s}}{\eps}$.  We define $U$ such that $U \func{x}{0}$ is the singular descendant of conformal weight $\func{h_{u,v}}{0}$, normalised as in \eqref{eq:LogCoupNorm}, then let $\func{w}{\eps} = U \func{x}{\eps}$.  Note that $U$ does not depend on $\eps$.  The logarithmic coupling is then given by \cite{VasInd11}
\begin{equation} \label{eq:LogCoupForm}
\logcoup{}{} = -\lim_{\eps \ra 0} \frac{\inner{\func{w}{\eps}}{\func{w}{\eps}}}{\func{h_{u,v}}{\eps} - \func{h_{r,s}}{\eps} - \brac{\func{h_{u,v}}{0} - \func{h_{r,s}}{0}}} = \frac{8 \func{t}{0}^2}{u^2-r^2 - \brac{v^2-s^2} \func{t}{0}^2} \lim_{\eps \ra 0} \frac{\inner{\func{x}{\eps}}{U^{\dag} U \func{x}{\eps}}}{\eps},
\end{equation}
where $\func{x}{\eps}$ is normalised so that $\inner{\func{x}{\eps}}{\func{x}{\eps}} = 1$.  We provide an example illustrating the use of this formula in \cref{sec:TheExample}.

\section{Further explicit fusion computations} \label{app:Results}

In this appendix, we list a selection of the fusion products that we have computed by combining the information provided by the Verlinde formula with explicit \NGK{} calculations and staggered module theory.  As it can be difficult to verify explicitly that a sum is direct (the depths required can be very large), we introduce a symbol ``$\qplus$'' for these fusion rules to indicate that the sum indicated may, or may not, be direct.  We do not indicate parity --- see \cref{sec:Results} for this information.  All logarithmic couplings have been independently verified using \eqref{eq:LogCoupForm}.

\subsection*{$\bm{(p,p')=(1,3)}$ ($\bm{c=-\frac{5}{2}}$)}

\begin{equation}
\begin{gathered}
\begin{aligned}
  \Kac{1,3}\fuse \Kac{1,3} &= \Stag{1,3}{0,2}\oplus \Kac{1,3}, &
  \Kac{3,1}\fuse \Kac{3,1} &= \Kac{1,1} \oplus \Kac{3,1} \oplus \Kac{5,1}, \\ 
  \Kac{1,3}\fuse \Kac{1,5} &= \Kac{1,3}\oplus \Stag{1,6}{0,1}(-2), &
  \Kac{3,1}\fuse \Kac{5,1} &= \Kac{3,1} \oplus \Kac{5,1} \oplus \Kac{7,1}, \\
  \Kac{1,3}\fuse \Kac{1,7} &= \Stag{1,6}{0,1}(-2) \oplus \Kac{1,9}, &
  \Kac{3,1}\fuse \Kac{7,1} &= \Kac{5,1}\oplus \Kac{7,1} \oplus \Kac{9,1}, \\
  \Kac{1,3}\fuse \Kac{1,9} &= \Stag{1,9}{0,2}(-64/9) \oplus \Kac{1,9}, & 
  \Kac{2,2}\fuse \Kac{2,2} &=  \Kac{1,1} \oplus \Kac{1,3} \oplus \Kac{3,1} \oplus \Kac{3,3}, \\
  \Kac{1,3}\fuse \Kac{1,11} &= \Stag{1,12}{0,1}(-64) \oplus \Kac{1,9}, &
  \Kac{2,2}\fuse \Kac{4,2} &= \Kac{3,1} \oplus \Kac{3,3} \oplus \Kac{5,1} \oplus \Kac{5,3},
\end{aligned}
\\
  \Kac{1,5}\fuse \Kac{1,5} = \mathcal{R}_{1,3}^{0,2}\oplus \Kac{1,3} \oplus \Kac{1,7} \oplus \Kac{1,9}.
\end{gathered}
\end{equation}

\medskip

\subsection*{$\bm{(p,p')=(1,5)}$ ($\bm{c=-\frac{81}{10}}$)}

\begin{equation}   
\begin{aligned}
  \Kac{1,3}\fuse \Kac{1,3} &= \Kac{1,1}\oplus \Kac{1,3}\oplus \Kac{1,5}, &
  \Kac{1,5}\fuse \Kac{1,5} &= \Stag{1,5}{0,4} \oplus \Stag{1,5}{0,2} \oplus \Kac{1,5}, \\
  \Kac{1,3}\fuse \Kac{1,5} &= \Stag{1,5}{0,2} \oplus \Kac{1,5}, &
  \Kac{1,5}\fuse \Kac{1,9} &= \Stag{1,10}{0,3}(16/25) \oplus \Kac{1,5} \oplus \Stag{1,10}{0,1}(-4) , \\
  \Kac{1,3}\fuse \Kac{1,7} &= \Kac{1,5}\oplus \Kac{1,7}\oplus \Kac{1,9}, &
  \Kac{1,3}\fuse \Kac{1,9} &= \Kac{1,7} \oplus \Stag{1,10}{0,1}(-4), \\
  \Kac{1,3}\fuse \Kac{1,15} &= \Stag{1,15}{0,2}(-1152/25) \oplus \Kac{1,15}, &
  \Kac{1,3}\fuse \Kac{1,19} &= \Stag{1,20}{0,1}(-336) \oplus \Kac{1,17}.
\end{aligned}
\end{equation}

\medskip

\subsection*{$\bm{(p,p')=(2,4)}$ ($\bm{c=0}$)}

\begin{equation}
\begin{gathered}
\begin{aligned}
  \Kac{1,3}\fuse \Kac{1,3} &= \Kac{1,1}\oplus \Stag{1,4}{0,1}(-1), &
  \Kac{3,1}\fuse \Kac{3,1} &= \Stag{2,1}{1,0}(3/8) \qplus \Kac{5,1}, \\
  \Kac{1,3}\fuse \Kac{1,5} &= \Stag{1,4}{0,1}(-1)\oplus \Kac{1,7}, &
  \Kac{3,1}\fuse \Kac{5,1} &= \Stag{4,1}{1,0}(14175/32)\qplus \Kac{7,1} \\
  \Kac{1,3}\fuse \Kac{1,7} &= \Kac{1,5} \oplus \Stag{1,8}{0,1}(-15), &
  \Kac{3,1}\fuse \Kac{2,2} &= \Kac{2,2} \oplus \Kac{4,2}, \\
  \Kac{1,3}\fuse \Kac{1,11} &= \Kac{1,9} \oplus \Stag{1,12}{0,1}(-3780), &
  \Kac{3,1}\fuse \Kac{2,4} &= \Kac{2,4} \oplus \Kac{4,4}, \\
  \Kac{1,3}\fuse \Kac{2,2} &= \Kac{2,2} \oplus \Kac{2,4}, &
  \Kac{1,5}\fuse \Kac{1,5} &= \Stag{1,4}{0,3}(-1/4) \oplus \Stag{1,4}{0,1}(-1) \oplus \Kac{1,9}, \\
  \Kac{1,3}\fuse \Kac{2,4} &= \Stag{2,4}{0,2} \oplus \Kac{2,4}, & 
  \Kac{1,3}\fuse \Kac{2,6} &= \Kac{2,4} \oplus \Kac{2,6} \oplus \Kac{2,8},
\end{aligned}
\\
  \Kac{2,2}\fuse \Kac{2,2} = \Stag{2,1}{1,0}(3/8) \oplus \Stag{2,3}{1,0}(1/2).
\end{gathered}
\end{equation}

\medskip

\subsection*{$\bm{(p,p')=(2,8)}$ ($\bm{c=-\frac{21}{4}}$)}

\begin{equation}
\begin{gathered}
\begin{aligned}
  \Kac{1,3}\fuse \Kac{1,3} &= \Kac{1,1}\oplus \Kac{1,3}\oplus \Kac{1,5}, &
  \Kac{1,3}\fuse \Kac{2,2} &= \Kac{2,2} \oplus \Kac{2,4}, \\
  \Kac{1,3}\fuse \Kac{1,5} &= \Kac{1,3}\oplus \Kac{1,5}\oplus \Kac{1,7}, &
  \Kac{1,3}\fuse \Kac{2,4} &= \Kac{2,2} \oplus \Kac{2,4} \oplus \Kac{2,6}, \\
  \Kac{1,3}\fuse \Kac{1,7} &= \Stag{1,8}{0,1}(-3)\oplus \Kac{1,5}, &
  \Kac{1,3}\fuse \Kac{2,6} &= \Kac{2,4} \oplus \Kac{2,6} \oplus \Kac{2,8}, \\
  \Kac{1,3}\fuse \Kac{1,15} &=\Kac{1,13} \oplus \Stag{1,16}{0,1}(-165), &
  \Kac{1,3}\fuse \Kac{2,8} &= \Stag{2,8}{0,2} \oplus \Kac{2,8}, \\
  \Kac{1,3}\fuse \Kac{1,23} &= \Kac{1,21} \oplus \Stag{1,24}{0,1}(-163020), &
  \Kac{2,2}\fuse \Kac{2,4} &= \Stag{2,3}{1,0}(-105/256) \oplus \Stag{2,5}{1,0}(-15/64), \\
\end{aligned}
\\
  \Kac{2,2}\fuse \Kac{3,3} = \Kac{2,2}\oplus \Kac{2,4} \oplus \Kac{4,2}\oplus \Kac{4,4}.
\end{gathered}
\end{equation}

\medskip

\subsection*{$\bm{(p,p')=(3,5)}$ ($\bm{c=\frac{7}{10}}$)}

\begin{equation}
\begin{aligned}
  \Kac{1,3}\fuse \Kac{1,3} &= \Kac{1,1}\oplus \Kac{1,3}\oplus \Kac{1,5}, &
  \Kac{3,1}\fuse \Kac{2,2} &= \Stag{3,2}{1,0}(64/125), \\
  \Kac{1,3}\fuse \Kac{1,5} &= \Stag{1,5}{0,2}(-256/675)\oplus \Kac{1,5}, &
  \Kac{3,1}\fuse \Kac{4,2} &= \Stag{3,2}{1,0}(64/125) \oplus \Kac{6,2}, \\
  \Kac{1,3}\fuse \Kac{1,7} &= \Kac{1,5}\oplus \Kac{1,7} \oplus K_{1,9}, &
  \Kac{3,1}\fuse \Kac{3,3} &=  \Stag{3,3}{2,0}(256/1125)\oplus \Kac{3,3}, \\
  \Kac{1,3}\fuse \Kac{1,9} &= \Kac{1,7} \oplus \Stag{1,10}{0,1}(-11264/9), &
  \Kac{3,1}\fuse \Kac{2,4} &= \Stag{3,4}{1,0}(2/5), \\
  \Kac{1,3}\fuse \Kac{2,2} &= \Kac{2,2} \oplus \Kac{2,4}, &
  \Kac{3,1}\fuse \Kac{3,5} &= \Stag{3,5}{2,0} \oplus \Kac{3,5}, \\
  \Kac{1,3}\fuse \Kac{2,4} &= \Kac{2,2} \oplus \Stag{2,5}{0,1}(-2/3), &
  \Kac{2,2}\fuse \Kac{2,2} &= \Kac{1,1} \oplus \Kac{1,3} \oplus \Kac{3,1} \oplus \Kac{3,3}, \\
  \Kac{1,3}\fuse \Kac{2,6} &= \Stag{2,5}{0,1}(-2/3) \oplus \Kac{2,8}, &
  \Kac{2,2}\fuse \Kac{2,4} &= \Kac{1,3} \oplus \Kac{1,5} \oplus \Kac{3,3} \oplus \Kac{3,5}, \\
  \Kac{1,3}\fuse \Kac{3,3} &= \Kac{3,1} \oplus \Kac{3,3} \oplus \Kac{3,5}, &
  \Kac{2,2}\fuse \Kac{3,3} &=  \Stag{3,2}{1,0}(64/125) \oplus \Stag{3,4}{1,0}(2/5) , \\
  \Kac{1,3}\fuse \Kac{3,5} &= \Stag{3,5}{0,2} \oplus \Kac{3,5}, &
  \Kac{2,2}\fuse \Kac{4,2} &= \Kac{3,1} \oplus \Kac{3,3} \oplus \Kac{5,1} \oplus \Kac{5,3}.  
\end{aligned}
\end{equation}

\medskip

\raggedright
\singlespacing

\end{document}